\title{Communication Complexity of Inner Product in\\Symmetric Normed Spaces}
\author{Alexandr Andoni\thanks{Research supported in part by NSF grants (CCF-1740833, CCF-2008733).} \\
Columbia University \and
Jarosław Błasiok\thanks{This research was supported by a Junior Fellowship from the Simons Society of Fellows.}\\
Columbia University\and
Arnold Filtser\thanks{This research was supported by the Israel Science Foundation (grant No. 1042/22).}\\
Bar-Ilan University}
\begin{document}
\maketitle
\begin{abstract}
    We introduce and study the communication complexity of computing the inner product of two vectors, where the input is restricted w.r.t. a norm $N$ on the space $\bbR^n$.
    Here, Alice and Bob hold two vectors $v,u$ such that $\|v\|_N\le 1$ and $\|u\|_{N^*}\le 1$, where $N^*$ is the dual norm. The goal is to compute their inner product $\inprod{v,u}$ up to an $\eps$ additive term. The problem is denoted by $\IP_N$, and generalizes important previously studied problems, such as:
    (1) Computing the expectation $\mathbb{E}_{x\sim\mathcal{D}}[f(x)]$ when Alice holds $\mathcal{D}$ and Bob holds $f$ is equivalent to $\IP_{\ell_1}$. 
    (2) Computing $v^TAv$ where Alice has a symmetric matrix with bounded operator norm (denoted $\cal{S}_\infty$) and Bob has a vector $v$ where $\|v\|_2=1$. This problem is complete for quantum communication complexity and is equivalent to $\IP_{\cal{S}_\infty}$.
    
     We systematically study  $\IP_N$, showing the following results, near tight in most cases:
    \begin{enumerate}
    \item For any symmetric norm $N$, given $\|v\|_N\le 1$ and $\|u\|_{N^*}\le 1$ there is a randomized protocol using $\tilde{\Oh}(\varepsilon^{-6} \log n)$ bits of communication that returns a value in $\inprod{u,v}\pm\epsilon$ with probability $\frac23$ --- we will denote this by  $\cR_{\eps,1/3}(\IP_{N}) \leq \tilde{\Oh}(\varepsilon^{-6} \log n)$. In a special case where $N = \ell_p$ and $N^* = \ell_q$ for $p^{-1} + q^{-1} = 1$, we obtain an improved bound $\cR_{\varepsilon,1/3}(\IP_{\ell_p}) \leq \Oh(\varepsilon^{-2} \log n)$, nearly matching the lower bound $\cR_{\varepsilon, 1/3}(\IP_{\ell_p}) \geq \Omega(\min(n, \varepsilon^{-2}))$.
    \item One way communication complexity $\coR_{\varepsilon,\delta}(\IP_{\ell_p})\leq\Oh(\varepsilon^{-\max(2,p)}\cdot \log\frac n\eps)$,
    and a nearly matching lower bound $\coR_{\varepsilon, 1/3}(\IP_{\ell_p}) \geq \Omega(\varepsilon^{-\max(2,p)})$ for $\varepsilon^{-\max(2,p)} \ll n$. 
    \item One way communication complexity $\coR_{\varepsilon,\delta}(N)$ for a symmetric norm $N$ is governed by the distortion of the embedding $\ell_\infty^k$ into $N$. Specifically, while a small distortion embedding easily implies a lower bound $\Omega(k)$, we show that, conversely, non-existence of such an embedding implies protocol with communication $k^{\Oh(\log \log k)} \log^2 n$.
    \item For arbitrary origin symmetric convex polytope $P$, we show $\cR_{\eps,1/3}(\IP_{N})
    \le\Oh(\varepsilon^{-2} \log \xc(P))$, where $N$ is the unique norm for which $P$ is a unit ball, and $\xc(P)$ is the extension complexity of $P$ (i.e. the smallest number of inequalities describing some polytope $P'$ s.t. $P$ is projection of $P'$).
    \end{enumerate}
\end{abstract}
\section{Introduction}
We introduce and study a class of communication problems parameterized by a norm $N$ on $\bbR^n$ --- we will denote those problems as $\IP_N$ (for \emph{inner product}). Here two parties (traditionally called Alice and Bob) are given two vectors in $\bbR^n$, respectively $v$ and $w$, with guarantees that $\|v\|_N  \leq 1$ and $\|w\|_{N^*} \leq 1$, where $N^*$ is the \emph{dual norm}  defined as $\|w\|_{N^*} := \sup_{\|v\|_N \leq 1} \inprod{v, w}$. Those players wish to use small amount of communication to compute $\inprod{v,w}$. We will focus exclusively on randomized, approximate protocols for $\IP_N$ --- for given $\varepsilon, \delta$ we denote by $\cR_{\varepsilon, \delta}(\IP_N)$ the smallest communication cost of a private-coin protocol computing some function $f(x,y)$ such that $|f(u,v) - \inprod{v, w}| \leq \varepsilon$ with probability $1-\delta$. Note that, by the definition of the dual norm we always have $|\inprod{v,w}| \leq \|v\|_N \|w\|_{N^*} \leq 1$, making this a natural normalization for the approximate version of the problem.

As it turns out this setting is not only very natural, but also quite powerful --- it is a common generalization of few seemingly different problems that have been studied in communication complexity on their own right, and are of tremendous importance in the field. We will discuss some of these connections in the next few paragraps. To start with a simple, yet consequential example, consider a finite universe $\mathcal{U}$, and a scenario where Alice has an arbitrary probability distribution $\mathcal{D}$ over $\mathcal{U}$ and Bob has a bounded function $f : \mathcal{U} \to [-1, 1]$, and they wish to compute $\mathbb{E}_{x \sim \mathcal{D}} f(x) \pm \varepsilon$. The problem they are solving is equivalent to $\IP_{\ell_1}$. As it turns out, there is an easy protocol for this problem using $\Oh(\frac{\log |\mathcal{U}|}{\varepsilon^2})$ communication --- it is enough for Alice to sample $\frac{1}{\varepsilon^2}$ elements from the distribution $\mathcal{D}$, and send their names to Bob.\atodo{Do we have a matching LB?}

\paragraph{Gap Hamming Distance problem and $\IP_{\ell_2}$.}
The Gap Hamming Distance problem has been introduced by Indyk and Woodruff in order to study the space complexity of the streaming \textsc{Distinct-Elements} problem \cite{IW03}. Here, Alice and Bob have two vectors $x, y\in \{\pm 1\}^n$, and they wish to distinguish whether $\inprod{x,y} \leq -\varepsilon n$ or $\inprod{x,y} \geq +\varepsilon n$.

After its introduction, this problem turned out to be extremely convenient as a reduction target --- communication complexity lower bound for the Gap Hamming Distance implies space lower bounds for numerous streaming and data structures problems --- and often optimal ones~(we refer to the monograph~\cite{Roughgarden16} for a detailed exposition of those consequences). As such, understanding the communication complexity of this problem has become very important --- a sequence of  lower bounds for followed \cite{Woodruff04, JKS08,Woodruff09,BC09, BCRVW10}, culminating in the optimal $\Omega(\min(\varepsilon^{-2},n))$ lower bound in the public-coin model by Chakrabarti and Regev~\cite{CR11} --- subsequently simplified and expanded upon~\cite{Sherestov12,Vidick11}.

Quite clearly, the Gap-Hamming-Distance problem is in fact a special case of the $\IP_{\ell_2}$ problem --- in particular the $\Omega(\varepsilon^{-2})$ lower bound for Gap Hamming Distance readily implies that $\cR_{\varepsilon, 1/3}(\IP_{\ell_2}) \geq \Omega(\varepsilon^{-2})$.

From the upper-bound perspective, the $\Oh(\varepsilon^{-2})$ protocol in the public-coin model is very simple for the Gap Hamming Distance, implying $\Oh(\varepsilon^{-2} + \log n)$ protocol in the private coin model, by Newmans theorem~\cite{Newman91} --- it is enough for both parties to uniformly subsample $\Oh(\varepsilon^{-2})$ coordinated and evaluate the inner product on this subset of coordinates. That was understood throughout the work on the Gap Hamming Distance.

In general, if $x,y$ are not restricted to the binary vectors, but only are restricted to be bounded in $\ell_2$ norm, the ideas of Johnson-Lindenstrauss random projections~\cite{vempala05} can provide a protocol with communication complexity $\Oh(\frac{\log n}{\varepsilon^2})$. Specifically, when $\|u\|_2 = \|v\|_2 = 1$ both Alice and Bob will use a common random projection on a subspace of dimension $\Oh(\frac{1}{\varepsilon^2})$ --- the inner product between their two vectors is preserved up to an additive error $\varepsilon$. In fact, as noticed by Canonne et. al. the space complexity of this protocol can be improved again to $\Oh(\varepsilon^{-2} + \log n)$ \cite{CGMS17} --- if Alice and Bob can use shared randomness, it is enough for them to draw $\Oh(\frac{1}{\varepsilon^2})$ random Gaussian vectors $g_1, g_2, \ldots g_k \in \bbR^n$ and compare how often $\sgn(\inprod{g_i, x}) = \sgn(\inprod{g_i, y})$. The additive $\Oh(\log n)$ term comes again from the Newman's theorem, in order to simulate this protocol in the private coin model.

\paragraph{Quantum Communication Complexity.}
Another impactful example of a well-studied problem that can be phrased as $\IP_N$ for some norm $N$ is a complete problem for \emph{quantum communication complexity} (interested reader can look at \cite{Aaronson13,Kremer95} for relevant definitions and discussion). Specifically, let us consider the communication problem where Alice is given a symmetric matrix $A \in \bbR^{d\times d}$ with bounded operator norm, and Bob a vector $v \in \bbR^d$ with $\|v\|_2 = 1$, and they wish to compute $v^T A v \pm \varepsilon$. With the proper definitions in hand, it is not difficult to see that this problem has a quantum protocol with complexity $\Oh(\frac{\log d}{\varepsilon^2})$, and on the other hand if one has a classical protocol for this problem, one can use it as a black-box to simulate arbitrary quantum communication protocol \cite{Kremer95}. As such, understanding its communication complexity has drawn significant attention: in particular Raz  showed that it has a protocol with complexity $\Oh(\sqrt{d})$ \cite{Raz99}, whereas Klartag and Regev \cite{KR11} proved $\Omega(d^{1/3})$ lower bound.  As it turns out, this problem can be easily seen to be equivalent to the $\IP_{\mathcal{S}_\infty}$, where $\mathcal{S}_\infty$ is an operator norm on the space of symmetric $d \times d$ matrices over $\mathbb{R}$. In short, to reduce the evaluation of the quadratic form to the $\IP_{\mathcal{S}_\infty}$, Bob given vector $v \in \bbR^n$ can construct a matrix $v v^T \in \bbR^{d\times d}$, such that $\inprod{A, vv^T} = \Tr A^T v v^T = v^T A v$. It can be shown that $\|v v^T\|_{\mathcal{S}_1} = 1$ where $\mathcal{S}_1$ is a \emph{Shatten 1-norm}, dual to the operator norm. The other reduction is also relatively simple.

\paragraph{Extension Complexity.}
In the seminal and celebrated work Yannakakis \cite{Yannakakis91} defined the \emph{extension complexity} of a convex polytope: extension complexity of $P$, denoted by $\xc(P)$ is the smallest number of inequalities needed to specify some higher-dimensional polytope $Q$, such that $P$ is a linear projection of $Q$. He showed a striking connection between extension complexity of a polytope $P$ and communication complexity measures of the so-called \emph{slack matrix} of $P$ --- opening a road to a number of beautiful results showing non-conditional lower bounds for extension complexity of various explicit polytopes of interests, using communication complexity methods (see \cite{GRW18, Roughgarden16} as well as \cite[Chapter 4.10]{CCZ14} and references therein).

If a polytope $P$ is origin-symmetric (i.e. satisfying $P = -P$), there is a unique norm (which we denote as $\|\cdot\|_P$ by a slight abuse of notation) for which $P$ is a unit ball. We observe that a short direct argument can be used to prove $R_{\varepsilon,1/3}(\IP_P) \leq \Oh(\frac{\log \xc(P)}{\varepsilon^2})$ --- norms associated with polytopes that have small extension complexity admit efficient communication protocols for the $\IP_P$ problem. More interestingly, by contrapositive, lower bounds for communication complexity of $\IP_{P}$ can be used to obtain lower bounds for extension complexity of $P$.

\subsection{Our results}

We initiate a systematic study of the $\IP_N$ problem, and in particular focus on one-way and two-way communication complexity of $\IP_N$ for symmetric norms $N$ in a private-coin model. We show that the two-way communication complexity for symmetric norm is small, $\poly(\log n, \varepsilon^{-1})$. In contrast, we show that one-way communication complexity is (nearly completely) determined by whether the norm contains up to small distortion $\ell_\infty^k$ --- the $k$-dimensional space equipped with the $\ell_\infty$ norm. Finally, we show connections to extension complexity, which holds for arbitrary (non necessarily symmetric) norms.

\paragraph{Communication Complexity of $\IP_{\ell_p}$.}
Given the importance of some of the communication problems $\IP_N$, it is natural to ask whether we can characterize the communication complexity of $\IP_N$ in terms of geometric properties of the norm $N$. This is quite an ambitious task: to our knowledge even the complexity of $\IP_{\ell_p}$ has not been studied before, except for $p \in \{1, 2, \infty\}$, and it is not intuitive at a first glance what answer should we expect for those norms.

It turns out to be quite easy to see that $\mathcal{R}_{\varepsilon, 1/3}(\IP_{\ell_1}) \leq \Oh(\frac{\log n}{\varepsilon^{2}})$ --- if the vector $v$ with $\|v\|_1 = 1$ is non-negative Alice can treat it as probability distribution over $[n]$ and sample $\frac{1}{\varepsilon^2}$ coordinates from this probability distribution. In the general case, they can apply the same protocol separately for the negative and positive part of the vector $v$.

Interestingly, the first protocol that comes to mind for the $\IP_{\ell_2}$ is using Johnson-Lindenstrauss lemma --- if Alice and Bob share public randomness, they can use a projection on random subspace of dimension $\Oh(\varepsilon^{-2})$. With some care this idea can be used to get a $\mathcal{R}^{\mathrm{pub}}_{\eps,\frac13}(\IP_{\ell_2}) \leq \Oh(\varepsilon^{-2})$ \cite{CGMS17}, and by Newmans theorem\cite{Newman91} $\cR_{\eps,\frac13}(\IP_{\ell_2}) \leq \Oh(\varepsilon^{-2} + \log n)$. Unfortunately, these two protocols for $\IP_{\ell_1}$ and $\IP_{\ell_2}$ seem to be very different --- and it is not clear if they can be used to obtain an efficient protocol for $\IP_{\ell_p}$ for general $p$. As it turns out, the first of the aforementioned protocols --- the coordinate-sampling based protocol --- can actually be generalized, leading to the following theorem.
\begin{theorem}
\label{thm:ell-p-cc}
For any  $1 \leq p \leq \infty$, we have $\cR_{\eps,\frac13}(\IP_{\ell_p}) \leq \Oh(\frac{\log n}{\varepsilon^2})$.
\end{theorem}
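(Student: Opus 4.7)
The plan is to extend the coordinate-sampling protocol used for $\IP_{\ell_1}$ (Alice samples coordinates proportional to $|v_i|$) to arbitrary $p$ via \emph{importance sampling}. By the duality of $\ell_p$ and $\ell_q$, the problem $\IP_{\ell_p}$ is symmetric in Alice and Bob up to relabeling $p \leftrightarrow q$, so it suffices to consider $p \in [1,2]$; the case $p > 2$ reduces to $q \in [1,2]$ with the roles of Alice and Bob swapped.

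Assume for simplicity $\|v\|_p = 1$. Alice draws $k = \Oh(\varepsilon^{-2})$ i.i.d.\ indices $I_1, \ldots, I_k$ from the distribution $p_i := |v_i|^p$ on $[n]$, and for each one sends the pair $(I_j, v_{I_j})$ to Bob. Bob forms the single-sample estimator $X_j := v_{I_j} u_{I_j} / p_{I_j}$ and outputs $Z := \tfrac{1}{k}\sum_j X_j$. By design, $\mathbb{E}[X_j] = \sum_i p_i \cdot (v_i u_i / p_i) = \inprod{v,u}$, so $Z$ is unbiased.

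The key step is the variance bound, which I will obtain via H\"older's inequality. Writing $\mathbb{E}[X_j^2] = \sum_i v_i^2 u_i^2 / p_i = \sum_i |v_i|^{2-p} u_i^2$, I apply H\"older with dual exponents $a = p/(2-p)$ and $b = q/2$; the identity $1/a + 1/b = (2-p)/p + 2/q = 1$ follows directly from $1/p + 1/q = 1$. This yields
\begin{align*}
\mathbb{E}[X_j^2] \;\le\; \Bigl(\sum_i |v_i|^{p}\Bigr)^{(2-p)/p} \Bigl(\sum_i |u_i|^q\Bigr)^{2/q} \;=\; \|v\|_p^{2-p}\|u\|_q^2 \;\le\; 1,
\end{align*}
with the edge cases $p=1$ and $p=2$ handled by taking $b=\infty$ or $a=\infty$ respectively. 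Chebyshev's inequality applied to $Z$ then gives $|Z-\inprod{v,u}|\le\varepsilon$ with probability at least $2/3$.

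Each sample costs $\Oh(\log n)$ bits for the index $I_j$, plus $\Oh(\log(1/\varepsilon))$ bits to discretize $v_{I_j}\in[-1,1]$ (a standard perturbation argument shows this precision suffices to keep the total quantization error below $\eps$). Summing over $k = \Oh(\eps^{-2})$ samples gives $\Oh((\log n)/\varepsilon^{2})$ total communication in the usual regime $\varepsilon \ge 1/\poly(n)$. The main technical content is the choice of H\"older exponents $(a,b) = (p/(2-p),\, q/2)$ that makes the variance collapse to $\|v\|_p^2\|u\|_q^2$; the remaining ingredients (unbiasedness of the estimator, Chebyshev concentration, reduction to $p \in [1,2]$ by symmetry, and discretization) are routine.
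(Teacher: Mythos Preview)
Your proposal is correct and follows essentially the same approach as the paper: the same importance-sampling distribution $p_i=|v_i|^p$, the same symmetry reduction to $p\in[1,2]$, and the same Chebyshev concentration over $\Oh(\varepsilon^{-2})$ samples. The only minor difference is in how the second-moment bound is obtained: you bound $\E[X^2]=\sum_i |v_i|^{2-p}|u_i|^2$ via H\"older with exponents $(p/(2-p),\,q/2)$, whereas the paper computes the $q$-th moment exactly, $\E|X|^q=\sum_i |v_i|^p(|u_i||v_i|^{1-p})^q=\sum_i|u_i|^q\|v_i|^{p+q-pq}=\|u\|_q^q$ (using the identity $p+q-pq=0$), and then invokes $\|X\|_2\le\|X\|_q$ for $q\ge2$; the paper's route avoids H\"older and yields an exact equality, but both give the same bound $\E[X^2]\le 1$.
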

On the other hand a simple reduction from the \textsc{Gap-Hamming} problem can be used to show an almost matching lower bound $\cR_{\varepsilon, 1/3}(\IP_{\ell_p}) \geq \Omega(\varepsilon^{-2})$ for any $\varepsilon \geq \Omega(n^{-1/2})$ (\Cref{lem:eps-lb}).

\paragraph{Communication Complexity of $\IP_{N}$ for symmetric norms.}

The next step in the quest of trying to characterize the complexity of $\IP_{N}$ in terms of the geometry of the normed space $(\bbR^n, \|\cdot\|_N)$ is to consider a wide class of \emph{symmetric norm} --- norms that are invariant under permutation of coordinates and negation of any coordinate --- for example all the $\ell_p$ norms are symmetric according to this definition.

As it turns out, Andoni et. al \cite{ANNRW17} showed that any finite dimensional normed space equipped with a symmetric norm has \emph{near-isometric embedding} into a relatively ``simple'' (or at least explicit) normed space of only polynomially higher dimension (see Section~\ref{sec:symmetric-two-way} and specifically \Cref{thm:symmetric-embedding} for details).

\begin{definition}
We say that a normed $(X, \|\cdot\|_X)$ \emph{embeds into} $(Y, \|\cdot\|_Y)$ with distortion $D$ if and only if there is a linear mapping $i : X \to Y$, such that
\begin{equation*}
    \forall x \in X \qquad\|i(x)\|_Y \leq \|x\|_X \leq D \|i(x)\|_Y.
\end{equation*}
We will use notation $X \hookrightarrow^D Y$ to signify that $X$ embeds in $Y$ with distortion $D$.
\end{definition}
Perhaps surprisingly, it is not immediately obvious that such an embedding is useful for designing an efficient protocol for our problem $\IP_N$ --- in contrast with the Approximate Nearest Neighbours Problem for which this technique was developed. Specifically, let us consider a norm $N$ on $\bbR^n$, such that $\IP_N$ problem already has an efficient protocol, and a linear subspace $U \subset \bbR^n$ with an induced norm $N|_U$ --- restriction of the norm $N$ to the subspace $U$. Can we use the protocol for $\IP_{N}$ to obtain an efficient protocol for $\IP_{N|_U}$? Alice, with a vector $u \in U$ can treat it as a vector in $\bbR^n$, but what should Bob do with his vector in $U^*$ (equipped with the norm dual to $N|_U$)? This dual space cannot be interpreted in a natural way as a subspace of $(\bbR^n, N^*)$. As it turns out, $U^*$ is actually isometric with the quotient $U^* = \bbR^n / U^{\perp}$ (where $U^\perp := \{v : \forall u \in U, \inprod{u,v} = 0\}$), and importantly the unit ball of a dual norm $(N|_{U})^*$ is just an image of the ball for the norm $N^*$ under the projection $\pi : \bbR^n \to \bbR^n / U^{\perp}$ --- this is essentially a finite dimensional Hahn-Banach theorem (or equivalently separating hyperplane theorem). We can use this fact to show that $\cR_{\varepsilon, \delta}(\IP_{N|_U}) \leq \cR_{\varepsilon, \delta}(\IP_{N})$ --- Alice can just treat her vector $v \in U$ as a vector in $\bbR^n$ equipped with norm $\|\cdot\|_N$, whereas Bob given a vector $w \in \bbR^n / U^{\perp}$ can find a vector $\tilde{w} \in \pi^{-1}(w)$ such that $\|\tilde{w}\|_{N^*} = \|w\|_{(N|_U)^*}$. Using the protocol for $\IP_N$ they can approximate $\inprod{v, \tilde{w}}$, which is equal to $\inprod{v, w}$ by the choice of $\tilde{w}$.

It turns out that handling embeddings with distortion $D \geq 1$ does not provide any additional difficulty, and we can prove the following proposition.
\begin{proposition}
\label{prop:embedding-preserves-cc}
If $(\bbR^k, \|\cdot\|_X) \hookrightarrow^{\alpha} (\bbR^n, \|\cdot\|_Y)$, then for any $\varepsilon, \delta$ we have $\cR_{\varepsilon, \delta}(\IP_X) \leq \cR_{\varepsilon\alpha^{-1}, \delta}(\IP_Y)$.
\end{proposition}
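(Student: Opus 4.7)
The plan is to reduce an $\IP_X$ instance to an $\IP_Y$ instance by pushing Alice's vector forward through the embedding $i : \mathbb{R}^k \to \mathbb{R}^n$, and having Bob lift his vector through a Hahn--Banach extension. First I would fix an embedding $i$ witnessing $\|i(x)\|_Y \leq \|x\|_X \leq \alpha \|i(x)\|_Y$, known to both players. Given Alice's input $v$ with $\|v\|_X \leq 1$, she defines $v' := i(v) \in \mathbb{R}^n$. By the left inequality of the embedding condition, $\|v'\|_Y \leq \|v\|_X \leq 1$, so $v'$ is a valid $\IP_Y$ input for Alice.

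The crux is what Bob does with his vector $w \in \mathbb{R}^k$, $\|w\|_{X^*} \leq 1$. Viewed via $i$, he wants a vector $w' \in \mathbb{R}^n$ such that $\langle i(v), w'\rangle = \langle v, w\rangle$ for every $v \in \mathbb{R}^k$, and with $\|w'\|_{Y^*}$ as small as possible. Let $U := i(\mathbb{R}^k)$ and $j := i^{-1} : U \to \mathbb{R}^k$. Consider the linear functional $\psi : U \to \mathbb{R}$ given by $\psi(u) := \langle j(u), w\rangle$. By the right inequality of the embedding condition, $|\psi(u)| \leq \|j(u)\|_X \|w\|_{X^*} \leq \alpha \|u\|_Y \|w\|_{X^*} \leq \alpha \|u\|_Y$, so $\|\psi\|_{(Y|_U)^*} \leq \alpha$. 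By the (finite-dimensional) Hahn--Banach theorem, $\psi$ extends to a linear functional $\tilde{\psi}$ on $\mathbb{R}^n$ with the same $Y^*$-norm, i.e.\ to some $w' \in \mathbb{R}^n$ with $\|w'\|_{Y^*} \leq \alpha$ and $\langle u, w'\rangle = \langle j(u), w\rangle$ for all $u \in U$. Substituting $u = i(v)$ yields $\langle i(v), w'\rangle = \langle v, w\rangle$ as desired. Bob computes such a $w'$ locally (this is a finite linear-programming task, needing no communication).

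Now the players rescale so that the constraint for $\IP_Y$ is satisfied exactly: Alice uses $v'$ and Bob uses $w''/\alpha := w'/\alpha$, both having norms at most $1$ in $Y$ and $Y^*$ respectively. They invoke the assumed $\IP_Y$ protocol with target accuracy $\varepsilon/\alpha$ and failure probability $\delta$, obtaining with probability $1-\delta$ a value $r$ with $|r - \langle v', w'/\alpha\rangle| \leq \varepsilon/\alpha$. Output $\alpha r$; since $\alpha \langle v', w'/\alpha\rangle = \langle i(v), w'\rangle = \langle v, w\rangle$, we get $|\alpha r - \langle v, w\rangle| \leq \varepsilon$. The total communication is $\cR_{\varepsilon/\alpha,\delta}(\IP_Y)$.

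The only substantive step is the Hahn--Banach extension controlling $\|w'\|_{Y^*}$; everything else is direct verification of norms and of the bilinear pairing. The identity $\min_{i^*(w') = w} \|w'\|_{Y^*} = \|w\|_{(Y|_U)^*}$ (the dual of a subspace norm equals the quotient of the dual norm) is exactly the geometric fact flagged in the paragraph preceding the proposition, so the argument is essentially a formalization of that observation combined with an $\alpha$-factor loss arising from the distortion in the embedding.
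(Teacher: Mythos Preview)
Your proof is correct and takes essentially the same approach as the paper: push Alice's vector forward through the embedding, lift Bob's functional via Hahn--Banach (the paper's Claim~\ref{claim:duality}), and absorb the distortion by rescaling Bob's vector by $\alpha^{-1}$ before invoking the $\IP_Y$ protocol with accuracy $\varepsilon/\alpha$. The only cosmetic difference is that the paper factors the embedding as an isomorphism with distortion $\alpha$ followed by an isometric embedding and handles the two cases separately, whereas you do both in one step by directly bounding $\|\psi\|_{(Y|_U)^*} \leq \alpha$ and then extending; the substance is identical.
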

With this proposition in hand, one way to show that there is an efficient protocol for $\IP_N$ for any symmetric norm $N$, is just to provide an efficient protocol for the explicit target space of the aforementioned embedding. We actually managed to implement this plan, showing the following theorem.

\begin{restatable}{theorem}{symmetricub}
\label{thm:symmetric-ub}
For any symmetric norm $N$, we have $\cR_{\varepsilon, 1/3}(\IP_N) \leq \tilde{\Oh}(\varepsilon^{-6}) \log n$.
\end{restatable}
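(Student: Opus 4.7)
The plan is to leverage the near-isometric embedding of symmetric norms into an explicit structured target space promised by the \cite{ANNRW17} theorem (\Cref{thm:symmetric-embedding}), and then design an efficient protocol for that target space. By that theorem, there is a linear map from $(\bbR^n, N)$ into $(\bbR^{n'}, M)$ with $n' = \poly(n)$ and distortion $1 + \varepsilon$, where $M$ has an explicit multi-scale structure---intuitively, a combination (of $\ell_\infty$-like flavor) of $\Oh(\log n)$ ``level'' sub-norms $M_j$, each of which captures the magnitudes of the vector at one dyadic scale after truncation to its top-$k_j$ coordinates. By \Cref{prop:embedding-preserves-cc}, it then suffices to build a protocol for $\IP_M$ with communication $\tilde{\Oh}(\varepsilon^{-6}\log n)$; the factor-$(1+\varepsilon)$ distortion is absorbed into the additive error.

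The next step is to dualize the structure of $M$ on Bob's side. By finite-dimensional duality (the same Hahn--Banach argument used for subspaces in the discussion preceding \Cref{prop:embedding-preserves-cc}), the unit ball of $M^*$ is the convex hull of the unit balls of the duals of the $M_j$, so any $w$ with $\|w\|_{M^*} \le 1$ can be written as $w = \sum_j \lambda_j w_j$ with $\lambda_j \ge 0$, $\sum_j \lambda_j \le 1$, and each $\|w_j\|_{M_j^*} \le 1$. Bob can then decompose his task into $\Oh(\log n)$ sub-tasks: for each scale $j$ approximate $\inprod{v, w_j}$ to within some $\varepsilon'$, then combine via the $\lambda_j$. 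The sub-norms are ``simple''---essentially truncated $\ell_2$ norms---so each sub-task reduces to an $\IP_{\ell_2}$-style protocol with communication $\Oh((\varepsilon')^{-2}\log n)$, either built directly from the JL/sampling approach underlying \Cref{thm:ell-p-cc} at $p=2$, or obtained by another application of \Cref{prop:embedding-preserves-cc}.

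To combine the sub-protocols, I would run all $\Oh(\log n)$ scales in parallel with private randomness, taking $\varepsilon'$ and the per-scale failure probability to be $\varepsilon/\poly(\log n, \varepsilon^{-1})$ so that a union bound preserves the overall accuracy $\varepsilon$ and confidence $2/3$. Summing $\Oh((\varepsilon')^{-2}\log n)$ over the $\Oh(\log n)$ scales yields the claimed $\tilde{\Oh}(\varepsilon^{-6}\log n)$ bound. The main obstacle I expect lies precisely in this precision-setting: Alice does not know Bob's decomposition weights $\lambda_j$, so she must achieve uniform accuracy on every scale regardless of its actual contribution, which is what inflates the exponent from the ideal $\varepsilon^{-2}$ up to $\varepsilon^{-6}$. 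Arguing that this blow-up is indeed only polynomial in $\varepsilon^{-1}$ and logarithmic in $n$---rather than degrading with $n$---is the delicate book-keeping step where the specific symmetric structure of the target space must be used carefully, both to bound the number of relevant scales and to control how errors at one scale interact with the weights from another.
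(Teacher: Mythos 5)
Your high-level plan matches the paper's: embed the symmetric norm into the explicit target space of \cite{ANNRW17} via \Cref{thm:symmetric-embedding}, reduce using \Cref{prop:embedding-preserves-cc}, dualize the combining-norm structure on Bob's side via finite-dimensional duality, and then build the protocol bottom-up from sub-protocols for simple constituent norms. The Hahn--Banach step you invoke (the dual ball of a max of norms is the convex hull of the dual balls) is exactly \Cref{lem:dual-intersection}/\Cref{cor:decomposition} in the paper.

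However, there is a genuine gap in how you propose to combine the sub-protocols. The target space is $\bigoplus_{i\le t}^{\ell_\infty}\bigoplus_{k\le n}^{\ell_1} T^{(k)}$ with $t = n^{\Oh(1)}$ for fixed distortion $\delta = 1/2$, so the number of constituent blocks is polynomial in $n$, not $\Oh(\log n)$ as you assert. (You seem to be conflating the target space of \cite{ANNRW17} with a dyadic ``level-set'' decomposition of a single vector.) Consequently, ``running all scales in parallel with private randomness'' would incur $\poly(n)$ communication, which is what the theorem is trying to avoid. The missing idea is the \emph{sparsification of the combining norm}: since both $\ell_1$ (for the inner sum) and the dual of $\ell_\infty$ (for the outer sum) admit $(\varepsilon,\delta,\Oh(\varepsilon^{-2}))$-sparsification by \Cref{thm:sparsification}, Alice can form the vector of block-norms $q_i = \|v_i\|_{V_i}$, sparsify it to $\Oh(\varepsilon^{-2})$ surviving blocks, and then run the sub-protocol (amplified by repetition) only on those blocks. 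This is \Cref{lem:composition}, and it is the crux of the argument: it is what replaces your ``run all scales'' step and yields the multiplicative $\poly(\varepsilon^{-1})\cdot\polylog$ overhead per level of nesting, rather than a dependence on the number of blocks. The paper then applies \Cref{lem:composition} twice (once per sum), with the base case $\cR(\IP_{T^{(k)}})\le\Oh(\varepsilon^{-2}\log n)$ supplied by writing $(T^{(k)})^*=\max(\ell_\infty,\ell_1/k)$ and invoking \Cref{lem:max} and \Cref{thm:ell-p-cc}; note in particular that the Top-$k$ norm is $\ell_1$-flavored (sum of the $k$ largest magnitudes), not a truncated $\ell_2$ norm as you suggest. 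The $\varepsilon^{-6}$ arises as $\varepsilon^{-2}$ (base) $\times$ $\varepsilon^{-2}$ (first composition) $\times$ $\varepsilon^{-2}$ (second composition), not from the precision/union-bound trade-off you anticipated.
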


\paragraph{One-way communication and sparsification for $\IP_N$ for symmetric norms.}
An interesting feature of the simple protocol for $\IP_{\ell_1}$ is that it is a \emph{one-way protocol}: instead of full bidirectional communication, Alice is the only party sending single message to Bob based on her vector (i.e. she sends a multiset of coordinates sampled according to the probability distribution specified by $v$), and Bob can report the answer right away, based on the message he received and his own vector $w$. We will denote by $\coR_{\varepsilon, \delta}(\IP_N)$ the one-way communication complexity for the problem $\IP_N$ up to additive error $\varepsilon$, with probability $1-\delta$.

In fact, the protocol for $\ell_1$ is a special kind of one-way protocol, which we will call \emph{sparsification} for $\IP_N$ --- Alice, based on her vector $v$, can produce a random sparse vector $\phi(v)$, such that for any fixed $w$ that Bob could have, with probability $2/3$ we have $\inprod{\phi(v), w} = \inprod{v, w} \pm \varepsilon$, and just send an encoding of this sparse vector to Bob.

In what follows let $B_N := \{x \in \bbR^n : \|x\|_N \leq 1\}$ be the unit ball for a norm $N$.
\begin{definition}
\label{def:sparsification}
We say that a norm $N$ on  $\bbR^n$ admits a $(\varepsilon, \delta, D)$-sparsification, if there exist a randomized mapping $\phi : B_N \to \bbR^n$, such that 
\begin{itemize}
    \item  For any $v \in B_N$, and any $w \in B_{N^*}$ we have $\inprod{\phi(v), w} = \inprod{v, w} \pm \varepsilon$ with probability at least $1 - \delta$.
    \item For any $v$, we always have $\|\phi(v)\|_0 \leq D$ (i.e. $\phi(v)$ has at most $D$ non-zero coordinates).\atodo{added note}
\end{itemize}
\end{definition}
It is not difficult to see that if a norm $N$ admits an $(\varepsilon, \delta, D)$-sparsification, then $\coR_{\varepsilon, \delta}(\IP_N) \leq \Oh(D\cdot \log\frac n\eps)$ (see \Cref{prop:sparsification-implies-cc}), but such a sparsification can potentially be much more useful than arbitrary one-way protocol --- in particular, note that Bob in order to compute his answer does not have to access the entire vector $w$, just $D$ of its coordinates.

As it turns out, the protocol for $\ell_p$ norms is actually achieved via such sparsification.
\begin{restatable}[]{theorem}{SparsificationLp}
	\label{thm:sparsification}
	For any $1 \leq p < \infty$, and any $\varepsilon>0$, the norm $\ell_p$ on $\bbR^n$ admits a $(\varepsilon, 1/3, \Oh(\varepsilon^{-\max(2, p)}))$-sparsification.
	In particular, $\coR_{\varepsilon,\delta}(\IP_{\ell_p})\leq\Oh(\varepsilon^{-\max(2,p)}\cdot\log\frac n\eps)$.\atodo{conclusion added}
\end{restatable}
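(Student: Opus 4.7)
The plan is \emph{importance sampling}: Alice picks a distribution $(p_i)$ on $[n]$, draws $m$ i.i.d.\ coordinates $i_1,\ldots,i_m$, and outputs the sparse vector $\phi(v)$ with $\phi(v)_i := \frac{v_i}{m p_i}\,|\{j : i_j = i\}|$ (plus, in the $p>2$ case, some deterministically kept ``heavy'' coordinates). For any fixed $w \in B_{\ell_q}$, the functional $\langle \phi(v), w\rangle = \frac{1}{m}\sum_j v_{i_j} w_{i_j}/p_{i_j}$ is an unbiased estimator of $\inprod{v,w}$, and Chebyshev's inequality reduces the entire task to bounding the per-sample second moment $S := \sum_i (v_i w_i)^2/p_i$: if $S \le S_0$, then $m = \Theta(S_0/\varepsilon^2)$ samples yield $\varepsilon$-error with probability $\ge 2/3$. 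The sparsity of $\phi(v)$ is at most the number of distinct indices drawn, hence at most $m$. Once sparsification is established, the one-way communication bound follows at once from the generic reduction \Cref{prop:sparsification-implies-cc} combined with a routine quantization of the coefficients of $\phi(v)$ to precision $\poly(\varepsilon/n)$.

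\textbf{Case $1 \le p \le 2$.} Choose $p_i := |v_i|^p/\|v\|_p^p$, so that $S = \|v\|_p^p \sum_i |v_i|^{2-p} w_i^2$. Apply Hölder with exponents $a = p/(2-p)$ and $b = p/(2(p-1))$ (both positive for $1 < p < 2$, with $1/a + 1/b = 1$ and $2b = q$), yielding $\sum_i |v_i|^{2-p} w_i^2 \le \|v\|_p^{2-p}\|w\|_q^2$, and hence $S \le \|v\|_p^2 \|w\|_q^2 \le 1$; the boundary cases $p=1$ (exponents $1,\infty$) and $p=2$ are immediate. Therefore $m = \Theta(\varepsilon^{-2})$ i.i.d.\ samples suffice.

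\textbf{Case $p > 2$.} Here the choice $p_i \propto |v_i|^p$ fails because for tiny $|v_i|$ the ratio $v_i w_i/p_i$ explodes, so $S$ cannot be bounded independently of $n$. We split $[n] = H \sqcup L$ with $H := \{i : |v_i| \ge \varepsilon\}$; Markov's inequality on $\sum |v_i|^p \le 1$ gives $|H| \le \varepsilon^{-p}$. Alice keeps all of $H$ in $\phi(v)$ exactly, and additionally draws $m = \Theta(\varepsilon^{-p})$ i.i.d.\ samples from $L$ with distribution $p_i := v_i^2/V$, where $V := \sum_{i\in L} v_i^2 \le \varepsilon^{2-p}\sum_{i \in L}|v_i|^p \le \varepsilon^{2-p}$. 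Since $q \le 2$, monotonicity of $\ell_r$ norms in $r$ on counting measure gives $\|w\|_2 \le \|w\|_q \le 1$; the per-sample second moment on the light part is then $V \sum_{i\in L} w_i^2 \le \varepsilon^{2-p}$, so averaging $m = \Theta(\varepsilon^{-p})$ samples drops the variance to $\Oh(\varepsilon^2)$ and Chebyshev closes the case. The overall sparsity is $|H| + m = \Oh(\varepsilon^{-p})$.

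The main subtlety is conceptual: the two regimes require genuinely different sampling distributions, since for $p > 2$ no $v$-only distribution $(p_i)$ gives a $v,w$-uniform bound on $S$, which forces the heavy coordinates to be handled deterministically. The other potentially delicate step is pinning down the Hölder exponents in Case 1 --- they come out exactly right thanks to the duality $1/p + 1/q = 1$, so that both $\|v\|_p$ and $\|w\|_q$ show up with the correct powers.
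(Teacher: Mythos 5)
Your Case $1 \le p \le 2$ is correct and matches the paper's approach in spirit (same sampling distribution $p_i \propto |v_i|^p$, second-moment bound), though the paper phrases it via the $L_q$-moment rather than Hölder applied to the variance. The serious problem is in Case $p > 2$: the inequality $V = \sum_{i\in L}v_i^2 \le \varepsilon^{2-p}\sum_{i\in L}|v_i|^p$ is false, and in fact is reversed. For $i \in L$ you have $|v_i| < \varepsilon$, and since $2-p < 0$ the map $x \mapsto x^{2-p}$ is \emph{decreasing}, so $|v_i|^{2-p} > \varepsilon^{2-p}$, i.e.\ $v_i^2 > \varepsilon^{2-p}|v_i|^p$ term by term. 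More to the point, $V$ cannot be bounded independently of $n$ at all: take $v_i = n^{-1/p}$ for all $i$ (so $\|v\|_p=1$ and every coordinate is light once $n > \varepsilon^{-p}$); then $V = n^{1-2/p} \to \infty$. So the per-sample variance $V\|w_L\|_2^2$ is unbounded and the Chebyshev step collapses. Your own diagnosis that ``no $v$-only distribution gives a $v,w$-uniform bound on $S$'' is accurate, but the heavy/light split does not rescue it --- the light tail carries unbounded $\ell_2$ mass for $p>2$.

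The paper sidesteps this by not using a second-moment argument at all when $p>2$. It keeps the single distribution $p_i = |v_i|^p$ and bounds the $q$-th moment of the per-sample estimator (with $1/p+1/q=1$, so $q<2$), which \emph{is} bounded: $\|\inprod{\tilde v,w}\|_q^q = \sum_i |w_i|^q |v_i|^{p+q-pq} = \|w\|_q^q \le 1$ since $p+q=pq$. Because $q<2$, the ordinary variance bound $s^{-1/2}$ is unavailable, so it applies the type-$q$ (Marcinkiewicz--Zygmund / Rosenthal-style) inequality for sums of independent mean-zero variables, Lemma~\ref{lem:type}, to get $\|\tfrac1s\sum Z_j\|_q \lesssim s^{1/q-1}\|Z_1\|_q = s^{-1/p}$, and then Markov. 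Taking $s = \Theta(\varepsilon^{-p})$ closes the case. If you want to repair your write-up, the cleanest fix is to drop the heavy/light split entirely and replace the second-moment analysis in Case $p>2$ by this $q$-th-moment plus type-$q$ argument.
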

Interestingly, the guaranteed sparsity in the theorem above does not depend on the ambient dimension $n$; for a fixed $p$ and $\varepsilon$, we could approximate $\inprod{v, w}$ by accessing only constant number of coordinates of $w$.
\atodoin{There is inconsistency with the notation for dimension. Sometimes it is $d$ and sometimes it is $n$.}

In fact it is not difficult to show that this sparsification result is almost optimal. A direct reduction from the \textsc{index} problem can be used to show $\coR_{\varepsilon, 1/3}(\ell_\infty) \geq \Omega(n)$ for any $\varepsilon < 1$. Since Proposition~\ref{prop:embedding-preserves-cc} has a direct analog for one-way communication complexity (with the same proof), we have as a consequence
\begin{proposition}
\label{prop:embedding-lb}
For any norm $N$ on $\bbR^n$, if $(\bbR^k, \ell_\infty) \hookrightarrow^{\varepsilon^{-1}} (\bbR^n, N)$, then $\coR_{\varepsilon, \delta}(\IP_N) \geq \Omega(k)$.
\end{proposition}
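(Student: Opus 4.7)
The plan is to combine two essentially off-the-shelf ingredients. The first is the one-way analog of Proposition~\ref{prop:embedding-preserves-cc}, announced (``with the same proof'') in the paragraph immediately preceding the statement: if $(\bbR^k, \|\cdot\|_X) \hookrightarrow^{\alpha} (\bbR^n, \|\cdot\|_Y)$ then $\coR_{\eta, \delta}(\IP_X) \leq \coR_{\eta/\alpha, \delta}(\IP_Y)$. The second is the index-reduction lower bound $\coR_{\eta, 1/3}(\IP_{\ell_\infty^k}) \geq \Omega(k)$, valid for any constant $\eta < 1$, also mentioned in the same paragraph.

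Given these, I would instantiate the first with $X = \ell_\infty^k$, $Y = (\bbR^n, N)$, and $\alpha = \varepsilon^{-1}$, which gives $\coR_{\eta, 1/3}(\IP_{\ell_\infty^k}) \leq \coR_{\eta\varepsilon, 1/3}(\IP_N)$ for every $\eta > 0$. Taking $\eta = 1/2$ and invoking the index lower bound yields $\coR_{\varepsilon/2, 1/3}(\IP_N) \geq \Omega(k)$, which matches the claim up to the absolute constant $2$ absorbed into the implicit constant in the $\Omega(\cdot)$ (or equivalently into the embedding distortion parameter).

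The only item worth unpacking --- though labeled routine --- is the one-way analog of Proposition~\ref{prop:embedding-preserves-cc}. I would verify that in the embedding-based reduction, Alice substitutes $v$ by $i(v)$ on her side (noting $\|i(v)\|_Y \leq \|v\|_X$), while Bob, on his side, picks a Hahn-Banach extension $w' \in Y^*$ of $w \in X^*$ satisfying $i^*(w') = w$ and $\|w'\|_{Y^*} \leq \alpha \|w\|_{X^*}$; the existence of $w'$ with that norm bound follows from the identification $(N|_{i(X)})^* \cong (\bbR^n)^*/i(X)^\perp$ combined with the dual of the distortion inequality $\|i(x)\|_N \leq \|x\|_X \leq \alpha \|i(x)\|_N$. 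Both operations are local to a single party, so round complexity is preserved; in particular, one-way protocols translate into one-way protocols, and computing $\inprod{i(v), w'/\alpha}$ to additive accuracy $\eta/\alpha$ recovers $\inprod{v, w}$ to additive accuracy $\eta$.

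There is no substantive obstacle beyond this bookkeeping; the only subtlety is the constant-factor discrepancy between the distortion $\varepsilon^{-1}$ and the error parameter $\varepsilon$ in the conclusion, resolved by the flexibility of the index lower bound in the choice of $\eta < 1$.
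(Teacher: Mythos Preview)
Your proposal is correct and follows essentially the same approach as the paper: the paper's proof of this proposition is a one-line composition of Lemma~\ref{lem:l-infty-lb} (the \textsc{index} lower bound for $\ell_\infty$) with Proposition~\ref{prop:embedding-preserves-cc-ext} (the one-way version of the embedding reduction), exactly as you outline. Your extra care in handling the constant-factor gap between the distortion $\varepsilon^{-1}$ and the final error parameter (by choosing $\eta=1/2$) is a detail the paper glosses over, but the argument is the same.
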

We use this, together with known embeddings of low-dimensional $\ell_\infty$ into $\ell_p$, to deduce lower bounds $\coR_{\varepsilon, 2/3}(\IP_{\ell_p}) \geq \Omega(\varepsilon^{-\max(2, p)})$ --- nearly matching (up to the $\log n$ multiplicative factor) the upper bound from Theorem \ref{thm:sparsification} obtained via sparsification.

Perhaps surprisingly, among all symmetric norms $N$, embedding of $\ell_\infty^k$ with small distortion is the only obstruction for a good sparsification of a norm --- we prove the following weak converse of Proposition~\ref{prop:embedding-lb}.
\begin{restatable}{theorem}{symmetricsparsification}
\label{thm:symmetric-sparsification}
If $N$ is a symmetric norm such that $(\bbR^k, \ell_\infty) \not\hookrightarrow^{\varepsilon^{-1}} (\bbR^n, N)$, then $N$ admits a $(\Oh(\varepsilon), 1/3, k^{\Oh(\log \log k)} \log^2 n)$-sparsification.\atodo{Can we apply this on some norm to get something otherwise unknown?}
\end{restatable}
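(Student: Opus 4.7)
The plan is to combine three ingredients: the structural theorem for symmetric norms from Andoni et al.~\cite{ANNRW17} (already invoked to prove Theorem~\ref{thm:symmetric-ub}), a multi-scale importance-sampling sparsifier, and a concentration analysis that converts the non-embeddability hypothesis into quantitative tail bounds.

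\textbf{Structural reduction.} I would first invoke the ANNRW17 embedding theorem to replace $N$, at the cost of constant distortion, by a canonical symmetric norm on a space of polynomially related dimension, where the canonical norm has a simple explicit form (a maximum of weighted top-$k$ norms). Proposition~\ref{prop:embedding-preserves-cc} admits a direct analog for sparsification, proved by an almost identical argument: a low-distortion linear embedding transfers a sparsifier at the cost of constant factors in $\varepsilon$ and without increase in sparsity. The hypothesis $\ell_\infty^k \not\hookrightarrow^{1/\varepsilon} N$ transfers as well (up to constants in $\varepsilon$), so it suffices to exhibit the sparsification for the canonical target.

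\textbf{Sparsifier construction.} Given $v \in B_N$, I would bucket its coordinates by dyadic magnitude into $O(\log n)$ levels $B_t = \{i : |v_i| \in (2^{-t-1}, 2^{-t}]\}$. Coordinates of magnitude below $\varepsilon/n$ can be safely zeroed out, since they contribute at most $\varepsilon$ in absolute value to any inner product against $w \in B_{N^*}$ (using the crude bound $|w_i| \leq 1/\|e_i\|_N = O(1)$). Within each bucket, I would apply independent Bernoulli sub-sampling at a rate $p_t$ tuned to $|B_t|$ and the concentration function $a_m = \|\chi_{[m]}\|_N$ of the canonical form, rescaling each surviving coordinate to $v_i/p_t$ to enforce $\E[\phi(v)] = v$. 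The rates are chosen so that the expected sparsity per bucket is $k^{O(\log\log k)}\log n$, giving the total sparsity claimed after summing over the $O(\log n)$ buckets.

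\textbf{Concentration analysis and main obstacle.} A crucial simplification is that the definition of sparsification demands only per-$w$ concentration of $\langle \phi(v), w\rangle$, not a uniform $N$-norm bound on the error $\phi(v) - v$. For a fixed $w \in B_{N^*}$, the quantity $\langle \phi(v) - v, w\rangle = \sum_{t}\sum_{i\in B_t} v_i w_i (X_i/p_t - 1)$ is a mean-zero sum of bounded independent random variables, amenable to a Bernstein-type tail bound. The task thus reduces to controlling the variance $\sum_t p_t^{-1} \sum_{i\in B_t} v_i^2 w_i^2$ uniformly over $w \in B_{N^*}$, which is exactly where the non-embeddability hypothesis enters: under the canonical form, every $w \in B_{N^*}$ decomposes as a convex combination of vectors whose large-magnitude coordinates are few in number, the allowable count being governed precisely by the parameter $k$ for which $\ell_\infty^k$ fails to embed. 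The main obstacle, and the heart of the proof, is translating non-embeddability into this quantitative dual-ball decomposition. I expect this to require an iterative refinement across $O(\log\log k)$ dyadic scales of $w$'s magnitudes, with each level exploiting the absence of $\ell_\infty^k$ to contract the effective dimension and contributing a $\log k$-type overhead, together yielding the $k^{O(\log\log k)}$ factor per bucket. A union bound over the $O(\log n)$ buckets and constant-probability amplification of the per-bucket failure probability supply the additional $\log n$ factor, giving the final $k^{O(\log\log k)}\log^2 n$ sparsity.
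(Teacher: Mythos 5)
Your high-level intuitions (dyadic bucketing, importance sampling, and that the heart of the matter is turning non-embeddability into a quantitative tail bound) all point in the right direction, but the two structural pillars of your plan both fail, and together they are fatal.

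First, the ANNRW17 reduction cannot be the first step. The target space in Theorem~\ref{thm:symmetric-embedding} is an $\ell_\infty$-sum over $t = n^{\Oh(1)}$ components, so it \emph{isometrically contains} $\ell_\infty^{n^{\Oh(1)}}$ by construction; the hypothesis $\ell_\infty^k \not\hookrightarrow^{1/\varepsilon} N$ therefore does not transfer to the target --- it is simply false there. Moreover, Proposition~\ref{prop:embedding-preserves-cc} does \emph{not} have a direct analog for sparsification: an embedding $f : (\bbR^n, N) \to (\bbR^{n'}, Y)$ with a sparsifier $\phi_Y$ on $Y$ produces a sparse vector in $\bbR^{n'}$, not in $\bbR^n$, and there is no canonical way to pull it back to a sparse vector in the original coordinate system (which is what Definition~\ref{def:sparsification} requires). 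This is why the paper proves the theorem directly against an arbitrary symmetric norm $N$, without any embedding step. Second, the Bernstein/variance analysis cannot work. For a single sample $\tilde{v} = e_t v_t / p_t$, the second moment $\E\inprod{\tilde{v},w}^2 = \sum_i p_i^{-1} v_i^2 w_i^2$ is \emph{not} bounded uniformly over $w \in B_{N^*}$: already for $N = \ell_p$ with $p > 2$, taking $v = n^{-1/p}\mathbf{1}$ and the optimal sampling $p_i \propto |v_i|^p$ gives variance $\asymp n^{1-2/p}\|w\|_2^2$, which grows polynomially with $n$ even though $\|w\|_q \leq 1$. The paper's escape route is to control the \emph{weak} $L_{q,\infty}$ norm of $\inprod{\tilde{v}, w}$ for $q$ the dual exponent of $p := \log k / \log\varepsilon^{-1}$ (Lemma~\ref{lem:weak-moment-subsampling}); this is genuinely weaker than an $L_2$ bound and is what can actually be proven. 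Concentration then comes from the type inequality (Lemma~\ref{lem:type}) together with the weak-to-strong moment conversion (Lemma~\ref{lem:weak-markov-inverse}) --- not from Bernstein --- and the $k^{\Oh(\log\log k)}$ factor is the cost of this $p$-dependent moment argument, not of an iterative dual-ball refinement.

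Finally, your guess that non-embeddability enters through a decomposition of the \emph{dual} ball is also not how the paper uses it. Lemma~\ref{lem:disjoint-sum} gives a \emph{primal} statement: $k$ disjointly-supported vectors of equal $N$-norm must have a large combined $N$-norm. This feeds into a clever amplification argument (Lemma~\ref{lem:general-sparsification}): for a bad set $S$, either some bucket is overrepresented (in which case the weak moment bound already holds) or one can replace $S$ by a union of $k$ disjoint permuted copies, gaining a factor $\varepsilon^{-1}$ in $\|v_S\|_N$ but only a factor $k^{1/p} = \varepsilon^{-1}$ in $\Pr(t\in S)^{1/p}$, so the ratio never gets worse and one can induct on $|S|$ growing.
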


This theorem is in fact the most technically challenging part of this paper, as well as most unexpected --- while Proposition~\ref{prop:embedding-lb} says that existence of an embedding implies a communication complexity lower bound, which is relatively common phenomenon, here we claim to be able to provide an efficient protocol given \emph{non-existence} of any embedding from one space to another (or, by contrapositive, claim an existence of embedding assuming non-existence of efficient protocol). 

Theorem~\ref{thm:symmetric-sparsification} together with Proposition~\ref{prop:embedding-lb} essentially say that the one-way communication complexity of any symmetric norm is characterized by the possibility of finding low-distortion embedding of finite-dimensional $\ell_\infty$ spaces into this norm.

We believe that both the weak dependency on $\log n$ and the unfortunate dependency on $k$ are just artifacts of our analysis. In fact, we conjecture that
\begin{conjecture}
If $N$ is a symmetric norm such that $(\bbR^k, \ell_\infty) \not\hookrightarrow^{\varepsilon^{-1}} (\bbR^n, N)$, then $N$ admits an $(\Oh(\varepsilon), 1/3, k^{\Oh(1)})$-sparsification.
\end{conjecture}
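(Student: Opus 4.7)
The plan is to replace the recursive, scale-by-scale construction behind Theorem~\ref{thm:symmetric-sparsification} by a single round of importance sampling whose variance is controlled \emph{directly} by the quantitative non-embeddability of $\ell_\infty^k$. Both blemishes in the current bound have identifiable sources: the $\log^2 n$ factor stems from a Newman-style derandomization together with union bounds over $\Theta(\log n)$ dyadic scales of $v$, while the $k^{\Oh(\log\log k)}$ factor comes from iterating a lossy ``spread reduction'' roughly $\log\log k$ times. Eliminating both simultaneously should be possible if the entire analysis is pushed into a single, $n$-oblivious concentration inequality tied to the geometry of $B_N$.

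First I would make the non-embedding hypothesis quantitative in the language of cotype. Using symmetry of $N$, the assumption $(\bbR^k, \ell_\infty) \not\hookrightarrow^{\varepsilon^{-1}} (\bbR^n, N)$ states that for every choice of disjointly-supported unit vectors $u_1,\dots,u_k \in B_N$ one can find signs with $\|\sum_j \sigma_j u_j\|_N \leq \varepsilon$; this is a quantitative cotype-style inequality for $N$, which by duality yields a type-style inequality for $N^*$---precisely the correct tool for bounding the second moment of a linear unbiased estimator of $v$ tested against vectors in $B_{N^*}$.

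Second, I would design the sparsifier directly. Let $g(v) \in \partial\|v\|_N$ be a sign-compatible subgradient, so that $\langle g(v),v\rangle = \|v\|_N$ and $v_i g(v)_i \geq 0$ coordinatewise (possible because $N$ is symmetric). Alice samples $T$ indices i.i.d.\ from $\mu_v(i) := v_i g(v)_i / \|v\|_N$ and sends the unbiased estimator
\begin{equation*}
\phi(v) \;:=\; \frac{1}{T}\sum_{t=1}^{T}\frac{v_{i_t}}{\mu_v(i_t)}\, e_{i_t},
\end{equation*}
preceded by a deterministic step that keeps the top $k^{\Oh(1)}$ coordinates of $v$ exactly, removing the single-sample $L^\infty$ obstruction required by Bernstein. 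For $N = \ell_p$ this specializes to the $|v_i|^p/\|v\|_p^p$ sampling that powers Theorem~\ref{thm:sparsification}, with the tight $k = \varepsilon^{-\max(2,p)}$.

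Third, the analysis. For fixed $w \in B_{N^*}$, the per-sample second moment equals $\sum_i v_i^2 w_i^2/\mu_v(i) = \|v\|_N \sum_i v_i w_i^2 / g(v)_i$. The main claim to prove is that this quantity is at most $\poly(k)$ uniformly over $v \in B_N$ and $w \in B_{N^*}$. This is where the main obstacle lies: it requires a sharp integration of (i) the type-style inequality for $N^*$ coming from the non-embedding hypothesis, (ii) the alignment between $g(v)$ and $v$ enforced by symmetry (so that $g(v)_i$ is large precisely where $v_i$ is large), and (iii) the deterministic tail truncation on $v$. If the variance bound is established, a standard Bernstein/Rosenthal argument yields $|\langle \phi(v),w\rangle - \langle v,w\rangle| \leq \varepsilon$ with probability $2/3$ from $T = k^{\Oh(1)}$ samples, fully independent of $n$. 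If instead the variance bound can only be established after a further dyadic decomposition of $v$ by magnitude, one recovers exactly the weaker Theorem~\ref{thm:symmetric-sparsification} bound; thus the content of the conjecture is really the existence of a single-pass variance estimate---plausibly extractable via a Talagrand-style chaining on $B_{N^*}$---which remains the true crux of any proof.
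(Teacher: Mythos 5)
This statement is an open conjecture in the paper, not a theorem; the authors explicitly state they cannot prove it, and what they actually establish (Theorem~\ref{thm:symmetric-sparsification}) is the weaker $k^{\Oh(\log\log k)}\log^2 n$ bound via the scale-by-scale construction you allude to. Your submission is therefore a research program rather than a proof, and you acknowledge as much --- you write that the key variance estimate ``remains the true crux of any proof.'' So the assessment must be that the gap is genuine, and the central step is missing.

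Two more concrete problems. First, the proposed reformulation of the hypothesis is wrong: for a symmetric norm $N$ and disjointly supported vectors $u_1,\dots,u_k$, the quantity $\|\sum_j \sigma_j u_j\|_N$ is \emph{independent} of the signs $\sigma_j$, and if each $\|u_j\|_N = 1$ it is always at least $1$ by monotonicity. The non-embedding hypothesis gives the opposite inequality to what you wrote --- namely $\|\sum_j u_j\|_N > \varepsilon^{-1}$, i.e.\ norms of disjoint sums \emph{blow up} (this is exactly Lemma~\ref{lem:disjoint-sum} in the paper). The intuition that this is a cotype-like property and should dualize to type for $N^*$ is reasonable, but the statement you would feed into such a duality argument is not the one you wrote.

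Second, the subgradient-sampling estimator $\mu_v(i) = v_i g(v)_i/\|v\|_N$ is not always well-defined as an unbiased sampler: for non-smooth symmetric norms (top-$k$, and in degenerate limits $\ell_1$ or $\ell_\infty$) the subgradient $g(v)$ can vanish on coordinates where $v_i\neq 0$, so $\E\phi(v)\neq v$ and the estimator is biased. This is precisely the regime where the norm has large $\ell_\infty^k$ subspaces and the hypothesis excludes it, but a proof would need to show that non-embeddability forces $g(v)$ to be ``spread out'' enough; this is not a formality. Beyond these, the main claim --- that $\|v\|_N\sum_i v_i w_i^2/g(v)_i\leq\poly(k)$ uniformly over $v\in B_N$ and $w\in B_{N^*}$ --- is exactly the hard content. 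The only a priori bound available from duality is $\sum_i |v_i w_i|\leq 1$, and the ratio $|w_i|/g(v)_i$ can a priori be arbitrarily large on individual coordinates; it is not clear how the truncation-to-top-$k^{\Oh(1)}$ step controls this without reintroducing a decomposition by magnitude, which is the very thing you are trying to avoid. The proposal is a plausible angle of attack with a correct diagnosis of where the current losses come from, but it does not constitute a proof.
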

\paragraph{Connections with extension complexity.}
For any origin symmetric convex polytope $P$ (i.e. satisfying $P = -P$), we can define the associated norm $\|w\|_P := \inf \{ \lambda : w / \lambda \in P \}$ --- this is a unique norm for which $P$ is a unit ball. As mentioned earlier, it is easy to directly show the following relation between extension complexity of $P$ and communication complexity of $\IP_P$.
\begin{proposition}
\label{prop:xc-weak}
For any origin symmetric convex polytope $P$ we have $\cR_{\varepsilon, 1/3}(\IP_P) \leq \Oh(\frac{\log \xc(P)}{\varepsilon^2})$.
\end{proposition}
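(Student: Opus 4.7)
The plan is to reduce $\IP_P$ to an $\ell_1$--$\ell_\infty$ inner-product problem in dimension $m = \Oh(\xc(P))$, and then invoke the standard sampling protocol underlying Theorem~\ref{thm:ell-p-cc}.

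The first step is to produce a centrally symmetric extended formulation of $P$. Let $k = \xc(P)$ and let $Q_0 \subseteq \bbR^d$ be a polytope with $k$ facets and $\pi_0(Q_0) = P$. Using $P = -P$, the Minkowski symmetrization $\tfrac{1}{2}(Q_0 - Q_0)$ is centrally symmetric in $\bbR^d$ and projects to $\tfrac{1}{2}(P - P) = P$; it admits an extension $Q_0 \times Q_0$ with $2k$ facets via the map $(x,y) \mapsto \tfrac{x-y}{2}$, and a minor additional construction enforces central symmetry of the extension itself. The result is a centrally symmetric polytope $Q \subseteq \bbR^D$ with $m = \Oh(k)$ facets together with a linear surjection $\pi\colon \bbR^D \to \bbR^n$ satisfying $\pi(Q) = P$. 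Arranging this symmetric lift with only $\Oh(\xc(P))$ facets is the main technical step.

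With $Q$ in hand, write $Q = \{x : |\langle a_j, x\rangle| \leq 1 \text{ for all } j \in [m/2]\}$, so that $Q^* = \mathrm{conv}\{\pm a_j\}$. As in Proposition~\ref{prop:embedding-preserves-cc}, the surjection $\pi$ gives an isometric embedding $\pi^T\colon (\bbR^n, \|\cdot\|_{P^*}) \hookrightarrow (\bbR^D, \|\cdot\|_{Q^*})$. Alice lifts $v$ to some $\tilde v \in Q$ with $\pi(\tilde v) = v$ and $\|\tilde v\|_Q \leq 1$; Bob maps $w$ to $w' := \pi^T w \in Q^*$; and $\langle v, w\rangle = \langle \tilde v, w'\rangle$. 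Defining $s_j := \langle a_j, \tilde v\rangle$ gives $\|s\|_\infty \leq 1$ by central symmetry of $Q$, while decomposing $w' = \sum_j \gamma_j a_j$ through $Q^* = \mathrm{conv}\{\pm a_j\}$ yields $\|\gamma\|_1 \leq 1$. Hence $\langle v, w\rangle = \langle s, \gamma\rangle$ is an instance of $\IP_{\ell_\infty^{m/2}}$: Bob samples $J \in [m/2]$ with probability $|\gamma_J|/\|\gamma\|_1$ and sends $(J, \mathrm{sign}(\gamma_J))$ in $\Oh(\log m)$ bits, Alice replies with $s_J$, and the estimator $\mathrm{sign}(\gamma_J)\,\|\gamma\|_1\, s_J$ is unbiased with magnitude at most $1$. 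Averaging $\Oh(\varepsilon^{-2})$ such samples gives additive error $\varepsilon$ with constant probability, for a total of $\Oh(\log\xc(P)/\varepsilon^2)$ bits.
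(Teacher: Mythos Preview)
Your approach is essentially the paper's: reduce $\IP_P$ to $\IP_Q$ via the projection (your use of $\pi^T$ is exactly the content of Lemma~\ref{lem:projection}/Proposition~\ref{prop:embedding-preserves-cc}), observe that $Q^* = \conv\{\pm a_j\}$ has $\Oh(\xc(P))$ vertices (Lemma~\ref{lem:vertices-count}), and run the vertex-sampling protocol (Lemma~\ref{lem:few-vertices-ub}); your explicit reduction to $\IP_{\ell_\infty^{m/2}}$ with Bob sampling from $\gamma$ is just the dual presentation of the paper's $\IP_{Q^*}$ with Alice sampling from the convex combination of vertices. One point you engage with more carefully than the paper: the extension $Q$ must itself be origin-symmetric for $\|\cdot\|_Q$ to be a norm, which Lemma~\ref{lem:projection} tacitly assumes but never justifies, whereas you at least flag it and sketch the $Q_0\times Q_0$ symmetrization---though your ``minor additional construction'' enforcing central symmetry of the lift is not actually spelled out and is less innocent than it sounds.
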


We can sketch the proof in a single paragraph --- first by dual of the Proposition~\ref{prop:embedding-preserves-cc}, we show that if $P$ is a projection of a polytope $Q$, then $\cR_{\varepsilon, 1/3}(\IP_P) \leq \cR_{\varepsilon, 1/3}(\IP_Q)$, by symmetry $\cR_{\varepsilon, 1/3}(\IP_{Q}) = \cR_{\varepsilon, 1/3}(\IP_{Q^*})$, and finally with a sampling based argument we can show that if the unit ball for a norm $\|\cdot\|_{Q^*}$ has at most $M$ vertices, then $\cR_{\varepsilon, 1/3}(\IP_{Q^*}) \leq \Oh(\frac{\log M}{\varepsilon^2})$ --- this step is essentially the same as the proof that $\cR_{\varepsilon, 1/3}(\IP_{\ell_1}) \leq \Oh(\frac{\log n}{\varepsilon^2})$. Finally vertices of the dual polytope $Q^*$ correspond exactly to facets of $Q$.

For a convex polytope $P \subset \bbR^n$, specified by matrix $A \in \bbR^{M \times n}$ via $P = \{ v : \forall i\leq M, \inprod{A_i, v} \leq b_i\}$, we define the \emph{slack matrix} $S_P \in \bbR^{|V| \times M}$ with rows indexed by vertices $V$ of the polytope $P$, to be $(S_P)_{v, i} := b_i - \inprod{A_i, v}$ --- clearly this matrix is non-negative. Yannakakis showed that $\xc(P) = \rk^{+}(S_P) + \Oh(1)$, where $\rk^{+}(S)$ is a \emph{non-negative rank} of $S$ --- i.e. the smallest number $k$ such that we can decompose $S = \sum_{i \leq k} v_i w_i^T$ for coordinatewise non-negative vectors $v_i, w_i$ \cite{Yannakakis91}. Therefore, Proposition~\ref{prop:xc-weak} can be equivalently stated as bounding the communication complexity of $\IP_P$ by the non-negative rank of the corresponding slack matrix, i.e. $\cR_{\varepsilon, 1/3}(\IP_P) \leq \Oh(\frac{\log \rk^{+}(S_P)}{\varepsilon^2})$.

It turns out that $\cR_{\varepsilon, 1/3}(\IP_P)$ is in turn lower bounded (up to the dependency on $\varepsilon$) by a closely related quantity, the \emph{approximate non-negative rank} of the same matrix $S_P$.
\begin{definition}{\cite{KMSY14}}
For a non-negative matrix $S \in \bbR^{n\times m}$ and $\varepsilon \geq 0$, we define \emph{approximate non-negative rank} as
\begin{equation*}
    \annr{\varepsilon}(S) := \inf_{S' : \forall_{ij} |S_{ij} - S'_{ij}| \leq \varepsilon} \rk^{+}(S').
\end{equation*}
\end{definition}
Clearly $\annr{\varepsilon}(S) \leq \rk^{+}(S)$. We will show in Section~\ref{sec:extension-complexity} that this former quantity turns out to lower bound the communication complexity of $\IP_P$.
\begin{proposition}
\label{thm:xc-strong}
For any origin-symmetric convex polytope $P$ we have
\begin{equation*}
    \annr{2\varepsilon}(S_P) \leq \cR_{\varepsilon, \varepsilon}(\IP_P) \leq \Oh(\log \varepsilon^{-1} \cR_{\varepsilon, 1/3}(\IP_P)).
\end{equation*}
\end{proposition}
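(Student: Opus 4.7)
The plan is to treat the two inequalities separately, as they go in very different directions.

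The right-hand inequality $\cR_{\varepsilon,\varepsilon}(\IP_P) \leq \Oh(\log \varepsilon^{-1}\,\cR_{\varepsilon,1/3}(\IP_P))$ is routine success amplification for a real-valued protocol: I would take $\Theta(\log \varepsilon^{-1})$ independent executions of the protocol achieving $\cR_{\varepsilon,1/3}(\IP_P)$ and return the median of the answers. A Chernoff bound shows that more than half of the runs fall in $[\inprod{v,w}-\varepsilon,\inprod{v,w}+\varepsilon]$ except with probability $\varepsilon$, and whenever they do, the median lies in the same interval.

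The content is the left inequality, which I read as $\log \annr{2\varepsilon}(S_P) \leq \cR_{\varepsilon,\varepsilon}(\IP_P)$ --- i.e., the protocol produces an approximate nonnegative factorization of $S_P$ of size $2^c$. The approach is the classical rectangle-decomposition of a private-coin protocol tree. Let $\Pi$ be a protocol for $\IP_P$ with $c := \cR_{\varepsilon,\varepsilon}(\IP_P)$ bits; clipping each leaf's output to $[-1,1]$ only improves accuracy, so assume each leaf $\ell$ outputs a deterministic value $a_\ell \in [-1,1]$. Standardly, the probability that $\Pi$ reaches leaf $\ell$ on inputs $(v,w)$ factors as $p_\ell(v)\,q_\ell(w)$ with $p_\ell, q_\ell \geq 0$, and the tree has at most $2^c$ leaves. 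Using origin symmetry, normalize $P = \{x : \inprod{A_i, x} \leq 1 \text{ for all } i\}$, so that for each vertex $v$ and facet $i$ we have $(S_P)_{v,i} = 1 - \inprod{A_i,v}$ and $\inprod{A_i,v} \leq 1$.

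Combining these ingredients via $\sum_\ell p_\ell(v)q_\ell(A_i) = 1$, I define
\begin{equation*}
  S'_{v,i} \;:=\; \sum_\ell (1 - a_\ell)\,p_\ell(v)\,q_\ell(A_i) \;=\; 1 - \mathbb{E}[\Pi(v, A_i)].
\end{equation*}
Every summand is a nonnegative rank-one matrix in $(v,i)$ because $1-a_\ell \in [0,2]$ and $p_\ell,q_\ell \geq 0$, so $\rk^{+}(S') \leq 2^c$. Since $\Pi$ is $\varepsilon$-accurate with probability $1-\varepsilon$ and otherwise deviates by at most $2$, a short calculation yields $|\mathbb{E}[\Pi(v,A_i)] - \inprod{v,A_i}| \leq 2\varepsilon$, so every entry of $S'$ is within $2\varepsilon$ of the corresponding entry of $S_P$, certifying $\log\annr{2\varepsilon}(S_P) \leq c$.

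The one mildly subtle point --- and the closest thing to an obstacle --- is the sharpness of the bound on $|\mathbb{E}[\Pi] - \inprod{v,w}|$: the naive estimate gives $3\varepsilon$ rather than $2\varepsilon$, and to get the stated constant I would either start from a protocol with failure probability $\varepsilon/2$ (a constant multiplicative loss in $c$), or note that symmetric clipping of the output into $[\inprod{v,w}-1,\inprod{v,w}+1]$ removes the biasing contribution of the bad event. Beyond this bookkeeping, the heart of the argument is simply that a private-coin protocol decomposes its acceptance distribution into at most $2^c$ non-negative rank-one pieces, and this decomposition transfers verbatim into an approximate nonnegative factorization of the slack matrix of $P$.
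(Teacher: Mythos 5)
Your argument is essentially correct and follows the same route as the paper's: in both cases one amplifies the $\IP_P$ protocol via the median trick, uses it to compute the slack $1 - \inprod{v,A_i}$ (clipped to $[0,2]$) in expectation, and then converts "protocol computing a non-negative matrix in expectation" into a bound on non-negative rank. The one structural difference is that you inline the rectangle-decomposition proof of this last step (a private-coin protocol tree with $c$ bits yields a decomposition $\sum_\ell p_\ell(v)q_\ell(w) = 1$ into at most $2^c$ non-negative rank-one terms), whereas the paper invokes the characterization $\ecc(S) = \lceil \log \rk^+(S)\rceil$ of Faenza, Fiorini, Grappe and Tiwary \cite{FFGT12} as a black box. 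These are the same argument; yours is self-contained, the paper's is shorter. You are also right to flag the $2\varepsilon$-versus-$3\varepsilon$ bookkeeping: the paper resolves it precisely with your first fix, by amplifying to failure probability $\varepsilon/2$ before taking expectations, and in fact the paper only establishes the end-to-end bound $\log\annr{2\varepsilon}(S_P) \leq \Oh(\log\varepsilon^{-1}\cdot\cR_{\varepsilon,1/3}(\IP_P))$ rather than the sharp middle inequality $\log\annr{2\varepsilon}(S_P) \leq \cR_{\varepsilon,\varepsilon}(\IP_P)$ literally as written (note also that the proposition's left-hand side must be read as $\log\annr{2\varepsilon}(S_P)$, as the proof confirms). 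Your proposed second fix, however, does not work: the protocol cannot clip its output to $[\inprod{v,w}-1,\inprod{v,w}+1]$ because it does not know $\inprod{v,w}$; and even in expectation there is no a priori cancellation of errors in the failure event. Stick with the amplification fix. Finally, a small point worth making explicit: treating $a_\ell$ as a deterministic value per leaf presumes the answer is a function of the transcript; this is either the standard model convention, or can be ensured by having the answering party append an $\varepsilon$-discretized answer to the transcript at cost $\Oh(\log\varepsilon^{-1})$ bits, which is absorbed by the outer amplification. Alternatively, if the output depends on Bob's private input and coins, the summand $q_\ell(w)\bigl(1 - \E[a_\ell \mid \ell, w]\bigr)$ is still a non-negative function of $w$ alone, so the rank-one decomposition survives unchanged.
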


\section{Preliminaries and notation}
\atodoin{We should define communication complexity...}
All the results and notations in this section are standard.

In the sequel we write $w = v \pm \varepsilon$ to indicate that $|w - v| \leq \varepsilon$. We will use notation $A \lesssim B$ to denote that there exist a universal constant $K$ such that $A \leq K B$. 

For a norm $N$, we write $B_N := \{ v : \|v\|_N \leq 1\}$ to denote the unit ball with respect to norm $N$, and if $N = \ell_p$ we usually abbreviate $B_p$ instead of $B_{\ell_p}$. 

We will also use $\tilde{O}(\cdot)$ notation to supress polylogartihmic factors, i.e. $g = \tilde{\Oh}(f)$ if and only if $g = \Oh(f \log^{\Oh(1)} f)$.
\subsection{Symmetric norms}
The following notion of symmetric norm will be used throughout the paper.
\begin{definition}
A norm $N$ on $\bbR^n$ is a \emph{symmetric norm} if and only if it satisfies the following two properties:
\begin{enumerate}
    \item Invariant under permutations: $\|v\|_N = \|\pi(v)\|_N$ for any permutation $\pi : [n] \to [n]$, where $\pi(v)$ is a vector with $\pi(v)_i = v_{\pi(i)}$.
    \item Invariant under coordinate negation: $\|v\|_N = \| |v| \|_N$ where $|v|$ is a vector with $|v|_i = |v_i|$.
\end{enumerate}
We will deal only with symmetric norms normalized as $\|e_1\|_N = 1$ for a standard basis vector $e_1$, and this normalization will not be mentioned explicitly anymore.\atodo{You say that all notations are standard. Is this as well? otherwise it is sub-optimal that it is hidden here.}
\end{definition}
\begin{claim}
    Any symmetric norm is coordinate-wise monotone: if vectors $v$ and $w$ satisfy $\forall_i |v_i| \leq |w_i|$, then $\|v\|_N \leq \|w\|_N$.
\end{claim}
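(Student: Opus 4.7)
The plan is to reduce the claim to a single-coordinate statement: for any $w \in \bbR^n$, any index $i$, and any scalar $\lambda \in [-1,1]$, the vector $w'$ defined by $w'_j = w_j$ for $j \neq i$ and $w'_i = \lambda w_i$ satisfies $\|w'\|_N \leq \|w\|_N$. Once this is established, the conclusion follows by iterating: for each coordinate $i$, write $v_i = \lambda_i w_i$ with $\lambda_i \in [-1,1]$, which is possible because $|v_i| \leq |w_i|$ (and when $w_i = 0$ the monotonicity hypothesis forces $v_i = 0$, so we may take $\lambda_i = 0$). Applying the single-coordinate step $n$ times to transform $w$ into $v$ yields $\|v\|_N \leq \|w\|_N$.

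For the single-coordinate step, I would use a convex-combination trick that leverages coordinate-negation invariance together with the triangle inequality. Let $w^{(i)}$ denote the vector obtained from $w$ by flipping the sign of its $i$-th coordinate. By the negation-invariance property, $\|w^{(i)}\|_N = \|w\|_N$. A direct check shows that
\[
    w' \;=\; \tfrac{1+\lambda}{2}\, w \;+\; \tfrac{1-\lambda}{2}\, w^{(i)},
\]
since on coordinates $j \neq i$ the two terms agree and sum (with weights summing to $1$) to $w_j$, while on coordinate $i$ they cancel to give $\lambda w_i$. Because $\lambda \in [-1,1]$, both weights $\tfrac{1\pm\lambda}{2}$ are nonnegative, so the triangle inequality gives
\[
    \|w'\|_N \;\leq\; \tfrac{1+\lambda}{2}\|w\|_N + \tfrac{1-\lambda}{2}\|w^{(i)}\|_N \;=\; \|w\|_N.
\]

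There is no real obstacle here; the only subtlety is handling the degenerate case $w_i = 0$, which is immediately resolved by the hypothesis. It is worth noting that the argument uses only the triangle inequality (i.e., the fact that $N$ is a norm) and the negation-invariance half of the symmetric-norm definition; permutation invariance is not needed for this particular claim.
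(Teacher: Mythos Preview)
Your proof is correct and follows essentially the same approach as the paper: reduce to changing a single coordinate, express the modified vector as a convex combination of $w$ and its coordinate-sign-flipped version, then apply negation invariance and the triangle inequality. Your writeup is in fact slightly cleaner in handling the sign of $\lambda$ and the degenerate case $w_i=0$.
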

\begin{proof}
It is enough to show this inequality in the special case where $v_i = w_i$ for $i \geq 2$, and $|v_1| \leq |w_1|$. In this case, we can define $w' \in \bbR^n$ as $w'_1 := -w_1$, and $w'_i := w_i$ for $i\geq 2$. Clearly, $\|w'\|_N = \|w\|_N$, and moreover $v$ is a convex combination of $w$ and $w'$: $v = \lambda w + (1-\lambda) w'$ where $\lambda = \frac{1}{2} + \frac{|v_1|}{2 |w_1|}$. Hence, by triangle inequality $\|v\|_N \leq \lambda \|w\|_N + (1 - \lambda) \|w'\|_N = \|w\|_N$
\end{proof}

\subsection{Probabilistic Tools}
For an $\bbR$-valued random variable $Z$ and $p\geq 1$, we will consider the $p$-norm $\|Z\|_p := (\mathbb{E} |Z|^p)^{1/p}$, and the weak $p$-norm:
\begin{equation*}
    \|Z\|_{p, \infty} := \sup_{\lambda} \Pr(|Z| \geq \lambda)^{1/p} \lambda.
\end{equation*}
\begin{fact}[Monotonicity of $L_p$]
For $1 \leq p \leq p' \leq \infty$, we have $\|Z\|_{p} \leq \|Z\|_{p'}$ and $\|Z\|_{p,\infty} \leq \|Z\|_{p', \infty}$.
\end{fact}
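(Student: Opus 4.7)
The plan is to treat the two inequalities independently, since each reduces to a textbook computation once the right tool is identified.

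For the strong monotonicity $\|Z\|_p \leq \|Z\|_{p'}$, I would invoke Jensen's inequality applied to the convex function $\phi(t) = t^{p'/p}$ (convex on $[0,\infty)$ because $p'/p \geq 1$) and the non-negative random variable $|Z|^p$. This yields
\begin{equation*}
    (\mathbb{E}|Z|^p)^{p'/p} \;=\; \phi(\mathbb{E}|Z|^p) \;\leq\; \mathbb{E}\,\phi(|Z|^p) \;=\; \mathbb{E}|Z|^{p'},
\end{equation*}
and raising both sides to the power $1/p'$ recovers the claim. Equivalently, one could apply H\"older's inequality with conjugate exponents $p'/p$ and $(p'/p)/(p'/p - 1)$ to the pair $(|Z|^p, 1)$. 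The endpoint $p' = \infty$ is handled separately by the trivial bound $\mathbb{E}|Z|^p \leq (\esssup |Z|)^p$, so that $\|Z\|_p \leq \esssup|Z| = \|Z\|_\infty$.

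For the weak-norm inequality $\|Z\|_{p,\infty} \leq \|Z\|_{p',\infty}$, I would exploit that $\Pr(|Z|\geq \lambda) \in [0,1]$: for any number $x \in [0,1]$ and exponents $0 < a \leq b$ one has $x^b \leq x^a$. Applying this with $a = 1/p'$ and $b = 1/p$ (valid since $p \leq p'$ gives $1/p \geq 1/p'$), pointwise in $\lambda \geq 0$ we obtain
\begin{equation*}
    \lambda\,\Pr(|Z|\geq \lambda)^{1/p} \;\leq\; \lambda\,\Pr(|Z|\geq \lambda)^{1/p'},
\end{equation*}
and taking the supremum over $\lambda$ on both sides yields the desired inequality. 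The case $p' = \infty$ again follows from the convention $\Pr(|Z|\geq \lambda)^{0} = \mathbf{1}[\Pr(|Z|\geq \lambda) > 0]$, which recovers the essential supremum.

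There is no substantive obstacle in this argument: both parts are one-line applications of Jensen's inequality and a monotonicity observation for numbers in $[0,1]$, respectively. The statement is included only to fix notation for later use.
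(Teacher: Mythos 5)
Your proof is correct. The paper states this as a \textbf{Fact} without proof, so there is nothing to compare against, but both halves of your argument are the standard ones: Jensen with $t \mapsto t^{p'/p}$ for the strong norms (valid because the underlying measure is a probability measure), and the elementary observation that $x \mapsto x^{1/p}$ is decreasing in $p$ for $x \in [0,1]$ applied to $x = \Pr(|Z|\geq\lambda)$ for the weak norms, with both endpoint cases $p'=\infty$ handled correctly.
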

Markov inequality says that $\|Z\|_{p, \infty} \leq \|Z\|_{p}$, and this can be weakly inverted:
\begin{lemma}
\label{lem:weak-markov-inverse}
For any $1 \leq p' < p$, we have
\begin{equation*}
    \|Z\|_{p'} \leq \left(\frac{p}{p - p'}\right)^{1/p'} \|Z\|_{p, \infty}
\end{equation*}
\end{lemma}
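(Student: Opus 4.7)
The plan is to use the layer-cake representation of $\|Z\|_{p'}^{p'}$ together with the pointwise bound on the tail coming from the definition of $\|Z\|_{p,\infty}$. Let $A := \|Z\|_{p,\infty}$. By the definition of the weak $p$-norm we have the two-sided pointwise tail bound
\[
\Pr(|Z| \geq \lambda) \leq \min\!\left(1,\ (A/\lambda)^p\right),
\]
where the trivial bound $1$ is tighter for $\lambda < A$ and the weak-norm bound is tighter for $\lambda \geq A$.

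Next, I would write
\[
\|Z\|_{p'}^{p'} \;=\; \mathbb{E}|Z|^{p'} \;=\; p' \int_0^\infty \lambda^{p'-1}\, \Pr(|Z| \geq \lambda)\, d\lambda
\]
and split the integral at $\lambda = A$. The first piece $\int_0^A p'\lambda^{p'-1}\, d\lambda$ evaluates to $A^{p'}$, and the second piece is
\[
p'\int_A^\infty \lambda^{p'-1}\cdot (A/\lambda)^p\, d\lambda \;=\; p' A^p \int_A^\infty \lambda^{p'-1-p}\, d\lambda \;=\; \frac{p'}{p-p'}\,A^{p'},
\]
where the integral converges precisely because $p' < p$. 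Summing gives $\|Z\|_{p'}^{p'} \leq \left(1 + \tfrac{p'}{p-p'}\right) A^{p'} = \tfrac{p}{p-p'} A^{p'}$, and taking $p'$-th roots yields the claimed bound.

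The main (and essentially only) subtlety is choosing where to split the integral so that both regimes of the tail bound are used in the range where they are non-trivial; splitting at any other point would give a worse constant, and in particular splitting at a point with $\lambda < A$ would make the weak-norm tail bound exceed $1$, while splitting at $\lambda > A$ would waste the $\lambda$-decay in the interval $[A,\lambda]$. I do not expect any other obstacle.
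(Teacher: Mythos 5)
Your proof is correct and follows essentially the same approach as the paper: bound the tail by $\min(1,(A/\lambda)^p)$, express $\|Z\|_{p'}^{p'}$ via the layer-cake/tail-integral formula, split at $\lambda=A$, and evaluate both pieces. The only cosmetic difference is that you invoke the layer-cake identity directly, whereas the paper derives it by integration by parts and separately argues that the boundary term $\Pr(|Z|\geq\lambda)\lambda^{p'}$ vanishes as $\lambda\to\infty$; both routes are standard and yield the same constant $\left(\frac{p}{p-p'}\right)^{1/p'}$.
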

\begin{proof}
We assume that $\|Z\|_{p, \infty} < \infty$, because otherwise the inequality is trivial.


Integration by parts yields
\begin{align*}
    \E |Z|^{p'} & = - \lim_{\lambda \to \infty} \Pr(|Z| \geq \lambda) \lambda^{p'} +  p' \int_0^\infty \lambda^{p'-1} \Pr(|Z| \geq \lambda) \d\lambda.
\end{align*}
We can bound $\Pr(|Z| \geq \lambda) \lambda^{p'} \leq \|Z\|_{p, \infty}^p \lambda^{p' - p} \to^{\lambda \to \infty} 0$.

For the second term
\begin{align*}
    p' \int_0^\infty \lambda^{p'-1} \Pr(|Z| \geq \lambda) \d\lambda & \leq p' \int_{0}^{\infty} \lambda^{p' - 1} \min(1, \|Z\|_{p, \infty}^p \lambda^{-p}) \d \lambda \\
    & \leq p' \left( \int_0^{\|Z\|_{p, \infty}} \lambda^{p' - 1} \d \lambda + \|Z\|_{p, \infty}^p \int_{\|Z\|_{p, \infty}}^{\infty} \lambda^{p' - 1 - p} \d \lambda \right) \\
    & \leq p' \left( \frac{1}{p'} \|Z\|_{p, \infty}^{p'} + \frac{1}{p - p'} \|Z\|_{p, \infty}^{p'}\right) \\
    & \leq \frac{p}{p - p'} \|Z\|_{p, \infty}^{p'}.
\end{align*}
%
\end{proof}

The following lemma is also standard, we include the proof for completeness.

\begin{lemma}
		\label{lem:type}
		Let $1 \leq q \leq 2$, and $Z_1, Z_2, \ldots Z_s$ be independent random variables in $\bbR$ satisfying $\E Z_i = 0$. Then
		\begin{equation*}
			\|\sum_i Z_i\|_q \lesssim (\sum_i \|Z_i\|_q^q)^{1/q}.
		\end{equation*}
		In particular, if $Z_i$ are i.i.d., we have $\|\sum_{i\leq s} Z_i\|_q \lesssim s^{1/q} \|Z_1\|_q$.
\end{lemma}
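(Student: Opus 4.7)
The plan is to follow the classical symmetrization-plus-Khintchine argument establishing that scalar $L_q$ has type $q$ for $q\in[1,2]$. First I would symmetrize. Let $(Z_i')$ be an independent copy of $(Z_i)$. Since $\E Z_i' = 0$, Jensen's inequality applied to the conditional expectation in $Z'$ gives
\begin{equation*}
  \left\|\sum_i Z_i\right\|_q^q = \E_Z\Bigl|\sum_i \bigl(Z_i - \E_{Z'} Z_i'\bigr)\Bigr|^q \leq \E_Z\E_{Z'}\Bigl|\sum_i (Z_i - Z_i')\Bigr|^q.
\end{equation*}
Each $Z_i - Z_i'$ is symmetric, so the joint law of $(Z_i - Z_i')_i$ is invariant under multiplication by independent Rademacher signs $\epsilon_i$. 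Inserting these signs and applying the triangle inequality yields
\begin{equation*}
  \left\|\sum_i Z_i\right\|_q \leq \left\|\sum_i \epsilon_i(Z_i - Z_i')\right\|_q \leq 2\left\|\sum_i \epsilon_i Z_i\right\|_q.
\end{equation*}

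Next I would condition on the $Z_i$ and invoke Khintchine's inequality on the resulting Rademacher sum: for any reals $a_i$ and any $q\geq 1$, $\E_\epsilon|\sum_i \epsilon_i a_i|^q \lesssim (\sum_i a_i^2)^{q/2}$, with a universal constant for $q\in[1,2]$. Setting $a_i = Z_i$ and taking expectation in $Z$ we get
\begin{equation*}
  \left\|\sum_i \epsilon_i Z_i\right\|_q^q \lesssim \E_Z\Bigl(\sum_i Z_i^2\Bigr)^{q/2}.
\end{equation*}

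Finally, since $q \leq 2$ we have $q/2 \leq 1$, so $t \mapsto t^{q/2}$ is concave and subadditive on $[0,\infty)$; thus pointwise $\bigl(\sum_i Z_i^2\bigr)^{q/2} \leq \sum_i |Z_i|^q$. Combining the three displays and taking $q$-th roots gives $\|\sum_i Z_i\|_q \lesssim (\sum_i \|Z_i\|_q^q)^{1/q}$, which is the claim. The only substantive ingredient is Khintchine's inequality; symmetrization and the subadditivity of the concave power $t^{q/2}$ are routine, so I do not foresee any real obstacle. (Note that the hypothesis $q \leq 2$ is used precisely in the last step: for $q > 2$ the inequality $(\sum Z_i^2)^{q/2} \leq \sum |Z_i|^q$ reverses, which is consistent with the fact that $L_q$ has type $2$ rather than type $q$ when $q > 2$.)
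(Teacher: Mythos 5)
Your proposal follows exactly the same route as the paper: symmetrization via Jensen against an independent copy, inserting Rademacher signs by the symmetry of $Z_i - Z_i'$, conditioning and applying Khintchine, and finally the pointwise bound $(\sum Z_i^2)^{q/2}\le\sum|Z_i|^q$ from $q\le 2$. The argument is correct and matches the paper's proof step for step.
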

\begin{proof}
		Let $Z'_i$ be all independent, and $Z'_i$ distributed identically as $Z_i$, and let $\varepsilon_i$ be again independent Rademacher random variables. We have
		\begin{equation*}
				\|\sum_i Z_i\|_q \overset{(1)}{=} \|\sum_i Z_i - \E Z_i\|_q \overset{(2)}{\leq} \|\sum_i Z_i - Z'_i\|_q \overset{(3)}{=} \|\sum_i \varepsilon_i (Z_i - Z'_i)\|_q \overset{(4)}{\leq} 2 \|\sum_i\varepsilon_i Z_i\|_q.
		\end{equation*}
		The $(1)$ equality follows as $Z_i$ are mean $0$.
		For the $(2)$ inequality, since $q \geq 1$, the function $x^q$ is convex, we can pull out the inner expectation (denote $Z = \sum Z_i$, and $Z' = \sum Z_i'$, then by Jensen we have $\|Z-\mathbb{E}_{Z'}Z'\|_{q}=(\mathbb{E}_{Z}(\mathbb{E}_{Z'}[Z-Z'])^{q})^{1/q}\le(\mathbb{E}_{Z}\mathbb{E}_{Z'}[(Z-Z')^{q}]))^{1/q}=\|Z-\mathbb{E}_{Z'}Z'\|_{q}$). 
		For equality $(3)$, note that since $Z_i$ and $Z_i'$ have the same distribution and are independent, then distribution of $Z_i - Z_i'$ is symmetric. I.e. $Z_i - Z_i'$ has the same distribution as $-(Z_i - Z_i') = Z_i' - Z_i$. In particular, 
		also $\varepsilon_i (Z_i - Z_i')$ has also the same distribution.
		Finally inequality $(4)$ is just a triangle inequality.
		
		The bound $\|\sum \varepsilon_i Z_i \|_q \leq (\sum \|Z_i\|_q^q)^{1/q}$ is just the type-$q$ inequality for the norm $L_q$. To see why it is true, we can condition on $Z_i$ and apply the Khintchine inequality (which says that for fixed numbers $a_1, \ldots a_n$ and independent Rademacher random variables $\varepsilon_i$, we have $\E \inprod{\varepsilon, a}^q \lesssim (\sum a_i^2)^{q/2})$ to get
		\begin{equation*}
		    (\E \inprod{\varepsilon, Z}^q)^{1/q} = (\E_Z \E_\varepsilon \inprod{\varepsilon, Z}^q)^{1/q} \lesssim (\E_Z (\sum Z_i^2)^{q/2})^{1/q},
		\end{equation*}
		Since $q \leq 2$, we have $(\sum Z_i^2)^{q/2} \leq \sum Z_i^q$, hence
		\begin{equation*}
		    \|\sum_i\varepsilon_i Z_i\|_q \lesssim (\sum_i \E Z_i^q)^{1/q}.
		\end{equation*}
\end{proof}

\subsection{Duality}
\begin{lemma}[Finite dimensional Hahn-Banach theorem]
\label{lem:hahn-banach}
Consider a normed space $(\bbR^n, \| \cdot \|_N)$, and a subspace $V \subset \bbR^n$ with the induced norm given by the restriction of $\|\cdot\|_N$ to this subspace.

Then the dual $V^*$ is isometric with the quotient $\bbR^n / V^\perp$, with the norm given by $\|w + V^\perp\|_{V^*} = \min_{v^{\perp}\in V^{\perp}} \|w + v^{\perp}\|_{N^*}$.

In other words: for any vector $w \in \bbR^n$, we have $\max_{v \in V \cap B_N} \inprod{v, w} = \min_{v^\perp \in V^{\perp}} \|w + v^\perp\|_{N^*}$.
\end{lemma}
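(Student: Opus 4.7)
The plan is to first reduce the statement to the key equality
\begin{equation*}
    \max_{v \in V \cap B_N} \inprod{v, w} \;=\; \min_{v^\perp \in V^\perp} \|w + v^\perp\|_{N^*},
\end{equation*}
and then prove this equality, with the nontrivial direction handled by a separating hyperplane argument. The identification $V^* \cong \bbR^n / V^\perp$ as a vector space is routine: every linear functional on $V$ extends to a functional on $\bbR^n$ (of the form $v \mapsto \inprod{v,w}$ for some $w \in \bbR^n$), and two extensions $w_1, w_2$ give the same functional on $V$ iff $w_1 - w_2 \in V^\perp$. Under this identification, the dual norm of a coset $w + V^\perp$ is by definition $\sup_{v \in V \cap B_N} \inprod{v, w}$, so the lemma reduces to the displayed equality above (noting that both sup and inf are attained by compactness of $V \cap B_N$ and coercivity of $v^\perp \mapsto \|w + v^\perp\|_{N^*}$ on the finite-dimensional subspace $V^\perp$).

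The easy direction $\max \leq \min$ follows directly from the definition of the dual norm: for any $v \in V \cap B_N$ and any $v^\perp \in V^\perp$, we have $\inprod{v, w} = \inprod{v, w + v^\perp} \leq \|v\|_N \|w + v^\perp\|_{N^*} \leq \|w + v^\perp\|_{N^*}$, and taking sup over $v$ and inf over $v^\perp$ yields the claim.

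For the reverse inequality, let $\alpha := \max_{v \in V \cap B_N} \inprod{v, w}$ and assume toward contradiction that $\|w + v^\perp\|_{N^*} > \alpha$ for every $v^\perp \in V^\perp$. Equivalently, $w$ lies outside the set $C := \alpha B_{N^*} + V^\perp$. The set $C$ is convex; it is also closed because it is the Minkowski sum of the compact convex $\alpha B_{N^*}$ and the closed subspace $V^\perp$. By the separating hyperplane theorem there exist $u \in \bbR^n$ and $c \in \bbR$ with $\inprod{u, w} > c$ and $\inprod{u, y} \leq c$ for every $y \in C$. Since $V^\perp$ is a subspace and $\alpha B_{N^*}$ contains $0$, the boundedness of $\inprod{u, \cdot}$ on $C$ forces $\inprod{u, v^\perp} = 0$ for all $v^\perp \in V^\perp$, i.e. $u \in (V^\perp)^\perp = V$. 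Moreover, $\sup_{y \in \alpha B_{N^*}} \inprod{u, y} = \alpha \|u\|_N$, so $c \geq \alpha \|u\|_N$ and therefore $\inprod{u, w} > \alpha \|u\|_N$. Then $u \neq 0$, and normalizing gives $\inprod{u / \|u\|_N, w} > \alpha$ with $u/\|u\|_N \in V \cap B_N$, contradicting the definition of $\alpha$.

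The main (and essentially only) obstacle is ensuring that the separation applies, which comes down to verifying that $\alpha B_{N^*} + V^\perp$ is closed. This is where compactness of $\alpha B_{N^*}$ (a finite-dimensional feature) is essential; once that is in place, the rest of the argument is mechanical.
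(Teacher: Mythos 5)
Your proof is correct and takes essentially the same approach as the paper: the easy direction via the definition of the dual norm, and the hard direction via a separating hyperplane argument, with the observation that the separating functional must lie in $V = (V^\perp)^\perp$. The only cosmetic difference is that you separate $w$ from $\alpha B_{N^*} + V^\perp$ directly in $\bbR^n$, whereas the paper passes to the quotient $\bbR^n / V^\perp$ and separates $\pi(w)$ from $\pi(B_{N^*})$; these are the same picture, and your writeup is if anything a bit more careful about why the relevant set is closed and why the separating vector is nonzero and lies in $V$.
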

\begin{proof}
As usual in the duality statement, one inequality is easy: clearly, for any $v \in V \cap B_N$, and any $v^\perp \in V^{\perp}$, we have
\begin{equation*}
    \inprod{v,w} = \inprod{v,w + v^{\perp}} \leq \|v\|_N \|w + v^{\perp}\|_{N^*} \leq \|w + v^{\perp}\|_{N^*},
\end{equation*}
so
\begin{equation*}
    \max_{v \in V \cap B_N} \inprod{v, w} \leq \min_{v^{\perp} \in V^{\perp}} \|w + v^{\perp}\|_{N^*}.
\end{equation*}

For the other direction, consider a projection $\pi : \bbR^n \to \bbR^{n} / V^{\perp}$. Take a vector $w$ such that $\max_{v \in V \cap B_N} \inprod{v,w} = 1$, and assume for the sake of reaching a contradiction that $\min_{v^{\perp} \in V^{\perp}} \|w + v^{\perp}\|_{N^*} > 1$. Equivalently, this is saying that $\pi(w) \not \in \pi(B_{N^*})$, hence by a separating hyperplane theorem we can find a hyperplane separating $\pi(w)$ from $\pi(B_{N^*})$. Such a hyperplane will be parametrized by $r \in V$, say with $\|r\|_N = 1$, s.t. $\inprod{r,w} > \max_{w' \in B_{N^*}} \inprod{v, w'} = \|r\|_{N} = 1$. But this is a contradiction with $\max_{v \in V \cap B_N} \inprod{v, w} = 1$.
\end{proof}

The following is basically a restatement of the more involved direction in the above duality.
\begin{claim}
\label{claim:duality}
If $p: X \to Y$ is an isometric embedding, then for any $w \in X^*$ we can find $w' \in Y^*$, with $\|w'\|_{Y^*} = \|w\|_{X^*}$, s.t. for any $x \in X$ we have $\inprod{p(x), w'} = \inprod{x, w}$.
\end{claim}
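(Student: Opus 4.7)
The plan is to deduce this claim directly from the finite dimensional Hahn--Banach theorem (Lemma \ref{lem:hahn-banach}) applied to the subspace $V := p(X) \subset Y$. Since $p$ is an isometric embedding, $V$ equipped with the norm induced from $Y$ is isometric to $X$ via $p$, so in particular $p^{-1} : V \to X$ is a well-defined linear isometry.

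Given $w \in X^*$, I would first transport it to a linear functional $\ell$ on $V$ by setting $\ell(v) := \inprod{p^{-1}(v), w}$ for $v \in V$. By the isometry, the operator norm of $\ell$ on $(V, \|\cdot\|_Y|_V)$ equals $\|w\|_{X^*}$. Next I would extend $\ell$ in some arbitrary way to a linear functional $w'' \in Y^*$ (for instance by picking a complementary subspace and setting $w''$ to vanish on it); the extension exists simply because $V$ is a linear subspace of finite-dimensional $Y$, but a priori $\|w''\|_{Y^*}$ could be much larger than $\|\ell\|_{V^*}$.

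The key step is then to correct $w''$ by an element of $V^\perp$ to bring the norm down to the right value. Lemma \ref{lem:hahn-banach} gives exactly the statement I need: the induced norm $\|\ell\|_{V^*}$ equals $\min_{v^\perp \in V^\perp} \|w'' + v^\perp\|_{Y^*}$, and by finite dimensionality and continuity of the norm this minimum is attained at some $v^{\perp,*} \in V^\perp$. Defining $w' := w'' + v^{\perp,*}$, I would check that $\|w'\|_{Y^*} = \|\ell\|_{V^*} = \|w\|_{X^*}$, and that for every $x \in X$, since $p(x) \in V$ is orthogonal to $v^{\perp,*}$, one has $\inprod{p(x), w'} = \inprod{p(x), w''} = \ell(p(x)) = \inprod{x, w}$, as required.

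There is essentially no technical obstacle here: all the content is in Lemma \ref{lem:hahn-banach}, and the claim is just a repackaging of the extension formulation of Hahn--Banach in the language of an isometric embedding rather than a subspace inclusion. The only thing to be mildly careful about is making sure the minimum in the quotient norm formula is actually attained (so that $w'$ is a genuine element of $Y^*$, not merely an infimum over extensions), which is immediate in finite dimensions.
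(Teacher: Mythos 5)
Your proposal is correct and follows exactly the route the paper intends: the paper gives no separate proof for this claim, instead remarking that it is a restatement of the harder direction of Lemma~\ref{lem:hahn-banach}, which is precisely the extension-plus-correction argument you spell out. Your extra care about attainment of the minimum (valid in finite dimensions by coercivity and continuity of the norm on the affine subspace $w'' + V^\perp$) is a worthwhile detail the paper leaves implicit.
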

\begin{remark}
Note that the map which sends $w \in X^*$ to $w' \in Y^*$ described above is usually not linear.
\end{remark}

For a convex, origin-symmetric body $K \subset V$, we define a dual body $K^* \subset V^*$ as $K^* := \{ w : \forall v\in K,\, \inprod{w, v} \leq 1\}$. If $N$ is a norm, we have $B_N^* = B_{N^*}$, moreover for any pair of bodies we have $K_1 \subset K_2 \iff K_2^* \subset K_1^*$, and $K^{**} = K$. Finally, for arbitrary subset $S\subset V$ define $\conv(S)$ to be the convex hull of $S$.

\begin{lemma}
\label{lem:dual-intersection}
If $K_1$ and $K_2$ are convex origin-symmetric bodies, then
\begin{equation*}
    (K_1 \cap K_2)^* = \conv(K_1^* \cup K_2^*).
\end{equation*}
\end{lemma}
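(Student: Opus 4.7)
The plan is to prove the identity via the bipolar theorem $K^{\ast\ast} = K$ (which holds for closed, convex, origin-symmetric bodies; in finite dimensions closedness is automatic since $K_1, K_2$ are compact and convex hulls of compact sets are compact).

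For the easy inclusion $\conv(K_1^\ast \cup K_2^\ast) \subseteq (K_1 \cap K_2)^\ast$, I would just note that any $w \in K_i^\ast$ satisfies $\inprod{v,w} \leq 1$ for all $v \in K_i \supseteq K_1 \cap K_2$, so $K_i^\ast \subseteq (K_1 \cap K_2)^\ast$ for $i = 1,2$, and the right-hand side is convex so it contains the convex hull.

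For the reverse inclusion I would argue by taking duals a second time. Let $L := \conv(K_1^\ast \cup K_2^\ast)$, which is a closed, convex, origin-symmetric body. The key observation is that a linear inequality $\inprod{v, w} \leq 1$ holding for all $w$ in a set $S$ is equivalent to it holding for all $w \in \conv(S)$, since both $w \mapsto \inprod{v,w}$ and the inequality are preserved under convex combinations. Therefore
\begin{equation*}
L^\ast \;=\; \{v : \forall w \in K_1^\ast \cup K_2^\ast,\; \inprod{v,w} \leq 1\} \;=\; (K_1^\ast)^\ast \cap (K_2^\ast)^\ast \;=\; K_1 \cap K_2,
\end{equation*}
using the bipolar theorem on each $K_i$ in the last step. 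Now applying the bipolar theorem to $L$ itself gives
\begin{equation*}
L \;=\; L^{\ast\ast} \;=\; (K_1 \cap K_2)^\ast,
\end{equation*}
which is exactly the desired equality.

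There is no real obstacle here: the only subtlety is ensuring that $L$ is closed so that the bipolar theorem applies, and this is immediate in finite dimensions because $K_1^\ast \cup K_2^\ast$ is compact (as $K_1, K_2$ are origin-symmetric bodies, hence have nonempty interiors, so their polars are bounded), and the convex hull of a compact set in $\bbR^n$ is compact. The rest is the standard bipolar calculus.
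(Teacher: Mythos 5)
Your proof is correct and takes essentially the same approach as the paper: both establish the easy inclusion $\conv(K_1^\ast\cup K_2^\ast)\subseteq(K_1\cap K_2)^\ast$ directly, and both handle the reverse inclusion by applying the bipolar theorem $K^{\ast\ast}=K$ after computing that the polar of $\conv(K_1^\ast\cup K_2^\ast)$ equals $K_1\cap K_2$ (the paper phrases this as ``equivalently we can check $\conv(K_1^\ast\cup K_2^\ast)^\ast\subset K_1\cap K_2$,'' which is the same duality step you make explicit). Your added remarks about closedness of the convex hull are a welcome but minor clarification.
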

\begin{proof}
    We will first show that $(K_1 \cap K_2)^* \subset \conv(K_1^* \cup K_2^*)$. Equivalently we can check that $\conv(K_1^* \cup K_2^*)^* \subset K_1 \cap K_2$. To this end, take arbitrary $v \in \conv(K_1^* \cup K_2^*)^*$. By definition, for any $w \in \conv(K_1^* \cup K_2^*)$ we have $\inprod{v, w} \leq 1$. In particular since $K_1^* \subset \conv(K_1^* \cup K_2^*)$, we have $\sup_{w \in K_1^*} \inprod{v, w} \leq 1$, therefore $v \in K_1^{**} = K_1$, and symmetrically $v \in K_2$.
    
    For the other direction, we wish to show that $\conv(K_1^* \cup K_2^*) \subset (K_1 \cap K_2)^*$. Note that since $K_1 \cap K_2 \subset K_1$, we have $K_1^* \subset (K_1 \cap K_2)^*$, and symmetrically $K_2^* \subset (K_1 \cap K_2)^*$.
    
    Since $(K_1 \cap K_2)^*$ is a convex body such that $K_1^* \cup K_2^* \subset (K_1 \cap K_2)^*$, we also have $\conv(K_1^* \cup K_2^*) \subset (K_1 \cap K_2)^*$.
\end{proof}

\begin{corollary}
\label{cor:decomposition}
Consider two arbitrary norms $N_1, N_2$ on a linear space $V$, and a norm $N = \max(N_1, N_2)$. We can decompose any $w \in B_{N^*}$ as $w = w_1 + w_2$, where $\|w_1\|_{N_1^*} + \|w_2\|_{N_2^*} \leq 1$. 
\end{corollary}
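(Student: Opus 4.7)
The plan is to derive this decomposition as a direct consequence of the duality lemma for intersections that was just proved (\Cref{lem:dual-intersection}). The starting observation is that, since $N = \max(N_1, N_2)$, the unit ball of $N$ is exactly the intersection $B_N = B_{N_1} \cap B_{N_2}$: indeed $\|v\|_N \leq 1$ iff $\|v\|_{N_1} \leq 1$ and $\|v\|_{N_2} \leq 1$. Dualizing via \Cref{lem:dual-intersection} then immediately gives
\begin{equation*}
B_{N^*} = (B_{N_1} \cap B_{N_2})^* = \conv\!\bigl(B_{N_1^*} \cup B_{N_2^*}\bigr).
\end{equation*}

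Given any $w \in B_{N^*}$, I would use the above identity to write $w$ as a (finite, by Carath\'eodory) convex combination $w = \sum_{i} \lambda_i u_i$ where $\lambda_i \geq 0$, $\sum_i \lambda_i = 1$, and each $u_i$ lies in $B_{N_1^*} \cup B_{N_2^*}$. Partition the indices into two disjoint sets $I_1, I_2$ so that $u_i \in B_{N_1^*}$ for $i \in I_1$ and $u_i \in B_{N_2^*}$ for $i \in I_2$ (if some $u_i$ lies in both balls, assign it arbitrarily to one side). Set
\begin{equation*}
w_1 := \sum_{i \in I_1} \lambda_i u_i, \qquad w_2 := \sum_{i \in I_2} \lambda_i u_i,
\end{equation*}
so that $w = w_1 + w_2$ by construction.

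To control the norms, let $\alpha := \sum_{i \in I_1} \lambda_i$, so $\sum_{i \in I_2} \lambda_i = 1 - \alpha$. If $\alpha > 0$, then $w_1/\alpha$ is a convex combination of elements of $B_{N_1^*}$, hence lies in $B_{N_1^*}$ by convexity, giving $\|w_1\|_{N_1^*} \leq \alpha$; if $\alpha = 0$, then $w_1 = 0$ and the bound is trivial. Symmetrically $\|w_2\|_{N_2^*} \leq 1 - \alpha$. Adding yields $\|w_1\|_{N_1^*} + \|w_2\|_{N_2^*} \leq 1$, as required.

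There is no real obstacle: all the content is in \Cref{lem:dual-intersection}, and the corollary is essentially an unpacking of the definition of convex hull together with homogeneity of norms. The only mildly delicate point is the edge case where all the mass in the convex combination lies on one side (i.e.\ $\alpha \in \{0,1\}$), which is handled by simply taking the corresponding $w_i$ to be $0$.
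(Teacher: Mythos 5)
Your proof is correct and spells out exactly the derivation the paper leaves implicit: it applies \Cref{lem:dual-intersection} with $K_i = B_{N_i}$ to get $B_{N^*} = \conv(B_{N_1^*} \cup B_{N_2^*})$, and then the decomposition and the bound $\|w_1\|_{N_1^*} + \|w_2\|_{N_2^*} \leq 1$ follow by grouping the terms of a convex combination and using homogeneity, including the edge cases $\alpha \in \{0,1\}$. The paper states this as an unproved corollary of \Cref{lem:dual-intersection}, so your argument is precisely the intended one.
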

\section{Reductions and communication complexity lower bounds}
In this section we prove that a bounded distortion embedding of a normed space $(\bbR^k, \|\cdot\|_X)$ into another normed space $(\bbR^n, \|\cdot\|_Y)$ provides a reduction between the corresponding $\IP_X$ and $\IP_Y$ communication problems. We use this reduction to deduce lower bounds for one-way communication complexity of the $\IP_X$ problems for spaces $X$ that contain a (distorted) copy of $\ell_{\infty}$.

\begin{proposition}[Extended version of Proposition~\ref{prop:embedding-preserves-cc}]
\label{prop:embedding-preserves-cc-ext}
If $(\bbR^k, \|\cdot\|_X) \hookrightarrow^{\alpha} (\bbR^n, \|\cdot\|_Y)$, then for any $\eps, \delta$ we have $\cR_{\varepsilon, \delta}(\IP_X) \leq \cR_{\alpha^{-1} \varepsilon, \delta}(\IP_Y)$, and moreover $\coR_{\varepsilon, \delta}(\IP_X) \leq \coR_{\alpha^{-1}\varepsilon, \delta}(\IP_Y)$.
\end{proposition}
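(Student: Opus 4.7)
\textbf{Proof plan for Proposition~\ref{prop:embedding-preserves-cc-ext}.}
The plan is to explicitly simulate an $\IP_X$ protocol by invoking an $\IP_Y$ protocol (with tighter error parameter $\alpha^{-1}\varepsilon$), in such a way that Alice's message depends only on her input and Bob's post-processing only on his --- so that the reduction preserves the one-way structure. Let $i : \bbR^k \to \bbR^n$ be the embedding satisfying $\|i(x)\|_Y \leq \|x\|_X \leq \alpha \|i(x)\|_Y$. Alice, holding $v \in B_X$, simply uses $i(v)$ as her input to the $\IP_Y$ protocol; the lower estimate of the embedding gives $\|i(v)\|_Y \leq \|v\|_X \leq 1$, so $i(v) \in B_Y$ is a valid input.

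The main step is on Bob's side: given $w \in B_{X^*}$ he needs to produce $\tilde{w} \in \bbR^n$ of small $Y^*$-norm satisfying $\inprod{i(v), \tilde{w}} = \inprod{v, w}$ for every $v$. Viewing $i$ as a matrix, this condition reads $i^T \tilde{w} = w$, and such $\tilde{w}$ exists because $i^T$ is surjective (the embedding is injective by the upper bound $\|x\|_X \leq \alpha \|i(x)\|_Y$). To control $\|\tilde{w}\|_{Y^*}$, I would apply Lemma~\ref{lem:hahn-banach} to the subspace $V := i(\bbR^k) \subset (\bbR^n, \|\cdot\|_Y)$: for any particular solution $w_0$ with $i^T w_0 = w$, we have
\begin{equation*}
    \min_{v^\perp \in V^\perp} \|w_0 + v^\perp\|_{Y^*} \;=\; \max_{z \in V \cap B_Y} \inprod{z, w_0} \;=\; \max_{\|i(x)\|_Y \leq 1} \inprod{x, w}.
\end{equation*}
The distortion bound $\|x\|_X \leq \alpha \|i(x)\|_Y$ shows that $\{x : \|i(x)\|_Y \leq 1\} \subseteq \alpha B_X$, so the right-hand side is at most $\alpha \|w\|_{X^*} \leq \alpha$. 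Hence Bob can find $\tilde{w}$ with $i^T \tilde{w} = w$ and $\|\tilde{w}\|_{Y^*} \leq \alpha$; rescaling, $\alpha^{-1}\tilde{w} \in B_{Y^*}$.

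Alice and Bob now run the $\IP_Y$ protocol on $(i(v), \alpha^{-1}\tilde{w})$ with error parameter $\alpha^{-1}\varepsilon$ and failure probability $\delta$, obtaining an estimate $a$ with
\begin{equation*}
    |a - \alpha^{-1}\inprod{v, w}| \;=\; |a - \inprod{i(v), \alpha^{-1}\tilde{w}}| \;\leq\; \alpha^{-1}\varepsilon
\end{equation*}
with probability $1-\delta$. Outputting $\alpha a$ yields an estimate of $\inprod{v,w}$ within additive error $\varepsilon$, proving $\cR_{\varepsilon,\delta}(\IP_X) \leq \cR_{\alpha^{-1}\varepsilon,\delta}(\IP_Y)$. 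Since the choice $v \mapsto i(v)$ depends only on Alice's input and $w \mapsto \alpha^{-1}\tilde{w}$ only on Bob's, a one-way protocol for $\IP_Y$ yields a one-way protocol for $\IP_X$, giving the matching bound for $\coR$.

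The only non-routine ingredient is the Hahn-Banach step ensuring $\|\tilde{w}\|_{Y^*} \leq \alpha$; that is exactly the content of Lemma~\ref{lem:hahn-banach} (equivalently, Claim~\ref{claim:duality} applied after endowing $i(\bbR^k)$ with the induced norm from $Y$), so no additional technical obstacle is expected.
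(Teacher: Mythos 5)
Your proof is correct and takes essentially the same route as the paper: both transport Alice's vector via the embedding $i$ and use the finite-dimensional Hahn--Banach lemma (Lemma~\ref{lem:hahn-banach}, equivalently Claim~\ref{claim:duality}) to lift Bob's functional back into $Y^*$ with a controlled norm, and both observe that the reduction preserves the one-way structure. The only cosmetic difference is that the paper first factors the embedding as an isometric embedding composed with a same-dimension isomorphism and handles the two factors separately, while you treat the general distorted embedding in a single application of duality; your version is slightly more streamlined but proves the same bound by the same mechanism.
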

\begin{proof}
We can decompose any embedding $f : X \to Y$ with distortion $\alpha$ as a composition of linear maps $f = f_2 \circ f_1$, where $f_1 : X \to f(X)$ is an isomorphism of underlying vector spaces (it preserves the dimension) and has distortion $\alpha$ and $f_2 : f(X) \to Y$ is an isometric embedding (i.e. it has distortion $1$).

As such it is enough to show the statement of the proposition in those two special cases --- if the  distortion is $1$, or map $f$ is in an isomorphism (perhaps with non-trivial distortion).\atodo{Not very clear, in the second case you mean that $f_2$ is the identity?}

For isometric embedding $p : X \to Y$ the result $\cR_{\varepsilon, \delta}(\IP_X) \leq \cR_{\varepsilon, \delta}(\IP_Y)$ (and the same for $\coR$) follows from \Cref{claim:duality} --- Alice, given a vector $u \in X$, can instead consider a vector $p(u) \in Y$, Bob on the other hand given $w \in X^*$ can find a vector $w' \in Y^*$ as in \Cref{claim:duality} --- such that $\inprod{u,w} = \inprod{p(u), w'}$. They can now apply the protocol for $\IP_Y$ on the pair of inputs $(p(u), w')$.

Now, consider the assumption that $f$ is an isomorphism of underlying vector spaces, with distortion $\alpha$. In this case we can basically think of a single vector space $V$, on which we have two norms $X$ and $Y$, s.t. for all $v$ we have $\alpha^{-1} \|v\|_X \leq \|v\|_Y \leq \|v\|_X$. It is easy to check that in this case, for any $w \in V^*$ we have $\|w\|_{X^*} \leq \|v\|_{Y^*} \leq \alpha \|v\|_{X^*}$.

Finally, if Alice has $v \in B_X$ (unit ball with respect to the norm $\|\cdot\|_X$), and Bob has $w \in B_{X^*}$, we have $v \in B_Y$ and $\alpha^{-1} w \in B_{Y^*}$. They can use the protocol for the norm $\|\cdot\|_Y$ with an error $\alpha^{-1} \varepsilon$ applied to $(v, \alpha^{-1} w)$ to compute jointly $C = \alpha^{-1} \inprod{v,w} \pm \alpha^{-1} \varepsilon$. Knowing this $C$ they can easily compute $\alpha C = \inprod{x, y} \pm \varepsilon$.
\end{proof}
\begin{lemma}
\label{lem:l-infty-lb}
    For any $\alpha< 1$, we have $\coR_{\alpha, 1/3}(\IP_{\ell_{\infty}}) \geq \Omega(n).$
\end{lemma}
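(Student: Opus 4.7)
The plan is to prove this lower bound by a standard reduction from the \textsc{Index} problem. Recall that in \textsc{Index}, Alice holds a string $x \in \{0,1\}^n$, Bob holds an index $i \in [n]$, and the goal (in the one-way model with Alice speaking) is for Bob to output $x_i$. It is well-known that any randomized one-way protocol solving this with constant error probability must use $\Omega(n)$ bits of communication. I will build a one-way protocol for \textsc{Index} from any hypothetical protocol for $\IP_{\ell_\infty}$ with additive error $\alpha < 1$ and the same message length, yielding the claimed bound.

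The reduction is as follows. Alice, given $x \in \{0,1\}^n$, constructs the vector $v \in \{-1,+1\}^n$ with $v_j = 2x_j - 1$; note $\|v\|_\infty = 1$, so $v$ is a valid input for the $\ell_\infty$ side. Bob, given his index $i$, constructs $w := e_i$, the $i$-th standard basis vector. Since $\ell_\infty^* = \ell_1$, we have $\|w\|_{\ell_\infty^*} = \|e_i\|_1 = 1$, so $w$ is a valid input for the dual side. Crucially, $\inprod{v, w} = v_i \in \{-1, +1\}$, with $v_i = +1$ iff $x_i = 1$.

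Now, given any one-way protocol $\Pi$ for $\IP_{\ell_\infty}$ with additive error $\alpha < 1$ and success probability $2/3$, Alice and Bob simulate $\Pi$ on inputs $(v, w)$: Alice sends her message as prescribed by $\Pi$, and Bob outputs $1$ if the protocol returns a value greater than $0$, and $0$ otherwise. Since $|\inprod{v,w}| = 1$ and the protocol's output lies within $\alpha < 1$ of $\inprod{v,w}$ with probability $2/3$, the sign of the output matches the sign of $v_i$ with probability $2/3$, so Bob correctly recovers $x_i$. Thus the communication cost of $\Pi$ is at least the one-way complexity of \textsc{Index}, which is $\Omega(n)$.

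The main (and essentially only) point worth checking is that the reduction is truly one-way, which it is: Alice's message depends only on $v$ (equivalently on $x$), and Bob's output depends only on his message and his own index $i$, never requiring a second round. No conceptual obstacle arises; the argument is a direct encoding of the \textsc{Index} instance into the geometric framework of $\IP_{\ell_\infty}$.
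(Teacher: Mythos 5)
Your proof is correct and follows the same strategy as the paper: reduce from one-way \textsc{Index} by having Alice encode her bit-string as a bounded-$\ell_\infty$ vector and Bob query with a standard basis vector $e_i$ (which has $\|e_i\|_1 = 1$). The one small difference is your choice of encoding: the paper uses $v_x = \sum_j x_j e_j \in \{0,1\}^n$, so that $\inprod{v_x, e_i} = x_i \in \{0,1\}$, and claims an error smaller than $1$ suffices to recover $x_i$; strictly speaking that only distinguishes $0$ from $1$ when the additive error is below $1/2$. Your $\{\pm 1\}$ encoding $v_j = 2x_j - 1$ gives $\inprod{v, e_i} = \pm 1$, so a sign test works for any $\alpha < 1$, which actually matches the stated range of $\alpha$ in the lemma more cleanly. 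Either way the conclusion $\Omega(n)$ is unaffected; your version is the tidier one.
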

\begin{proof}
We will use reduction from the \textsc{index} problem on $n$ bits --- in this problem Alice is given an $n$ bit string $x \in \{0, 1\}^n$, and Bob an index $i \in [n]$, and he wants to output a value $x_i$ with probability at least $2/3$. It is known that one-way communication of the \textsc{index} problem is $\Omega(n)$ --- Alice needs to essentially send her entire bit-string to Bob \cite{KNR95, Roughgarden16}.

Given an instance $(x,i)$ of an indexing problem we can easily encode it as an equivalent instance of $\IP_{\ell_{\infty}}$: Alice takes her bit-string $x \in \{0, 1\}^n$ encodes it as $v_x = \sum_i x_i e_i$, with $\|v_x\|_\infty \leq 1$. Bob can recover arbitrary bit $x_i$ if he can approximate $\inprod{v_x, e_i} = x_i$ up to an error smaller than $1$. His test vector $w=e_i$ clearly has $\|w\|_1 = 1$.
\end{proof}

We are now ready to prove Proposition~\ref{prop:embedding-lb} announced in the introduction.
\begin{proof}[Proof of Proposition~\ref{prop:embedding-lb}]
We need to show that if $\ell_{\infty}^k \hookrightarrow^{\varepsilon^{-1}} (\bbR^n, \|\cdot\|_X)$, then $\coR_{\varepsilon, 1/3}(\IP_X) \geq \Omega(k)$. It follows as a direct corollary by composing Lemma~\ref{lem:l-infty-lb} together with Proposition~\ref{prop:embedding-preserves-cc-ext}. 
\end{proof}

Finally, we can use Proposition~\ref{prop:embedding-lb} to show a lower bound for the communication complexity of $\IP_{\ell_p}$.
\begin{lemma}
We have $\coR_{\varepsilon, 1/3}(\IP_{\ell_p}) \geq \Omega(\min(n, \varepsilon^{-\max(p,2)}))$.
\end{lemma}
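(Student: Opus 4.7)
My plan is to split the argument into the cases $p \geq 2$ and $p \leq 2$, since $\min(n, \varepsilon^{-\max(p,2)})$ equals $\min(n, \varepsilon^{-p})$ in the former and $\min(n, \varepsilon^{-2})$ in the latter.

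For $p \geq 2$, I would obtain the bound from Proposition~\ref{prop:embedding-lb} together with the scaled identity embedding. Concretely, set $k := \min(n, \lfloor \varepsilon^{-p} \rfloor)$ and consider $\iota : \bbR^k \to \bbR^n$ defined by $\iota(v) := v / k^{1/p}$ (extended by zeros in the remaining $n - k$ coordinates). The elementary bounds $\|v\|_\infty \leq \|v\|_p \leq k^{1/p} \|v\|_\infty$ on $\bbR^k$ immediately translate into $\|\iota(v)\|_p \leq \|v\|_\infty \leq k^{1/p} \|\iota(v)\|_p$, so $\iota$ witnesses $(\bbR^k, \ell_\infty) \hookrightarrow^{k^{1/p}} (\bbR^n, \ell_p)$, and the choice of $k$ guarantees $k^{1/p} \leq \varepsilon^{-1}$. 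Proposition~\ref{prop:embedding-lb} then delivers $\coR_{\varepsilon, 1/3}(\IP_{\ell_p}) \geq \Omega(k) = \Omega(\min(n, \varepsilon^{-p}))$, as required.

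For $p \leq 2$, the scaled identity embedding can only achieve distortion $\varepsilon^{-1}$ up to $k = \varepsilon^{-p} \leq \varepsilon^{-2}$, which falls short of the target. Instead I would reduce directly from the Gap Hamming Distance problem on $n$ bits, whose two-way communication complexity is $\Omega(\min(n, \varepsilon^{-2}))$ by Chakrabarti--Regev \cite{CR11}. Given a GH instance $x, y \in \{\pm 1\}^n$, Alice forms $v := x / n^{1/p}$ and Bob forms $w := y / n^{1/q}$ with $1/p + 1/q = 1$; then $\|v\|_p = \|w\|_q = 1$ while $\langle v, w \rangle = \langle x, y \rangle / n$, so a protocol for $\IP_{\ell_p}$ with additive error $\varepsilon/3$ resolves GH with parameter $\varepsilon$. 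Using $\coR \geq \cR$ together with a constant rescaling of $\varepsilon$ transfers the GH lower bound, yielding $\coR_{\varepsilon, 1/3}(\IP_{\ell_p}) \geq \Omega(\min(n, \varepsilon^{-2}))$.

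I do not anticipate any real obstacle: each sub-case is a one-line reduction on top of cited results. If one prefers a uniform route through Proposition~\ref{prop:embedding-lb}, the $p \leq 2$ case can instead be handled via the classical composition $\ell_\infty^k \hookrightarrow^{\sqrt{k}} \ell_2^k \hookrightarrow^{O(1)} \ell_p^{O(k)}$, where the second arrow is a Gaussian embedding of $\ell_2^k$ into $\ell_p^{O(k)}$; this gives the same final bound, but the Gap Hamming reduction is noticeably shorter to write.
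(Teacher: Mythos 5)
Your $p \geq 2$ case is the paper's argument: scaled identity embedding $\ell_\infty^k \hookrightarrow^{k^{1/p}} \ell_p$ at $k = \varepsilon^{-p}$, plugged into Proposition~\ref{prop:embedding-lb}. For $p < 2$ the paper takes a different route: it stays entirely within the embedding framework, citing the Banach--Mazur distance $d(\ell_\infty^r, \ell_p^r) \leq \Oh(\sqrt r)$ (i.e.\ a linear isomorphism of $\ell_\infty^r$ onto $\ell_p^r$ with distortion $\Oh(\sqrt r)$, so at $r = \varepsilon^{-2}$ the distortion is $\Oh(\varepsilon^{-1})$) and again invoking Proposition~\ref{prop:embedding-lb}. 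Your primary route for $p < 2$ — reduce from Gap Hamming and then pass from $\cR$ to $\coR$ — is also correct; it is essentially the content of the paper's separate Lemma~\ref{lem:eps-lb} (the two-way lower bound), composed with the trivial inequality $\coR \geq \cR$. What the paper's route buys is uniformity: both cases flow through the same $\ell_\infty$-embedding lemma, which is the theme the paper is developing (Theorem~\ref{thm:symmetric-sparsification} is its converse), and it delivers a genuinely one-way argument rather than importing a two-way bound. What your route buys is that the reduction is short and entirely self-contained, avoiding the Banach--Mazur citation. Your alternative embedding route $\ell_\infty^k \hookrightarrow^{\sqrt k} \ell_2^k \hookrightarrow^{\Oh(1)} \ell_p^{\Oh(k)}$ is fine and is morally the same object the paper cites (the paper's version keeps the dimension at $r$ instead of blowing it up to $\Oh(k)$, but either works with Proposition~\ref{prop:embedding-lb} provided $\Oh(k) \leq n$). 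One small point of care in the Gap Hamming reduction: to get the $\min$ with $n$ you should run Gap Hamming on $k = \Theta(\varepsilon^{-2})$ coordinates and pad with zeros (normalizing by $k^{1/p}$, not $n^{1/p}$), exactly as in the paper's Lemma~\ref{lem:eps-lb}; as written your normalization presumes the Gap Hamming instance occupies the full ambient dimension.
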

\begin{proof}
Let $r = \varepsilon^{-\max(p, 2)}$, and assume $r < n$. We wish to show that $\ell_\infty^r \hookrightarrow^{\varepsilon^{-1}} \ell_p$.
If $p \geq 2$, we can just take a subspace spanned by the first $r$ basis vectors, and observe that for $\alpha \in \ell_{\infty}^r$, we have $\|\alpha\|_\infty \leq \|\sum_{i \leq r} \alpha_i e_i\|_p \leq \|\alpha\|_\infty r^{1/p} = \varepsilon^{-1} \|\alpha\|_\infty$ --- this inclusion gives an embedding with distortion $\varepsilon^{-1}$.

For $p < 2$, it is known that the Banach-Mazur distance between $\ell_p^r$ and $\ell_{\infty}^r$ is at most $\Oh(\sqrt{r}) = \Oh(\varepsilon^{-1})$ \cite[Chapter 1, Section 8]{johnson2001handbook} --- which is just the same as saying that there exists a linear isomorphism between $\ell_{\infty}^r$ and $\ell_p^r$ with distortion at most $\Oh(\sqrt{r}) = \Oh(\varepsilon^{-1})$.
\end{proof}

\begin{lemma}
\label{lem:eps-lb}
We have $\cR_{\varepsilon, 1/3}(\IP_{\ell_p}) \geq \Omega(\min(n, \varepsilon^{-2}))$.
\end{lemma}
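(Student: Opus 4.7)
The plan is to reduce the Gap Hamming Distance (GHD) problem directly to $\IP_{\ell_p}$ for every $p \in [1, \infty]$, and then invoke the tight Chakrabarti–Regev lower bound $\cR_{1/3}(\mathrm{GHD}_{n, \varepsilon}) \geq \Omega(\min(n, \varepsilon^{-2}))$, which holds in the two-way randomized model (this is precisely the regime missing from the one-way bound of the previous lemma, since for $p \geq 2$ the $\ell_\infty$-embedding approach only yields $\Omega(\varepsilon^{-p})$ in the one-way model and relies on \textsc{Index}).

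Concretely, given a GHD instance with inputs $x, y \in \{\pm 1\}^n$, Alice forms the vector $v := n^{-1/p} x \in \bbR^n$, and Bob forms $w := n^{-1/q} y \in \bbR^n$, where $q$ is the Hölder conjugate of $p$, i.e.\ $1/p + 1/q = 1$. Then $\|v\|_p = 1$ and $\|w\|_q = 1$, so $(v, w)$ is a valid instance of $\IP_{\ell_p}$. Moreover,
\begin{equation*}
    \inprod{v, w} \;=\; \frac{\inprod{x, y}}{n^{1/p + 1/q}} \;=\; \frac{\inprod{x, y}}{n},
\end{equation*}
so the two GHD cases $\inprod{x, y} \leq -\varepsilon n$ and $\inprod{x, y} \geq \varepsilon n$ correspond respectively to $\inprod{v, w} \leq -\varepsilon$ and $\inprod{v, w} \geq \varepsilon$. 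These cases can be separated by any $(\varepsilon/2)$-additive approximation of $\inprod{v, w}$.

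Therefore any randomized protocol for $\IP_{\ell_p}$ with additive error $\varepsilon/2$ and success probability $2/3$ yields a randomized GHD protocol of the same cost and success probability, so
\begin{equation*}
    \cR_{\varepsilon/2, 1/3}(\IP_{\ell_p}) \;\geq\; \cR_{1/3}(\mathrm{GHD}_{n,\varepsilon}) \;\geq\; \Omega(\min(n, \varepsilon^{-2})).
\end{equation*}
Rescaling $\varepsilon$ by a constant absorbs into the $\Omega(\cdot)$ and yields the claimed bound $\cR_{\varepsilon, 1/3}(\IP_{\ell_p}) \geq \Omega(\min(n, \varepsilon^{-2}))$ for every $p \in [1, \infty]$. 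There is no real obstacle here since the scaling is exact and the Chakrabarti–Regev bound is cited as a black box; the only thing to check is that the reduction works uniformly in $p$, which it does because Hölder's inequality is tight on the pair of sign vectors $(x, y)$ after normalizing by the appropriate $\ell_p$ and $\ell_q$ norms.
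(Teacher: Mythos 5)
Your proof is correct and takes essentially the same route as the paper: both reduce Gap Hamming Distance to $\IP_{\ell_p}$ by normalizing sign (resp.\ $\{0,1\}$) vectors so that Alice's vector has unit $\ell_p$ norm and Bob's has unit $\ell_q$ norm, observing that the resulting inner product is $\inprod{x,y}/n$, and then invoking the Chakrabarti--Regev $\Omega(\min(n,\varepsilon^{-2}))$ lower bound. The cosmetic difference (the paper works with $\{0,1\}^k$ and the Hamming-distance identity $\Delta(x,y)=\|x\|_2^2+\|y\|_2^2-2\inprod{x,y}$, whereas you work with $\{\pm1\}^n$ and normalize explicitly) does not change the argument.
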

\begin{proof}
We will use a reduction from the \textsc{Gap-Hamming} problem --- in this problem Alice and Bob are given two vectors $x, y \in \{0, 1\}^k$, and they wish to check if $\Delta(x,y) < \frac{k}{2} - C \sqrt{k}$ or $\Delta(x,y) > \frac{k}{2} + C \sqrt{k}$, where $\Delta(x,y) = | \{ i : x_i \not= y_i\} |$ --- it is known that any randomized protocol solving this problem with probability $2/3$ needs to use $\Omega(k)$ communication \cite{CR11, Sherestov12}.

Alice and Bob will interpret their vectors $x, y$ directly as vectors in $\bbR^n$ (padding with zeros if necessary). Note that since $\Delta(x,y) = \|x\|_2^2 + \|y\|_2^2 - 2 \inprod{x,y}$, if Alice and Bob can compute $\inprod{x,y}$ up to an additive error smaller than $2 C \sqrt{k}$, they can solve the \textsc{Gap-Hamming} problem. Using the $\varepsilon$-approximate protocol for $\IP_{\ell_p}$ they can compute $\inprod{x,y} \pm \varepsilon\|x\|_p \|y\|_q = \inprod{x,y} \pm \varepsilon k^{1/p} k^{1/q} = \inprod{x,y} \pm \varepsilon k$. Choosing $\varepsilon = \frac{2 C}{\sqrt{k}}$, we get the desired lower bound: $\cR_{\varepsilon, 1/3}(\IP_{\ell_p}) \geq \Omega(k) = \Omega(\varepsilon^{-2})$.
\end{proof}

Finally, we will show we can solve the one-way communication problem $\IP_N$ if the norm $N$ admits a sparsification.
\begin{proposition}
\label{prop:sparsification-implies-cc}
If a norm $N$ admits a $(\varepsilon, \delta, D)$-sparsification, then $\coR_{\varepsilon, \delta}(\IP_N) \leq \Oh(D\cdot\log\frac n\eps)$.
\end{proposition}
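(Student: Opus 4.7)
The plan is to use the sparsification mapping $\phi$ directly as a one-way protocol: Alice draws her private random coins, computes $\phi(v)$, encodes the resulting $D$-sparse vector, and sends the encoding to Bob; Bob decodes to obtain some $\tilde\phi(v)$ and outputs $\inprod{\tilde\phi(v), w}$. By the defining property of the sparsification, $|\inprod{\phi(v), w} - \inprod{v, w}| \leq \varepsilon$ with probability at least $1 - \delta$, so all we have to control is the additional error introduced when Alice has to send $\phi(v)$ using a finite number of bits.

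First I would describe the encoding. Since $\phi(v)$ has at most $D$ nonzero entries, Alice transmits the support (an ordered list of at most $D$ indices in $[n]$) using $\Oh(D \log n)$ bits, and then quantizes the corresponding coordinate values. The key step is to round each value to the nearest multiple of $\varepsilon/(cD)$ for an appropriate constant $c$, producing a quantized vector $\tilde\phi(v)$ supported on the same set. Assuming a bound $\|\phi(v)\|_\infty = \Oh(1)$ (true in particular for symmetric norms with the normalization $\|e_i\|_N = 1$, since then $\|w\|_{N^*} \leq 1$ forces $|w_i|\leq 1$ and a symmetric argument applies to $\phi(v)$), each value is represented by $\Oh(\log(D/\varepsilon))$ bits, for a total of $\Oh\!\bigl(D \log(n/\varepsilon)\bigr)$ bits across all $D$ coordinates.

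Next I would verify that quantization only contributes error $\Oh(\varepsilon)$. Writing $\Delta := \tilde\phi(v) - \phi(v)$, the vector $\Delta$ is $D$-sparse with $\|\Delta\|_\infty \leq \varepsilon/(cD)$, and since $\|w\|_{N^*} \leq 1$ implies $|w_i| \leq \|e_i\|_N = \Oh(1)$ under our normalization, we obtain $|\inprod{\Delta, w}| \leq D \cdot \varepsilon/(cD) \cdot \Oh(1) \leq \varepsilon/2$ for a suitable $c$. Running the sparsification with parameter $\varepsilon/2$ in place of $\varepsilon$ and combining with the quantization error gives total additive error at most $\varepsilon$, still with probability at least $1-\delta$, as desired.

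The main (minor) obstacle is the $\|\phi(v)\|_\infty$ bound just discussed: the abstract definition of sparsification does not explicitly assert it. I would handle this by a standard truncation step: if the rounded value in some coordinate would exceed a threshold $M = \Oh(1)$ (chosen so that for any $w \in B_{N^*}$ the $D$-sparse bound $|\inprod{\cdot,w}|$ behaves well) Alice clips it to $\pm M$; since any $v \in B_N$ and $w \in B_{N^*}$ satisfy $|v_i|,|w_i| = \Oh(\max_j \|e_j\|_N \cdot \max_j \|e_j\|_{N^*})$, in the symmetric-norm settings of interest this factor is $1$ and the argument goes through cleanly. The resulting protocol has one-way communication $\Oh(D \log(n/\varepsilon))$, completing the proof.
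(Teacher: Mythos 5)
Your overall approach is the same as the paper's: Alice applies the sparsification, encodes the support and the quantized coordinate values of the $D$-sparse vector $\phi(v)$, sends this to Bob, and Bob computes the inner product with $w$. The paper's one-line justification is that each index costs $\log n$ bits and each value, encoded to precision $(\varepsilon/n)^{\Oh(1)}$, costs $\Oh(\log\frac n\eps)$ bits.

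The problem is the step where you assert $\|\phi(v)\|_\infty = \Oh(1)$ and try to justify it by applying ``a symmetric argument'' to $\phi(v)$ that you applied to $w \in B_{N^*}$. That argument uses $|w_i| = |\inprod{w, e_i}| \leq \|w\|_{N^*}\|e_i\|_N \leq 1$, which is valid because $w$ is promised to lie in $B_{N^*}$. But $\phi(v)$ is \emph{not} promised to lie in $B_N$ --- in Definition~\ref{def:sparsification} the sparsifier is a map $\phi : B_N \to \bbR^n$ with no bound on the range. Indeed, the $\ell_p$ sparsifier of \Cref{thm:sparsification} produces $1$-sparse samples $\tilde v = e_a v_a/|v_a|^p$ whose nonzero entry has magnitude $|v_a|^{1-p}$, which is unbounded as $|v_a| \to 0$ (for $p>1$). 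So the $\Oh(1)$ bound is false, and the duality argument you cite for it does not apply. The paper's implicit assumption is the weaker one that $\|\phi(v)\|_\infty \le \mathrm{poly}(n/\varepsilon)$ (after, say, Alice first zeros out the negligibly small coordinates of $v$), which is what makes the $(\varepsilon/n)^{\Oh(1)}$ precision suffice.

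The truncation patch you propose doesn't rescue this. Clipping a coordinate of $\phi(v)$ from a value of order $K \gg M$ down to $\pm M$ changes $\inprod{\phi(v), w}$ by $(K-M)|w_i|$ for any $w$ with $w_i \neq 0$, which is uncontrolled by the sparsification guarantee: the guarantee controls the full sum $\inprod{\phi(v),w}$, not individual terms, and those terms may cancel. Your justification for clipping being safe again appeals to a bound on $|v_i|$ and $|w_i|$, which says nothing about $|\phi(v)_i|$. A further, smaller issue: your entire argument is specific to (normalized) symmetric norms, while \Cref{prop:sparsification-implies-cc} is stated for arbitrary norms. What is actually needed to make either proof airtight is to strengthen Definition~\ref{def:sparsification} to demand $\|\phi(v)\|_\infty \le \mathrm{poly}(n/\varepsilon)$ (which all the constructed sparsifiers satisfy, perhaps after a trivial preprocessing), and then the rest of your encoding and error accounting goes through as written.
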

\begin{proof}
Alice, given a vector $v \in \bbR^n$ with $\|v\|_{N} \leq 1$ can just apply a sparsification procedure to obtain a vector $\phi(v)$ and try to send an encoding of $\phi(v)$ to  Bob. In order to efficiently encode this vector, for each of at most $D$ non-zero coefficients she needs to send the index of the corresponding coefficient (spending at most $\log n$ bits), and an encoding of its value up to precision $(\varepsilon/n)^{\Oh(1)}$ --- for which she needs to spend $\Oh(\log n + \log \varepsilon^{-1})=\Oh(\log \frac n\eps)$ bits.
\end{proof}
\section{Sparsification for  $\ell_p$}
\label{sec:ell-p}
In this section we prove \Cref{thm:sparsification}, restated for convenience.
\SparsificationLp*
\begin{proof}
It is enough to show the sparsification procedure $\phi$ for vectors $v \in \bbR^n$ with $\|v\|_p = 1$ --- for arbitrary vector $v \in B_p$, we will then take $\phi(v) := \|v\|_p \phi(v / \|v\|_p)$.

Given $v$ with $\|v\|_p = 1$, consider a distribution $\mathcal{D}_v$ over $[n]$, given by $\Pr_{a \sim \mathcal{D}_v}(a = i) = |v_i|^p$.

Consider now a random vector $\tilde{v} := \frac{e_a v_a}{|v_a|^p}$ where $a \sim \mathcal{D}_v$ and $e_a$ is a standard basis vector. Clearly, we have $\E \tilde{v} = v$.

We will pick some $s = \Theta(\varepsilon^{-\max(2, p)})$, and define $\phi(v) = \frac{1}{s} \sum_{j\leq s} \tilde{v}^{(j)}$, where $\tilde{v}^{(j)}$ are i.i.d. random vectors with the same distribution as $\tilde{v}$. Again, clearly $\E \phi(v) = v$, and $\|\phi(v)\|_0 \leq s$. 

\jtodo{I don't think we need anywhere bounds on $\ell_1$ norm of $\phi(v)$?}

Consider some $w \in B_q^n$, where $q$ is such that $\frac{1}{p} + \frac{1}{q} = 1$. We wish to control the error $\inprod{\phi(v), w} - \inprod{v, w}$.
\begin{claim}
We have $\| \inprod{\tilde{v}, w}\|_q \leq 1$. (This notation is a bit confusing: here $\inprod{\tilde{v}, w}$ is an $\bbR$-valued random variable, and the $L_q$-norm under consideration is $\|\mathbf{X}\|_q := (\E X^q)^{1/q}$.)
\end{claim}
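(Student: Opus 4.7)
The plan is to compute the $q$-th moment $\E|\inprod{\tilde{v}, w}|^q$ directly from the definition. Since $\tilde{v}$ is supported on a single basis direction $e_a$ with the scalar coefficient $v_a/|v_a|^p$, the inner product simplifies to the one-coordinate expression
\begin{equation*}
    \inprod{\tilde{v}, w} \;=\; \frac{v_a\, w_a}{|v_a|^p},
\end{equation*}
whenever $v_a \neq 0$ (and the event $v_a = 0$ has probability zero under $\mathcal{D}_v$, so no division-by-zero issue arises).

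Next I would take the expectation of $|\inprod{\tilde{v}, w}|^q$ by averaging over $a \sim \mathcal{D}_v$, which assigns mass $|v_i|^p$ to $i$:
\begin{equation*}
    \E|\inprod{\tilde{v}, w}|^q \;=\; \sum_{i : v_i \neq 0} |v_i|^p \cdot \frac{|v_i|^q\,|w_i|^q}{|v_i|^{pq}} \;=\; \sum_{i : v_i \neq 0} |v_i|^{p+q-pq}\,|w_i|^q.
\end{equation*}

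The key algebraic observation is that the conjugacy relation $\frac{1}{p} + \frac{1}{q} = 1$ is equivalent to $p + q = pq$, so the exponent $p + q - pq$ vanishes and all powers of $|v_i|$ disappear. Thus
\begin{equation*}
    \E|\inprod{\tilde{v}, w}|^q \;\leq\; \sum_{i} |w_i|^q \;=\; \|w\|_q^q \;\leq\; 1,
\end{equation*}
where the last inequality uses $w \in B_q^n$. Taking the $q$-th root yields $\|\inprod{\tilde{v}, w}\|_q \leq 1$, as required. There is essentially no obstacle here --- the statement is a one-line calculation once one unpacks the definition of $\tilde{v}$ and applies the conjugate-exponent identity; the only minor bookkeeping is restricting the sum to the support of $v$ to avoid the indeterminate $0/0$ terms.
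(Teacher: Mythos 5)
Your proof is correct and takes essentially the same route as the paper: both compute $\E|\inprod{\tilde{v},w}|^q$ by summing $|v_i|^p \cdot (|w_i||v_i|^{1-p})^q$ over the support, observe that the exponent $p+q-pq$ vanishes by the conjugacy identity, and bound the remaining sum by $\|w\|_q^q \leq 1$. The only additional care you take --- restricting the sum to $v_i \neq 0$ --- is a harmless clarification of the same argument.
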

\begin{proof}
Indeed,
\[
\|\inprod{\tilde{v},w}\|_{q}^{q}~=~\sum_{i}|v_{i}|^{p}\left(|w_{i}||v_{i}|^{1-p}\right)^{q}~=~\sum_{i}|w_{i}|^{q}|v_{i}|^{p+q-pq}~=~\sum|w_{i}|^{q}~\leq~1~.
\]
\end{proof}
Consider now random variables $Z_j = \inprod{\tilde{v}^{(j)}, w} - \inprod{v, w}$. We have $\E Z_j = 0$, and by the above claim together with the triangle inequality $\|Z_j\|_q \leq 2$. Now, if $q\geq 2$, this implies that $\|Z_j\|_2 \leq 2$, and therefore $\|\inprod{\phi(v), w} - \inprod{v, w}\|_2 = \|\frac{1}{s} \sum_{j \leq s} Z_j\|_2 \leq 2 s^{-1/2} \leq \varepsilon/3$. \atodo{This is \Cref{lem:type}? because then there is missing asymptotic notation.}The Chebyshev inequality now implies that with probability $1/3$ we have $|\inprod{\phi(v), w} - \inprod{v, w}| \leq \varepsilon$.

Similarly, if $1< q < 2$, we can apply Lemma~\ref{lem:type}, to deduce that $\|\inprod{\phi(v), w} - \inprod{v, w}\|_q = \|\frac{1}{s} \sum Z_j\|_q \lesssim s^{1/q - 1} = s^{- 1/p}$, therefore if $s = \Theta(\delta^{-1} \varepsilon^{-p})$, we get $\|\inprod{\phi(v), w} - \inprod{v, w}\|_q \leq \varepsilon/3$, and again by Chebyshev, with probability $2/3$ we have $|\inprod{\phi(v), w} - \inprod{v, w}| < \varepsilon$.

The upper bound on $\coR_{\varepsilon, \delta}(\IP_{\ell_p})$ follows by \Cref{prop:sparsification-implies-cc}.
\end{proof}
We can now prove \Cref{thm:ell-p-cc} stating that $\cR_{\varepsilon, 1/3}(\IP_{\ell_p}) \leq \Oh(\frac{\log n}{\varepsilon^2})$.
\begin{proof}[Proof of \Cref{thm:ell-p-cc}]
If $\frac{1}{p} + \frac{1}{q} = 1$, by symmetry between Alice and Bob we have $\cR_{\varepsilon, 1/3}(\IP_{\ell_p}) = \cR_{\varepsilon, 1/3}(\IP_{\ell_q})$, and since for any norm $\cR_{\varepsilon, 1/3}(\IP_{N}) \leq \coR_{\varepsilon, 1/3}(\IP_{N})$, we have 
\begin{equation*}
    \cR_{\varepsilon, 1/3}(\IP_{\ell_p}) \leq \min(\coR_{\varepsilon, 1/3}(\IP_{\ell_p}), \coR_{\varepsilon, 1/3}(\IP_{\ell_q})).
\end{equation*}
Since for dual exponents $p,q$, we have $\min(p, q) \leq 2$, the bound follows.
\end{proof}
\section{Efficient protocol for arbitrary symmetric norm \label{sec:symmetric-two-way}}
The goal of this section is the proof of Theorem~\ref{thm:symmetric-ub}.

\symmetricub*

The main ingredient used by us is the embedding theorem, saying that any symmetric norm can be with low distortion embedded into an explicit, relatively simple space. Before we can state this theorem, we need to introduce a sum operation of normed spaces --- it is necessary to describe the target space in the embedding theorem.

For a sequence of normed spaces $V_1, V_2, \ldots V_d$, and a norm $h$ on $\bbR^d$, we can construct a normed space $V$ called the $h$-sum of $V_i$, denoted as
\begin{equation*}
    V = \bigoplus_{i \leq d}^h V_i,
\end{equation*}
such that the underlying vector space is $V := \bigoplus_{i \leq d} V_i$ and norm is defined by the following formula. For an arbitrary vector $v \in V$, with decomposition $v = \sum v_i$ where $v_i \in V_i$, we take
\begin{equation*}
    \|v\|_V := h(\|v_1\|_{V_1}, \|v_2\|_{V_2}, \ldots \|v_d\|_{V_d}).
\end{equation*}

We also need to define the Top-k norm $T^{(k)}$ on $\bbR^n$, given by
\begin{equation*}
    \|v\|_{T^{(k)}} := \sum_{j\leq k} |v|_{(i)}
\end{equation*}
Where $|v|_{(1)} \geq |v|_{(2)} \geq \ldots \geq |v|_{(n)}$ is a non-increasing (in the magnitude) rearrangement of entries of $v$.

With those definition in hand, the aforementioned embedding theorem can finally be stated.
\begin{theorem}[{\cite[Theorem 4.2]{ANNRW17}}]
\label{thm:symmetric-embedding}
For any symmetric norm $N$, and any $\delta > 0$, the normed space $(\bbR^n, \|\cdot\|_N)$ embedds with distortion $1+\delta$ into $\bigoplus_{i \leq t}^{l_\infty} \bigoplus_{k\leq n}^{\ell_1} T^{(k)}$, where $t = n^{\Oh(\delta^{-1} \log \delta^{-1})}$.
\end{theorem}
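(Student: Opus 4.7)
By the symmetry of $N$, we have $\|v\|_N = \sup_{w \in B_{N^*} \cap \mathcal{C}^+} \inprod{|v|^*, w}$, where $|v|^*$ is the non-increasing rearrangement of $|v|$ and $\mathcal{C}^+ \subset \bbR^n$ is the cone of non-negative non-increasing vectors. My plan is to approximate every dual witness $w \in K := B_{N^*} \cap \mathcal{C}^+$ by a simpler ``staircase'' functional, enumerate all essentially distinct staircases, and realize the supremum as an $\ell_\infty$-maximum, which is exactly the structure of the target space.

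\textbf{Staircase decomposition.} Given $w \in K$, normalize so that $\|w\|_\infty \leq 1$; any coordinate smaller than $\delta/n$ may be truncated to $0$, at cost at most $(\delta/n) \cdot n \cdot \|v\|_\infty \leq \delta \|v\|_N$ (using coordinate monotonicity $\|v\|_N \geq \|v\|_\infty$ for normalized symmetric norms). The surviving coordinates lie in $[\delta/n, 1]$ and can be rounded down to the nearest power of $(1+\delta)$, producing a non-increasing staircase $\tilde w$ with at most $L = \Oh(\delta^{-1} \log(n/\delta))$ distinct nonzero levels. If the breakpoints are $k_1 < \cdots < k_L$ with level drops $c_j \geq 0$, then $\tilde w = \sum_j c_j \mathbf{1}_{[1..k_j]}$, so
\[
\inprod{v, \tilde w} \;=\; \sum_{j \leq L} c_j \, \|v\|_{T^{(k_j)}}.
\]
Thus $\|v\|_N$ is approximated up to a factor $1+\Oh(\delta)$ by $\max_{\text{profile}} \sum_j c_j \|v\|_{T^{(k_j)}}$, where the max ranges over all profiles $(k_j, c_j)_{j \leq L}$ arising this way. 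After further discretizing each $c_j$ to a power of $(1+\delta)$ in its legal range, the number of distinct profiles is bounded by some $t$, and the embedding $\phi(v) := (c_k^{(i)} v)_{i \leq t,\, k \leq n}$ into $\bigoplus_{i \leq t}^{\ell_\infty} \bigoplus_{k \leq n}^{\ell_1} T^{(k)}$ satisfies $\|\phi(v)\| = (1 \pm \Oh(\delta)) \|v\|_N$ by construction, after rescaling $\delta$.

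\textbf{Main obstacle: profile counting.} The delicate step is bounding $t \leq n^{\Oh(\delta^{-1} \log \delta^{-1})}$, rather than the naive $n^{\Oh(\delta^{-1} \log n)}$ one gets from $\binom{n}{L}$ breakpoint choices with $L = \Oh(\delta^{-1} \log n)$. To obtain the stronger bound one must argue that after normalization only $\Oh(\delta^{-1} \log \delta^{-1})$ breakpoints are \emph{essential}: contributions from levels of $w$ below some threshold (say $\delta \cdot \|w\|_\infty$) can be absorbed by a single ``bulk'' Top-$k$ term representing the small tail, rather than requiring separate breakpoints. This is where the $\ell_1$-sum structure of the inner space becomes crucial: the $\ell_1$-structure lets one pay once for the bulk tail, while the outer $\ell_\infty$ enumerates over the coarse shape of the dominant part. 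Making this truncation tight — pairing the discretization of $w$ with the sum structure so that both the number of distinct scales and the number of breakpoint positions contribute only $\log \delta^{-1}$ factors — is the technical heart of the argument. Once achieved, enumerating $\Oh(\delta^{-1} \log \delta^{-1})$ essential breakpoints among $n$ positions and assigning each of $\Oh(\delta^{-1} \log \delta^{-1})$ discretized level values yields exactly $t = n^{\Oh(\delta^{-1} \log \delta^{-1})}$.
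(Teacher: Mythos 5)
The paper does not prove this theorem --- it is imported verbatim from \cite[Theorem 4.2]{ANNRW17}, so there is no in-paper proof to compare against. Evaluating your reconstruction on its own terms: the starting point is sound. Dualizing to $\|v\|_N = \sup_{w \in B_{N^*}\cap\mathcal{C}^+}\inprod{|v|^*, w}$, truncating small coordinates of $w$, rounding the survivors to powers of $(1+\delta)$, and rewriting the resulting staircase as $\sum_j c_j \mathbf{1}_{[1..k_j]}$ so that $\inprod{v,\tilde w}=\sum_j c_j\|v\|_{T^{(k_j)}}$ is exactly the right mechanism for landing in $\bigoplus^{\ell_\infty}\bigoplus^{\ell_1}T^{(k)}$, and your accounting of the $\delta/n$ truncation cost against $\|v\|_\infty \leq \|v\|_N$ is correct.

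However, the gap you flag is genuine and your sketched repair does not close it. Absorbing the sub-$\delta\|w\|_\infty$ levels into a single ``bulk'' $T^{(k)}$ term does not work: the tail contribution $\sum_{i\,:\,w_i<\delta}w_i|v|_i^*$ is a weighted sum with varying weights, not a constant multiple of a Top-$k$ norm, and there is no single $k$ and scalar $c$ for which $c\|v\|_{T^{(k)}}$ approximates it up to $1+\Oh(\delta)$ uniformly over $v$. The $\ell_1$-sum structure does not rescue this; it merely lets you add up several $T^{(k)}$ terms within one profile, which is what you were already doing. Moreover, reducing the number of discretization \emph{levels} to $\Oh(\delta^{-1}\log\delta^{-1})$ is not the right lever: the levels genuinely must span down to roughly $\delta/n$ (consider $N=\ell_1$, where the tail of $w$ carries most of the mass), so you cannot shrink the level range without breaking the truncation bound.

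The missing idea is instead to discretize the \emph{breakpoint positions} $k_j$, not only the levels. Because $k\mapsto\|v\|_{T^{(k)}}$ is non-decreasing and concave with $\|v\|_{T^{(0)}}=0$, the ratio $\|v\|_{T^{(k)}}/k$ is non-increasing, so for $k\le k'\le(1+\delta)k$ one has $\|v\|_{T^{(k)}}\le\|v\|_{T^{(k')}}\le(1+\delta)\|v\|_{T^{(k)}}$. Hence each breakpoint may be rounded down to the nearest power of $(1+\delta)$ at a multiplicative cost of $(1+\delta)$, and the rounded staircase still lies in $B_{N^*}$ by coordinate-wise monotonicity of $N^*$. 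After this step there are only $\Oh(\delta^{-1}\log n)$ admissible breakpoint positions and $\Oh(\delta^{-1}\log(n/\delta))$ admissible levels, and a strictly decreasing staircase is determined by choosing a subset of each and matching them monotonically; this gives $t\le 2^{\Oh(\delta^{-1}\log n)}=n^{\Oh(\delta^{-1})}$, comfortably within the claimed $n^{\Oh(\delta^{-1}\log\delta^{-1})}$. Without this breakpoint discretization, the $\binom{n}{L}$ term with $L=\Theta(\delta^{-1}\log n)$ really does leave you at $n^{\Theta(\delta^{-1}\log n)}$, so as written the proof does not establish the theorem.
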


We will use this Theorem~\ref{thm:symmetric-embedding} together with Proposition~\ref{prop:embedding-preserves-cc} to prove Theorem~\ref{thm:symmetric-ub}. To this end, all we need to do is to show an explicit protocol for the inner product problem on a space of form $\bigoplus_{i \leq t}^{l_\infty} \bigoplus_{k\leq n}^{\ell_1} T^{(k)}$.

\subsection{Compositions of norm}
We will show here that if all normed spaces $V_i$ admit an efficient protocol for the inner product problem, and $h$ admits a sparsification, then also $\oplus^h_i V_i$ has an efficient protocol for the inner product problem. 
\begin{lemma}
\label{lem:sum-duality}
If $V = \bigoplus_{i \leq d}^h V_i$, then $V^* = \bigoplus_{i \leq d}^{h^*} V_i^*$.
\end{lemma}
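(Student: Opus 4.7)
The plan is to unpack both sides of the claimed equality and reduce the statement to a duality computation on $\bbR^d$. First, I would identify $V^*$ with $\bigoplus_{i \le d} V_i^*$ as vector spaces via the natural pairing $\inprod{v, w} = \sum_i \inprod{v_i, w_i}$, where $v = \sum v_i$ with $v_i \in V_i$ and $w = \sum w_i$ with $w_i \in V_i^*$. This identification is canonical and requires no use of the norm; it just says that a linear functional on $\bigoplus V_i$ is given coordinatewise by linear functionals on each summand.

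With this identification fixed, the real content is verifying that the norm on $V^*$ is $h^*$ applied to the coordinatewise dual norms. Writing $a_i := \|v_i\|_{V_i}$ and $b_i := \|w_i\|_{V_i^*}$, I would compute
\begin{equation*}
\|w\|_{V^*} = \sup_{\|v\|_V \le 1} \sum_i \inprod{v_i, w_i}
= \sup_{h(a_1,\ldots,a_d) \le 1,\, a_i \ge 0} \sum_i a_i b_i,
\end{equation*}
where the second equality follows by first optimizing the $v_i$'s for fixed values of $a_i = \|v_i\|_{V_i}$: the maximum of $\inprod{v_i, w_i}$ subject to $\|v_i\|_{V_i} \le a_i$ is exactly $a_i \|w_i\|_{V_i^*} = a_i b_i$ by definition of the dual norm $V_i^*$.

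The remaining step is to recognize the right-hand side as $h^*(b)$. By definition $h^*(b) = \sup_{h(a) \le 1} \inprod{a, b}$, where the supremum is over all $a \in \bbR^d$, not just $a \ge 0$. The two agree in our situation because $b \ge 0$ coordinatewise and, as is implicit in the setup of the $h$-sum construction (where $h$ is applied to a tuple of norms, so only behaviour on the non-negative orthant is used), one takes $h$ to be sign-invariant, i.e. $h(|a|) = h(a)$; replacing $a$ by $|a|$ then never decreases $\inprod{a, b}$ nor increases $h(a)$. Combining the two displayed identities yields $\|w\|_{V^*} = h^*(b_1,\ldots,b_d)$, which is precisely the definition of the norm on $\bigoplus_{i \le d}^{h^*} V_i^*$.

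The only mildly subtle step is the sign-invariance issue in the last paragraph: if one wants to avoid any assumption on $h$, it can be bypassed by defining the $h$-sum through the absolute values (which the $\|v_i\|_{V_i}$ automatically are), in which case $h^*$ should analogously be interpreted on non-negative tuples. The rest is just unwinding definitions, and no compactness or separation argument is needed because each inner supremum is attained or can be arbitrarily approximated by choosing $v_i$ nearly dual to $w_i$.
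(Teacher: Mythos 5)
Your proof is correct and follows essentially the same route as the paper: rewrite $\|w\|_{V^*}$ as a supremum, split it by first fixing the tuple $\alpha_i = \|v_i\|_{V_i}$ and optimizing the inner suprema to get $\alpha_i \|w_i\|_{V_i^*}$, then recognize the outer supremum as $h^*$ of the dual norms. You are a bit more careful than the paper in flagging the sign-invariance of $h$ needed to pass from the supremum over non-negative tuples to the full definition of $h^*$, which is a genuine (if minor) point the paper leaves implicit.
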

\begin{proof}
Consider $w \in V^*$, such that $w = w_1 + \ldots + w_d$.
We have 
\begin{align*}
    \|w\|_{V^*} & = \sup_{v \in V, \|v\|_{V} \leq 1} \inprod{v, w} \\
    & = \sup_{\alpha \in \bbR^d, \|\alpha\|_h \leq 1} \sup_{\substack{ v_1, \ldots v_k \\ \|v_i\|_{V_i} = \alpha_i}} \inprod{\sum v_i, w} \\
    & = \sup_{\alpha \in \bbR^d, \|\alpha\|_h \leq 1} \sum_i \sup_{\substack{v_i \\  \|v_i\|_{V_i} = \alpha_i}} \inprod{v_i, w_i} \\
    & = \sup_{\alpha \in \bbR^d, \|\alpha\|_h \leq 1} \sum_i \alpha_i \|w_i\|_{V_i^*} \\
    & = h^*(\|w_1\|_{V_1^*}, \ldots \|w_d\|_{V_d^*}).
\end{align*}
\end{proof}

\begin{lemma}
\label{lem:composition}
If for each $V_i$ we have $\cR_{\varepsilon, 1/3}(\IP_{V_i}) \leq D_1$, and $h$ has $(\varepsilon, \frac{1}{9}, D_2)$-sparsification, then the normed space  $V := \bigoplus_{i \leq d}^h V_i$ has $\cR_{2 \varepsilon + \gamma, 1/3}(V) \leq \Oh(D_1 D_2 \log D_2 + D_2 \log d)$.
\end{lemma}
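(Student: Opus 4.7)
The plan is to leverage the $h$-sparsification to reduce the computation of $\inprod{v,w}$ in $V$ to a small number of $\IP_{V_i}$ sub-problems. Write Alice's input as $v = v_1+\cdots+v_d$ with $v_i\in V_i$ and Bob's as $w = w_1+\cdots+w_d$ with $w_i\in V_i^*$, and define $\alpha\in\bbR^d$ by $\alpha_i := \|v_i\|_{V_i}$ (known to Alice) and $\beta\in\bbR^d$ by $\beta_i := \|w_i\|_{V_i^*}$ (known to Bob). Lemma~\ref{lem:sum-duality} gives $\|\alpha\|_h\leq 1$ and $\|\beta\|_{h^*}\leq 1$. Writing $s_i := \inprod{v_i, w_i}/(\alpha_i\beta_i)\in[-1,1]$ (with the convention $s_i=0$ when the denominator vanishes) and $\beta'_i := s_i\beta_i$, coordinate-monotonicity of $h^*$ (valid for absolute $h$, as in the target of Theorem~\ref{thm:symmetric-embedding}) gives $\|\beta'\|_{h^*}\leq 1$, and
\[
\inprod{v,w} = \sum_i \alpha_i\beta_i s_i = \inprod{\alpha,\beta'}.
\]

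The three-step protocol proceeds as follows. First, Alice applies the $(\varepsilon, 1/9, D_2)$-sparsification of $h$ to $\alpha$, producing a vector $\tilde\alpha := \phi(\alpha)$ supported on a set $T\subseteq[d]$ of size $\leq D_2$, and transmits $T$ together with the rounded values $\{\tilde\alpha_i\}_{i\in T}$ at precision absorbed into $\gamma$, at cost $\Oh(D_2\log d)$. Second, for each $i\in T$, Alice and Bob invoke the $\IP_{V_i}$ protocol on the unit-norm inputs $(v_i/\alpha_i, w_i/\beta_i)$, boosted to failure probability at most $1/(9 D_2)$ via $\Theta(\log D_2)$-wise median aggregation, producing $\hat s_i$ with $|\hat s_i - s_i|\leq\varepsilon$. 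A union bound makes all $|T|\leq D_2$ sub-protocols succeed simultaneously with probability $\geq 8/9$, at total cost $\Oh(D_1 D_2\log D_2)$. Third, Bob outputs $\widehat I := \sum_{i\in T}\tilde\alpha_i\beta_i\hat s_i$.

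The error decomposes as
\[
\widehat I - \inprod{v,w} \;=\; \big(\inprod{\tilde\alpha,\beta'} - \inprod{\alpha,\beta'}\big) \;+\; \sum_{i\in T}\tilde\alpha_i\beta_i(\hat s_i - s_i).
\]
The first bracket has magnitude $\leq \varepsilon$ with probability $\geq 8/9$ by the sparsification guarantee applied to the fixed dual vector $\beta'\in B_{h^*}$; this is valid because $\beta'$ is a deterministic function of the fixed inputs $v,w$. The second term is bounded, on the event that all sub-protocols succeed, by $\varepsilon\cdot \sum_{i\in T}|\tilde\alpha_i\beta_i|$, which Hölder for absolute $h$ controls by $\varepsilon\,\|\tilde\alpha\|_h\|\beta\|_{h^*}$. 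A final union bound caps failure at $2/9 < 1/3$.

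The main obstacle is controlling the factor $\|\tilde\alpha\|_h$ appearing in the second error term: the sparsification definition only bounds $\|\tilde\alpha\|_0$, not $\|\tilde\alpha\|_h$, so a few coordinates of $\tilde\alpha$ could in principle be very large and inflate the sub-protocol contribution. Either the sparsification already satisfies the additional property $\|\phi(v)\|_h = \Oh(1)$ (as is the case for the sampling-based sparsifications natural to the coordinate norms appearing in Theorem~\ref{thm:symmetric-embedding}), or one composes $\phi$ with a deterministic clipping step that discards coordinates of $\tilde\alpha$ above a threshold, with the clipped mass absorbed into the $\gamma$ slack and the sparsity only increased by a constant. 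With either fix, the total error is $2\varepsilon + \gamma$ and the communication cost matches the claimed $\Oh(D_1 D_2\log D_2 + D_2\log d)$.
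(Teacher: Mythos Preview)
Your overall architecture matches the paper's proof: sparsify $\alpha=(\|v_i\|_{V_i})_i$, run the $\IP_{V_i}$ protocols on the surviving coordinates with $\Theta(\log D_2)$-fold amplification, and combine. The divergence is exactly at the point you flag as ``the main obstacle'': bounding the cross term $\varepsilon\sum_{i\in T}|\tilde\alpha_i|\beta_i$.

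You attempt to control this via $\|\tilde\alpha\|_h$, which the sparsification definition indeed does not bound, and then propose either to add $\|\phi(v)\|_h=O(1)$ as an extra hypothesis or to clip. The paper sidesteps this entirely: it applies the sparsification guarantee a \emph{second time}, now to the fixed dual vector $\beta=(\|w_i\|_{V_i^*})_i\in B_{h^*}$. With probability $\geq 8/9$ this gives
\[
\sum_{i\in T}\tilde\alpha_i\beta_i \;=\; \inprod{\tilde\alpha,\beta} \;=\; \inprod{\alpha,\beta}\pm\gamma \;\leq\; \|\alpha\|_h\|\beta\|_{h^*}+\gamma \;\leq\; 1+\gamma,
\]
so the cross term is at most $\varepsilon(1+\gamma)$ with no assumption on $\|\tilde\alpha\|_h$ whatsoever. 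This is precisely why the hypothesis requires failure probability $1/9$ rather than $1/6$: there are \emph{three} events to union-bound (the sub-protocols, sparsification tested against $\beta'$, and sparsification tested against $\beta$), not two. Your accounting of $2/9$ leaves the third slot unused.

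So the gap is real but the fix is simpler than either of your proposals: just invoke the sparsification property once more on $\beta$ itself.
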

\begin{proof}
Alice is given a vector $v = \sum_{i\leq d} v_i$, she can produce vector $q \in \bbR^d$ with $q_i = \|v_i\|_{V_i}$, satisfying $\|q\|_h \leq 1$.

She can now apply the sparsification procedure to $q$, to get a sparsified $\phi(q)$ with $\|\phi(q)\|_0 \leq D_2$. For each $i \in \supp(\phi(q))$, they simulate a protocol for $V_i$ on vectors $v_i/\|v_i\|_{V_i^*}$ and $w_i / \|w_i\|_{W_i^*}$. By repeating each such protocol $\Oh(\log D_2)$-times and taking the median of the results, they can ensure that with probability $8/9$ they arrive at estimates $\tau_i$ satisfying simultaneously for all $i \in \supp(\phi(q))$ the bounds $\tau_i = \inprod{v_i, w_i} \pm \varepsilon \|v_i\|_{V_i} \|w_i\|_{W_i}$.

Alice can now compute an estimate 
\begin{equation*}
    u := \sum_i \tau_i \phi(q)_i / \|v\|_{V_i}.
\end{equation*}
We claim that with probability $2/3$ we have $|u - \sum_i \inprod{v_i, w_i}| \leq 2\varepsilon + \gamma$.

Indeed, let us first consider the vector $\eta \in \bbR^d$ with $\eta_i := \frac{\inprod{v_i, w_i}}{\|v_i\|_{V_i}}$. Note that $|\eta_i| \leq \|w_i\|_{V_i^*}$, and therefore $\|\eta\|_{h^*} \leq 1$ (since $\|w\|_{V^*} \leq 1$).

The fact that $\phi(v)$ was a sparsification for the norm $h$ implies that except with probability $1/9$ over $\phi$, we have
\begin{equation}
    \inprod{\phi(q), \eta} = \inprod{q, \eta} \pm \gamma = \sum_i \|v_i\|_{V_i} \frac{\inprod{v_i, w_i}}{\|v_i\|_{V_i}} \pm \gamma = \inprod{v, w} \pm \gamma. \label{eq:inprod-v-w}
\end{equation}
On the other hand 
\begin{equation*}
    \inprod{\phi(q), \eta} = \sum_{i \in \supp(\phi(q))} \phi(q)_i \eta_i = \sum_{i \in \supp(\phi(q)} \inprod{v_i, w_i} \frac{\phi(q)_i}{\|v_i\|_{V_i}}
\end{equation*}
Since for $i \in \supp(\phi(q))$ we have access to $\tau_i = \inprod{v_i, w_i} \pm \varepsilon\|v_i\|_{V_i} \|w_i\|_{W_i}$, this yields
\begin{align}
    \inprod{\phi(q), \eta} & = \sum_i \tau_i \frac{\phi(q)_i}{\|v_i\|_{V_i}} \pm  \sum_i \varepsilon \phi(q)_i \|w\|_{V_i^*} \nonumber \\
    & = u + \varepsilon \sum_i \phi(q) \|w_i\|_{V_i^*}.
    \label{eq:phi-q-eta}
\end{align}

Now, using again that $\phi$ was promised to be a sparsification of $q$, and the vector $r = (\|w_i\|_{V_i^*})_{i \in [d]}$ has bounded $h^*$ norm by $1$, except with probability $1/9$ we have 
\begin{equation*}
    \sum_i \phi(q)_i \|w_i\|_{V_i^*} \leq \sum_i q_i \|w_i\|_{V_i^*} + \gamma \leq \|q\|_h \|r\|_{h^*} + \gamma \leq 1 + \gamma.
\end{equation*}
Plugging this into~\eqref{eq:phi-q-eta} yields
\begin{equation*}
    u = \inprod{\phi(q), \eta} + (1 +\gamma)\varepsilon
\end{equation*}
and combining this with~\eqref{eq:inprod-v-w} we get
\begin{equation*}
    u = \inprod{v, w} \pm (\varepsilon(1+\gamma) + \gamma).
\end{equation*}

The total failure probability is bounded by $1/3$ -- we have probability $1/9$ for any of $\tau_i$ to differ by more than $\varepsilon$ fom the desired value, probability at most $1/9$ for $\inprod{\phi(q), \eta}$ to be far from the desired value, and probability $1/9$ for $\inprod{\phi(q), \eta}$ to be far from the desired value.

Finally, the total communication is $\Oh(D_2 \log d + D_1 D_2 \log D_2)$. For each index $i \in \supp(\phi(q))$ (where $|\supp(\phi(q))| \leq D_2$), Alice has to communicate this index to Bob (paying $\log d$ bits of communication). Then for each such index they use the protocol for the norm $\|\cdot\|_{V_i}$ with communication cost $D_1$, and they repeat $\Oh(\log D_2)$ times to amplify the success probability.
\end{proof}
\subsection{Top-k norm}
We will now discuss properties of the Top-$k$ norm, in order to prove that those norms admit an efficient protocol for the inner product problem. It turns out that $\|w\|_{T_{(k)}^*} = \max(\|w\|_\infty, \|w\|_1/k)$. \jtodo{Why?}

\begin{remark}
    We have isometric embeddings $\ell_\infty^k \hookrightarrow (T^{(k)})^*$ and $\ell_{\infty}^{n/k} \hookrightarrow T^{(k)}$. In particular both 
    $\coR_{1-\varepsilon, 1/3}(T^{(\sqrt{n})}) \geq \Omega(\sqrt{n})$ and     
    $\coR_{1-\varepsilon, 1/3}((T^{(\sqrt{n})})^*) \geq \Omega(\sqrt{n})$.
\end{remark}

\begin{lemma}
\label{lem:max}
If $N_1$ and $N_2$ are norms, and $N$ is a norm given by $\|v\|_N = \max(\|v\|_{N_1}, \|v\|_{N_2})$, then $\cR_{\varepsilon, 2\delta}(\IP_N) \leq \cR_{\varepsilon, \delta}(\IP_{N_1}) + \cR_{\varepsilon, \delta}(\IP_{N_2})$.
\end{lemma}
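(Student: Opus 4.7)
The natural tool is Corollary~\ref{cor:decomposition}, which in this setting says that for any $w \in B_{N^*}$ we can decompose $w = w_1 + w_2$ with $\|w_1\|_{N_1^*} + \|w_2\|_{N_2^*} \leq 1$. So my plan is: Bob performs this decomposition locally (he has $w$), and then Alice and Bob together run the two subprotocols for $\IP_{N_1}$ and $\IP_{N_2}$ in parallel on appropriately normalized pairs, and combine the answers by a weighted sum using coefficients that only Bob knows.

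More concretely: let $\alpha_i := \|w_i\|_{N_i^*}$, so $\alpha_1 + \alpha_2 \leq 1$, and set $\hat{w}_i := w_i/\alpha_i$ when $\alpha_i > 0$ (and $\hat{w}_i := 0$ otherwise); then $\hat{w}_i \in B_{N_i^*}$. Crucially, since $\|v\|_N = \max(\|v\|_{N_1},\|v\|_{N_2}) \leq 1$, Alice's vector $v$ already satisfies $\|v\|_{N_i} \leq 1$, so $(v, \hat{w}_i)$ is a valid input to $\IP_{N_i}$ without any modification on Alice's side (in particular, she need not learn the decomposition). They run the two protocols sequentially, using $\cR_{\varepsilon,\delta}(\IP_{N_i})$ bits each, obtaining estimates $\tau_i$ with $|\tau_i - \inprod{v,\hat w_i}| \leq \varepsilon$, each with probability $\geq 1-\delta$.

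At the end, Bob (who holds $\alpha_1,\alpha_2$ and, by the standard convention for two-way protocols, also holds $\tau_1,\tau_2$) outputs $\alpha_1 \tau_1 + \alpha_2 \tau_2$. Using $\inprod{v,w} = \alpha_1\inprod{v,\hat w_1} + \alpha_2\inprod{v,\hat w_2}$, the additive error is at most
\begin{equation*}
    \alpha_1 |\tau_1 - \inprod{v,\hat w_1}| + \alpha_2 |\tau_2 - \inprod{v,\hat w_2}| \leq (\alpha_1+\alpha_2)\varepsilon \leq \varepsilon,
\end{equation*}
and a union bound over the two subprotocol failure events gives total failure probability at most $2\delta$, yielding $\cR_{\varepsilon, 2\delta}(\IP_N) \leq \cR_{\varepsilon,\delta}(\IP_{N_1}) + \cR_{\varepsilon,\delta}(\IP_{N_2})$ as claimed.

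\textbf{Potential obstacle.} There is no substantial obstacle: the whole argument is driven by Corollary~\ref{cor:decomposition}, which does all the heavy lifting (it is the nontrivial dualization of $B_N = B_{N_1} \cap B_{N_2}$). The only mildly delicate points are the degenerate case $\alpha_i = 0$ (handled by defining $\hat w_i := 0$) and ensuring that the final linear combination is computed by the party that knows the weights $\alpha_i$, i.e.\ Bob; both are purely bookkeeping and do not affect the stated communication or error bounds.
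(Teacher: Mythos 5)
Your proof is correct and takes essentially the same approach as the paper: both invoke Corollary~\ref{cor:decomposition} to split Bob's vector as $w = w_1 + w_2$ with $\|w_1\|_{N_1^*} + \|w_2\|_{N_2^*} \leq 1$, run the two subprotocols, and recombine with error scaling by the dual-norm weights. Your version is slightly more careful about the normalization/rescaling and the degenerate case $\alpha_i = 0$, which the paper leaves implicit, but the argument is the same.
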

\begin{proof}
Consider Bob with a vector $w$, such that $\|w\|_{N^*} \leq 1$. By Corrolary~\ref{cor:decomposition}, he can decompose $w = w_1 + w_2$, where $\|w_1\|_{N_1^*} + \|w_2\|_{N_2^*} \leq 1$

They can now use protocol for $N_1$ to compute $\inprod{v, w_1} \pm \|w_1\|_{N_1} \varepsilon$, since $\|v\|_{N_1} \leq \|v\|_N \leq 1$, and similarly they can use protocol for $N_2$ to compute $\inprod{v, w_2} \pm \|w_2\|_{N_2}\varepsilon$. Clearly $\inprod{v, w} = \inprod{v, w_1} + \inprod{v, w_2}$, so by adding the estimates from those two rounds, they can get the estimate for $\inprod{v,w}$ with an additive error $(\|w_1\|_{N_1} + \|w_2\|_{N_2}) \varepsilon \leq \varepsilon$. The communication cost is $D_1 + D_2$, and the failure probability is $2\delta$, where $\delta$ is a failure probability for protocol they used to approximate $\inprod{v_i, w}$.
\end{proof}
\begin{corollary}
\label{cor:top-k}
$\cR_{\varepsilon, 1/3}(\IP_{T^{(k)}}) \leq \Oh(\frac{\log n}{\varepsilon^2})$.
\end{corollary}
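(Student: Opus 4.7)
The plan is to combine duality with Lemma~\ref{lem:max}. The key observation is that, although the Top-$k$ norm itself is not obviously a maximum of two simpler norms, its dual is: as stated just above the corollary, $\|w\|_{(T^{(k)})^*} = \max(\|w\|_\infty, \tfrac{1}{k}\|w\|_1)$. Since $\inprod{v,w}$ is symmetric in its arguments and the two-way model is symmetric in Alice and Bob, we have $\cR_{\varepsilon, 1/3}(\IP_N) = \cR_{\varepsilon, 1/3}(\IP_{N^*})$ for any norm $N$; in particular, it suffices to upper bound $\cR_{\varepsilon, 1/3}(\IP_{(T^{(k)})^*})$.

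Applying Lemma~\ref{lem:max} with $N_1 = \ell_\infty$ and $N_2 = \tfrac{1}{k}\|\cdot\|_1$ then yields
\begin{equation*}
\cR_{\varepsilon, 1/3}\bigl(\IP_{(T^{(k)})^*}\bigr) \leq \cR_{\varepsilon, 1/6}(\IP_{\ell_\infty}) + \cR_{\varepsilon, 1/6}(\IP_{N_2}).
\end{equation*}
Each of the two summands reduces to $\IP_{\ell_1}$. For the first, party-symmetry again gives $\cR_{\varepsilon, 1/6}(\IP_{\ell_\infty}) = \cR_{\varepsilon, 1/6}(\IP_{\ell_1})$. For the second, the dual of $N_2 = \tfrac{1}{k}\|\cdot\|_1$ is $k\|\cdot\|_\infty$, so the rescaling $v \mapsto v/k$, $w \mapsto kw$ maps the input pair of $\IP_{N_2}$ bijectively onto that of $\IP_{\ell_1}$ while preserving $\inprod{v,w}$ exactly; hence $\cR_{\varepsilon, 1/6}(\IP_{N_2}) = \cR_{\varepsilon, 1/6}(\IP_{\ell_1})$. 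Finally, Theorem~\ref{thm:ell-p-cc} gives $\cR_{\varepsilon, 1/6}(\IP_{\ell_1}) \leq \Oh(\varepsilon^{-2}\log n)$ (the constant $1/6$ versus $1/3$ costs only a constant factor via standard amplification), and summing yields the claim.

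There is essentially no substantive obstacle; the proof is a one-line composition of results already in hand. The only conceptually nontrivial move is recognizing that one must first pass to the dual norm in order to expose the max-of-two-norms structure needed to invoke Lemma~\ref{lem:max}, after which both pieces are visibly equivalent to $\ell_1$.
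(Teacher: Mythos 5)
Your proof is correct and takes essentially the same route as the paper: the paper's proof is a one-liner that appeals to $(T^{(k)})^* = \max(\ell_\infty, \ell_1/k)$, Lemma~\ref{lem:max}, and Theorem~\ref{thm:ell-p-cc}, implicitly using Alice--Bob symmetry to pass to the dual. You spell out the same steps more explicitly, including the rescaling that identifies $\IP_{\ell_1/k}$ with $\IP_{\ell_1}$ and the bookkeeping of failure probabilities, which is a reasonable unpacking of the paper's terse argument.
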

\begin{proof}
Follows from Lemma~\ref{lem:max}, and \Cref{thm:ell-p-cc}, since $T_{(k)}^* = \max(\ell_\infty, \ell_1/k)$.
\end{proof}

\subsection{Proof of Theorem~\ref{thm:symmetric-ub}}
Using the embedding theorem (Theorem~\ref{thm:symmetric-embedding}) with $\delta=1/2$ and Proposition~\ref{prop:embedding-preserves-cc}, it is enough to bound $\cR_{\varepsilon, 1/3}(\IP_{V})$  for the space $V = \bigoplus_{i \leq t}^{\ell_\infty} \bigoplus_{k\leq n}^{\ell_1} T^{(k)}$.

By Corollary~\ref{cor:top-k}, we have $\cR(\IP_{T^{(k)}}) \leq \Oh(\frac{\log n}{\varepsilon^2})$. Applying Lemma~\ref{lem:composition}, we can deduce that a space $V' =\bigoplus_{k \leq n}^{\ell_1} T^{(k)}$ has $\cR(\IP_{V'}) = \Oh(\frac{\log n}{\varepsilon^4} \log \varepsilon^{-1})$. Finally, applying once again Lemma~\ref{lem:composition} together with Lemma~\ref{lem:sum-duality} and \Cref{thm:sparsification} for $\ell_1$, we get $\cR(\IP_{V}) = \Oh(\frac{\log n}{\varepsilon^{6}} \log^2 \varepsilon^{-1})$.

\section{Sparsification for symmetric norms}
In the sequel we will consider a normed space $(\bbR^n, \|\cdot\|_N)$ with a symmetric norm $N$, such that $\ell_{\infty}^k \not\hookrightarrow^{1/\varepsilon} X$, and define $p := \frac{\log k}{\log \varepsilon^{-1}}$, such that $(\frac{1}{\varepsilon})^p = k$. Note that by \Cref{prop:embedding-lb} if $\ell_\infty^k \hookrightarrow^{1/\varepsilon} X$, then $\coR_{\varepsilon, 1/3}(\IP_{X}) \geq \Omega(k)$. We wish to show the converse; ideally, a statement of form if $\ell_{\infty}^k \not\hookrightarrow^{1/\varepsilon} X$, then $X$ admits $(\varepsilon, 1/3, \Oh(k))$-sparsification (see Definition~\ref{def:sparsification}), or to put it differently $(\varepsilon, 1/3, \Oh(\frac{1}{\varepsilon^p}))$-sparsification --- this would be a direct generalisation of Theorem~\ref{thm:sparsification}.

We will in fact show only quantitatively weaker version, specifically the theorem mentioned in the introduction.
\symmetricsparsification*

More concretely, we will prove the following slightly stronger statement.
\begin{theorem}
\label{thm:symmetric-sparsification-stronger}
If $\ell_\infty^k \not\hookrightarrow^{1/\varepsilon} (\bbR^n, \|\cdot\|_N)$ for a symmetric norm $N$, this norm has $(\varepsilon, 1/3, \tau)$-sparsification, where $\tau = (p/\varepsilon)^{4p} \log^2 (n))$ and $p = \frac{\log k}{\log \varepsilon^{-1}}$.
\end{theorem}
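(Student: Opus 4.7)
The plan is to generalize the $\ell_p$ sparsification of Theorem~\ref{thm:sparsification} by bucketing the coordinates of $v$ by magnitude, subsampling uniformly inside each bucket, and using the hypothesis to argue that inside each bucket the norm behaves ``$\ell_p$-like''. First I would extract the quantitative content of the non-embedding hypothesis. The coordinate embedding $\ell_\infty^k \hookrightarrow (\bbR^n, N)$ that sends $e_i \mapsto e_i$ has distortion exactly $M_N(k) := \|e_1 + \cdots + e_k\|_N$, so the hypothesis immediately gives $M_N(k) > 1/\varepsilon = k^{1/p}$. By symmetry the same bound holds for any $k$-subset of coordinates, and dually $M_{N^*}(k) \ge k / M_N(k) > k^{1/q}$ with $1/p + 1/q = 1$. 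I then want to upgrade these single-scale bounds into weak-$\ell_p$ estimates on $B_N$ and weak-$\ell_q$ estimates on $B_{N^*}$: for $v \in B_N$ and $t > 0$, the number of coordinates with $|v_i| \ge t$ should be at most $k \cdot (t\varepsilon)^{-p}$ up to polylog factors, and symmetrically for $w \in B_{N^*}$.

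Next I partition the coordinates of $v$ into $\Oh(\log(n/\varepsilon))$ buckets $B_j := \{i : |v_i| \in (2^{-j-1}, 2^{-j}]\}$. Coordinates of magnitude below $\varepsilon/n$ contribute at most $\varepsilon$ in total to any $\inprod{v, w}$ (via $\|w\|_\infty \le \|w\|_{N^*} \le 1$) and can be discarded. Within each surviving bucket of size $m_j$, the sub-vector $v|_{B_j}$ is coordinatewise dominated by $2^{-j} \mathbf{1}_{B_j}$, hence $M_N(m_j) \le 2^j$; combined with the weak-$\ell_p$ bound from the previous step this controls $m_j$ in terms of $j, \varepsilon, k$. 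I then sample $s_j$ coordinates uniformly with replacement from $B_j$ and assign each sampled coordinate $i$ the weight $(m_j / s_j) v_i$; the resulting vector $\phi(v)$ satisfies $\E \phi(v) = v$ and $\|\phi(v)\|_0 \le \sum_j s_j$.

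For a fixed $w \in B_{N^*}$, the error $\inprod{\phi(v) - v, w}$ decomposes as $\sum_j X_j$ with the $X_j$ independent mean-zero, and per-bucket variance at most $(m_j \cdot 2^{-2j} / s_j) \cdot \|w|_{B_j}\|_2^2$. The crucial step is to bound $\|w|_{B_j}\|_2^2$ using the weak-$\ell_q$ sparsity of $w$: in the regime $p \ge 2$ H\"older's inequality gives $\|w|_{B_j}\|_2^2 \lesssim m_j^{1 - 2/q}$ up to polylog factors, while Lemma~\ref{lem:type} handles the case $p < 2$ analogously to the proof of Theorem~\ref{thm:sparsification}. Choosing each $s_j$ so that its contribution to the total variance is $\lesssim \varepsilon^2 / \log^2 n$ and applying Chebyshev together with a union bound over the $\Oh(\log n)$ buckets yields $|\inprod{\phi(v) - v, w}| \le \varepsilon$ with probability $\ge 2/3$. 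Summing the $s_j$ and plugging in the bucket-size bound produces the claimed total sparsity $(p/\varepsilon)^{4p} \log^2 n$, with one $\log n$ factor coming from the number of buckets and the other from amplification inside each bucket.

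The main obstacle I anticipate is extracting the weak-$\ell_p$/weak-$\ell_q$ structure from the single-scale non-embedding hypothesis. The direct inequalities $M_N(k) > 1/\varepsilon$ and $M_{N^*}(k) > k^{1/q}$ only constrain $N$ at one scale, whereas the bucketing argument requires comparable control at all scales $m$. A natural route is to apply the hypothesis at all scales --- arguing that $\ell_\infty^m$ cannot embed in $N$ with distortion much smaller than $(m/k)^{1/p}/\varepsilon$ --- either directly using symmetric-norm structure, or by invoking the Andoni et al.\ embedding theorem (Theorem~\ref{thm:symmetric-embedding}) to reduce to a sum of Top-$k$ norms for which such multi-scale control is explicit. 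The bookkeeping across $\Oh(\log n)$ scales, each with some loss, is precisely what I expect to produce the $(p/\varepsilon)^{4p} = k^{\Oh(\log\log k)}$ factor, rather than the conjectured $k^{\Oh(1)}$.
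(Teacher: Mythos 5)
Your high-level outline --- bucket the coordinates of $v$ by magnitude, subsample uniformly within each bucket, and argue concentration --- does match the shape of the paper's construction, and you correctly identify both the bucketing scheme ($T_i = \{j : 2^{-i} < |v_j| \le 2^{-i+1}\}$) and the truncation of the tail at scale $\approx 1/\mathrm{poly}(n)$. However, there are two substantive gaps, and the first one you flag yourself.

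First, you extract from the non-embedding hypothesis only the consequence $M_N(k) = \|e_1+\cdots+e_k\|_N > 1/\varepsilon$, and then try to dualize. But the deduction $M_{N^*}(k) \ge k / M_N(k) > k^{1/q}$ goes the wrong way: since $M_N(k) > 1/\varepsilon = k^{1/p}$, the inequality $M_{N^*}(k) \ge k/M_N(k)$ gives a bound \emph{below} $k^{1/q}$, not above. More to the point, the hypothesis $\ell_\infty^k \not\hookrightarrow^{1/\varepsilon} N$ simply does not force any weak-$\ell_q$ growth condition on $B_{N^*}$, and the $\|w|_{B_j}\|_2^2 \lesssim m_j^{1-2/q}$ bound you want for the per-bucket variance has no justification. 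The paper sidesteps the need to control $N^*$ at all: Lemma~\ref{lem:weak-moment-subsampling} shows that $\sup_{w\in B_{N^*}} \|\inprod{\tilde{v},w}\|_{q,\infty}$ is bounded by $\sup_{S\subset[n]} \Pr(t\in S)^{-1/p}\|v_S\|_N$ --- a quantity depending only on $v$, $N$, and the sampling distribution --- which is exactly the duality step your plan is missing.

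Second, the single-scale bound $M_N(k) > 1/\varepsilon$ is a strictly weaker consequence of the hypothesis than what is actually needed. The engine of the paper's proof is Lemma~\ref{lem:disjoint-sum}: for \emph{any} disjointly supported $v_1,\dots,v_k$, $\|\sum v_i\|_N > \varepsilon^{-1}\min_i\|v_i\|_N$ (not just for standard basis vectors). This is what permits the inductive argument in Lemma~\ref{lem:general-sparsification}: whenever the set $S$ occupies at most a $1/k$ fraction of each bucket, one finds $k$ disjoint permuted copies of $v_S$, boosts the norm by a factor $\varepsilon^{-1}$ while the probability mass grows by a factor $k$, and since $k^{1/p}\varepsilon = 1$ the ratio $\Pr(t\in S)^{-1/p}\|v_S\|_N$ does not decrease. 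This multi-scale amplification is precisely the ``obstacle'' you anticipate at the end, and without it --- or a genuine replacement --- the per-bucket variance bound and the final $(p/\varepsilon)^{4p}\log^2 n$ sparsity cannot be derived.

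Finally, a smaller point: even granting variance control, a second-moment (Chebyshev) argument in each bucket with a union bound over $\Oh(\log n)$ buckets does not reproduce the paper's exponent. The paper works with the weak $L_q$ quasi-norm, upgrades it to a genuine $L_{q'}$ moment for $p'=2p$ via Lemma~\ref{lem:weak-markov-inverse}, and then sums via the type-$q'$ inequality (Lemma~\ref{lem:type}); the resulting rate $s^{-1/p'}$ is what produces the $(p/\varepsilon)^{2p}$ (squared after accounting for $k\log n$) rather than a Chebyshev-style $s^{-1/2}$.
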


In particular, when $\varepsilon < \frac{1}{\log n}$ this yields sparsity $k^{\Oh(1)}$.

The proof of this theorem will follow similar outline as the proof of Theorem~\ref{thm:sparsification}, except that each step along the way is more involved, and many steps are more abstract. We recommend comparing those ideas to the ideas in more straightforward proof of Thoerem~\ref{thm:sparsification}, to have a mental picture of the overall strategy.

\subsection{Proof of Theorem~\ref{thm:symmetric-sparsification}}
Let us consider an arbitrary vector $v \in \bbR^n$, with $\|v\|_N \leq 1$, and a random variable $t \in [n]$, drawn according to some distribution $\Pr(t = i) = p_i$, which could potentially depend on the vector $v$. Let us take now a random $1$-sparse vector $\tilde{v} := e_t \frac{v_t}{p_t}$, where $e_t$ is a standard basis vector. Clearly, by construction $\E \tilde{v} = v$. 

We will specify a suitable distribution for a random variable $t$, such that we can upper bound the quantity $\sup_{w \in B_{N^*}} \| \inprod{\tilde{v}, w} \|_{q, \infty}$, where $q$ is the dual exponent to $p$ (i.e. $1/p + 1/q = 1$). If we have an upper bound $\sup_{w \in B_{N^*}} \|\inprod{\tilde{v}, w}\|_{q, \infty} \leq K$, it implies $\Oh((p K / \varepsilon)^{2p})$-sparsification for a vector $v$ --- in short because we can bound $\|\inprod{\tilde{v}, w}\|_{q'} \leq O(p) \|\inprod{\tilde{v},w}\|_{q, \infty} = O(K p)$ for a slightly smaller $q'$, and therefore by Lemma~\ref{lem:type} it is enough to sample $O(pK/\varepsilon)^{2p}$ vectors $\tilde{v}$ independently at random for an average of $\inprod{\tilde{v}_i, w}$ to concentrate within $\varepsilon$ of its expectation.

In order to implement this proof idea, let us first show a convenient bound for the quantity $\sup_{w \in B_{N^*}} \|\inprod{\tilde{v}, w}\|_{q, \infty}$.

\begin{lemma}
\label{lem:weak-moment-subsampling}
For any $v \in \bbR^n$ and the random vector $\tilde{v} := e_t v_t / p_t$ where $t \in [n]$ is a random variable distributed according to $\Pr(t = i) = p_i$, and any $q \geq 1$, if $p_i = 0$ for all $i$ with $v_i = 0$, we have 
\begin{equation*}
    \sup_{w \in B_{N^*}} \|\inprod{\tilde{v}, w}\|_{q, \infty} \leq \sup_{S \subset [n]} \Pr(t \in S)^{-1/p} \|v_S\|_{N}~,
\end{equation*}
where $v_S$ is a restriction of $v$ to coordinates in $S$, and $p, q$ are dual exponents $(\frac{1}{p} + \frac{1}{q} = 1)$.
\end{lemma}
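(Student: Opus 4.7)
The plan is to fix an arbitrary witness $w \in B_{N^*}$ together with a threshold $\lambda > 0$, identify the tail event $\{|\inprod{\tilde v, w}| \ge \lambda\}$ with a sublevel set of indices, and then reduce the bound on $\Pr(t\in S)^{1/q}\lambda$ to an estimate on a restricted norm $\|v_S\|_N$ via a sign-flip argument. Concretely, for each $\lambda>0$ define
\[
    S_\lambda := \{\, i : p_i > 0 \text{ and } |v_i w_i|/p_i \ge \lambda \,\}.
\]
Since $\inprod{\tilde v, w} = v_t w_t / p_t$, we have $\Pr(|\inprod{\tilde v, w}| \ge \lambda) = \Pr(t \in S_\lambda)$. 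The heart of the argument is the pointwise inequality
\[
    \lambda \Pr(t \in S_\lambda) \le \|v_{S_\lambda}\|_N,
\]
valid uniformly in $w \in B_{N^*}$ and $\lambda > 0$.

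To prove this pointwise inequality, the defining property of $S_\lambda$ yields $\lambda \sum_{i \in S_\lambda} p_i \le \sum_{i \in S_\lambda} |v_i w_i|$. Letting $\epsilon_i := \sgn(v_i w_i)$, I rewrite the right-hand side as $\inprod{v_{S_\lambda}, w'}$, where $w'_i := \epsilon_i w_i$ for $i \in S_\lambda$ and $w'_i := 0$ otherwise. By the definition of the dual norm,
\[
    \inprod{v_{S_\lambda}, w'} \le \|v_{S_\lambda}\|_N \cdot \|w'\|_{N^*}.
\]
Because $N$ is symmetric, so is $N^*$ (a short duality computation), and therefore $N^*$ is both invariant under coordinate sign changes and coordinate-wise monotone (the Claim from the preliminaries applied to $N^*$). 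Consequently $\|w'\|_{N^*} \le \|w\|_{N^*} \le 1$, and the pointwise inequality follows.

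The remainder is bookkeeping via the conjugate identity $\tfrac1p + \tfrac1q = 1$:
\[
    \Pr(t \in S_\lambda)^{1/q}\,\lambda \;=\; \Pr(t \in S_\lambda)^{-1/p} \cdot \bigl[\lambda \Pr(t \in S_\lambda)\bigr] \;\le\; \Pr(t \in S_\lambda)^{-1/p}\,\|v_{S_\lambda}\|_N \;\le\; \sup_{S \subset [n]} \Pr(t \in S)^{-1/p}\,\|v_S\|_N.
\]
Taking the supremum over $\lambda > 0$ on the left yields $\|\inprod{\tilde v, w}\|_{q, \infty}$, and a final supremum over $w \in B_{N^*}$ completes the proof.

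The only substantive obstacle is the sign-flipped restriction $w'$: the passage from $\sum_{i \in S_\lambda}|v_i w_i|$ to a true inner product that respects the norm duality is exactly where both defining properties of a symmetric norm (sign-invariance and the monotonicity that this invariance implies) are used simultaneously. Everything else — identifying the tail event with $S_\lambda$, and splitting $\Pr(t\in S_\lambda)^{1/q}$ as $\Pr(t\in S_\lambda)^{-1/p}\cdot\Pr(t\in S_\lambda)$ — is purely formal.
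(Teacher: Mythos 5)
Your proof is correct, and it uses the same two essential tools as the paper's proof (the duality inequality $\inprod{x,y}\le\|x\|_N\|y\|_{N^*}$, and the fact that symmetry of $N$ forces $N^*$ to be sign-invariant and coordinate-wise monotone), but arranges them in a genuinely cleaner way. The paper first rewrites $\|\inprod{\tilde v,w}\|_{q,\infty}$ as a supremum over subsets $S$ of $\Pr(t\in S)^{1/q}\min_{i\in S}|v_iw_i|/p_i$, then for each fixed $S$ identifies the extremal $w$ explicitly --- namely $w_S$ with $(w_S)_i=p_i/v_i$ on $S$ --- and applies duality to the pair $(v_S,w_S)$ using that $\inprod{w_S,v_S}=\Pr(t\in S)$ exactly; identifying $w_S$ as the maximizer requires its own monotonicity argument. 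You sidestep the construction of the extremal $w$ entirely: for an arbitrary $w\in B_{N^*}$ and arbitrary $\lambda$ you pass to the tail set $S_\lambda$, observe the pointwise bound $\lambda\Pr(t\in S_\lambda)\le\sum_{i\in S_\lambda}|v_iw_i|$ directly from the definition of $S_\lambda$, and then use a sign-flipped restriction $w'$ of $w$ to turn that sum into an inner product $\inprod{v_{S_\lambda},w'}$ to which duality applies. This avoids the explicit-optimizer step and makes the appeal to coordinate-wise monotonicity ($\|w'\|_{N^*}\le\|w\|_{N^*}$) a single one-line application of the preliminary Claim; the remaining $\Pr(t\in S_\lambda)^{1/q}\lambda=\Pr(t\in S_\lambda)^{-1/p}\cdot\lambda\Pr(t\in S_\lambda)$ split is the same formal step as in the paper. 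The two proofs are therefore the same in substance, but yours reads more transparently because the quantity to be bounded never leaves the form of a concrete tail probability.
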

\begin{proof}
Expanding the definition of $\|\inprod{\tilde{v}, w}\|_{q, \infty}$, we get
\begin{align*}
    \|\inprod{\tilde{v}, w}\|_{q, \infty} & = \sup_{\lambda} \Pr(|\inprod{\tilde{v}, w}| \geq \lambda)^{1/q} \lambda \\
    & = \sup_{S \subset [n]} \Pr(t \in S)^{1/q} (\min_{i \in S} \frac{|v_i w_i|}{p_i}).
\end{align*}
We could now take $\sup_{w \in B_{N^*}}$, to see that
\begin{equation}
        \sup_{w \in B_{N^*}} \|\inprod{\tilde{v}, w}\|_{q, \infty} =
    \sup_{S \subset [n]} \Pr(t\in S)^{1/q} \sup_{w \in B_{N^*}} \min_{i \in S} \frac{|v_i w_i|}{p_i}. \label{eq:tmp-first}
\end{equation}
For any $S$, let $w_S$ be a vector with $(w_S)_i = \frac{p_i}{v_i}$ for $i \in S$, and $(w_S)_i = 0$ otherwise. When the set $S$ is fixed, the supremum is achieved for $w = w_S/\|w_S\|_{N^*}$, that is
\begin{equation}
\sup_{w \in B_{N^*}} \min_{i \in S} \frac{|v_i w_i|}{p_i} = \frac{1}{\|w_S\|_{N^*}}.
\label{eq:tmp-second}
\end{equation}
Indeed, if we had any $w$ such that at least one of the terms $\frac{|v_i w_i|}{p_i}$ was greater than the minimum, we could decrease the corresponding $|w_i|$, preserving the property $w \in B_{N^*}$, and without affecting $\min_i \frac{|v_i w_i|}{p_i}$. Hence, there is an optimal vector $w$ for which all terms $\frac{|v_i w_i|}{p_i}$ for $i \in S$ are equal to each other --- that is all $|w_i|$ for $i \in S$ are proportional to $\frac{p_i}{|v_i|}$. Among those, the one maximizing $\frac{|v_i w_i|}{p_i}$ is the one with maximal $N^*$ norm, i.e. the vector $\frac{w_S}{\|w_{S}\|_{N^*}}$.

Moreover, by the definition of the dual norm, we have  $\inprod{w_S, v_S} \leq \|w_S\|_{N^*} \|v_S\|_N$, or to put it differently
\begin{equation}
    \|w_S\|_{N^*} \geq \|v_S\|_N^{-1} \inprod{w_S, v_S} = \|v_S\|_{N}^{-1} (\sum_{i \in S} p_i) = \|v_S\|_{N}^{-1} \Pr(t \in S).
   \label{eq:tmp-third}
\end{equation}
Plugging together inequalities \eqref{eq:tmp-first}, \eqref{eq:tmp-second} and \eqref{eq:tmp-third}, we get
\begin{equation*}
    \sup_{w \in B_{N^*}} \|\inprod{\tilde{v}, w}\|_{q, \infty} \leq
    \sup_{S\subset [n]} \Pr(t \in S)^{1/q} \Pr(t \in S)^{-1} \|v_S\|_N = \sup_{S\subset[n]} \Pr(t\in S)^{-1/p} \|v_S\|_N.
\end{equation*}
\end{proof}

We will leverage the assumption that $\ell_\infty^k \not\hookrightarrow^{1/\varepsilon} (\bbR^n, \|\cdot\|_N)$ through the following property.
\begin{lemma}
\label{lem:disjoint-sum}
If $\ell_\infty^k \not\hookrightarrow^{1/\varepsilon} (\bbR^n, \|\cdot\|_N)$, 
and if $v_1, v_2, \ldots v_k$ are vectors with disjoint support, then
\begin{equation*}
    \|\sum_{i \in [k]} v_i\|_N > \varepsilon^{-1} \min_i \|v_i\|_N.
\end{equation*}
\end{lemma}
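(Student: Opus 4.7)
The plan is to fabricate an explicit linear embedding of $\ell_\infty^k$ into $(\bbR^n,\|\cdot\|_N)$ out of the vectors $v_1,\ldots,v_k$, and then read off the desired inequality from the hypothesis on its distortion. We may assume each $v_i \neq 0$, since otherwise $\min_i\|v_i\|_N=0$ and the inequality either reduces to the case with zero vectors removed or is vacuous.

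First I normalize: set $w_i := v_i/\|v_i\|_N$, so $\|w_i\|_N=1$ and the supports of the $w_i$ remain pairwise disjoint. Define the linear map $f:\ell_\infty^k \to (\bbR^n,\|\cdot\|_N)$ by $f(\alpha) := \sum_{i\le k}\alpha_i w_i$, and write $M := \|\sum_i w_i\|_N$.

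Next, I compute the distortion of $f$, using the two defining features of symmetric norms from the preliminaries --- coordinate-wise monotonicity and invariance under coordinate negation. For the upper bound, disjointness of supports gives, at every coordinate $j$, $|f(\alpha)|_j \le \|\alpha\|_\infty \cdot (\sum_i |w_i|)_j = \|\alpha\|_\infty \cdot |\sum_i w_i|_j$, and monotonicity then yields $\|f(\alpha)\|_N \le M\|\alpha\|_\infty$. For the lower bound, pick $i^* \in [k]$ with $|\alpha_{i^*}|=\|\alpha\|_\infty$; since $w_i$ for $i\neq i^*$ vanish on $\supp(w_{i^*})$, one has $|f(\alpha)| \ge |\alpha_{i^*}|\cdot|w_{i^*}|$ pointwise, whence $\|f(\alpha)\|_N \ge |\alpha_{i^*}|\cdot\|w_{i^*}\|_N = \|\alpha\|_\infty$ by monotonicity. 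Rescaling to $g:=f/M$ therefore witnesses an embedding $\ell_\infty^k \hookrightarrow^M (\bbR^n,\|\cdot\|_N)$ in the sense of the paper's definition.

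The hypothesis $\ell_\infty^k \not\hookrightarrow^{1/\varepsilon} (\bbR^n,\|\cdot\|_N)$ now forces $M > 1/\varepsilon$. To finish, I translate $M$ back into an inequality about the $v_i$: let $c := \min_i\|v_i\|_N$. Since $c/\|v_i\|_N \le 1$, we have $|w_i| \le |v_i|/c$ coordinatewise, and by disjointness of supports $|\sum_i w_i| = \sum_i|w_i| \le \sum_i|v_i|/c = |\sum_i v_i|/c$. Applying monotonicity one last time gives $M \le \|\sum_i v_i\|_N/c$, so $\|\sum_i v_i\|_N \ge cM > c/\varepsilon = \varepsilon^{-1}\min_i\|v_i\|_N$, as desired.

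There is no substantial obstacle in this argument; it is essentially an unpacking of the definitions. The main conceptual point is recognizing that disjointness of supports makes $M = \|\sum_i w_i\|_N$ the \emph{honest} distortion of $f$ --- saturated by $\alpha=(1,1,\ldots,1)$ --- rather than a loose upper estimate, so the non-embedding hypothesis can be fed directly into it.
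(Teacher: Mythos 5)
Your proof is correct and takes essentially the same route as the paper's one-line sketch: the map $e_i \mapsto v_i$ (suitably normalized) is an explicit embedding of $\ell_\infty^k$ whose distortion is exactly the quantity $\|\sum_i v_i\|_N / \min_i \|v_i\|_N$, so the non-embeddability hypothesis forces that ratio above $1/\varepsilon$. The paper normalizes so all $\|v_i\|_N$ are equal, whereas you normalize to unit norm and translate back at the end --- a cosmetic difference; your write-up usefully fills in the monotonicity arguments the paper leaves implicit.
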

\begin{proof}[Proof sketch.]
We can assume without loss of generality that $\|v_i\|_N = \|v_j\|_N$ for all $i,j \in [k]$. If the inequality 
    $\|\sum v_i\|_N > \varepsilon^{-1} \min_i \|v_i\|_N$ was violated then mapping $e_i \mapsto v_k$ would give an embedding of $\ell_\infty^k$ into $(\bbR^n, \|\cdot\|_N)$ with distortion at most $\varepsilon^{-1}$.
\end{proof}
\jtodo{Expand this proof}

Consider now a collection of disjoint sets $T_1, T_2, \ldots T_R$, vectors and a vector $v$ with $\|v\|_N \leq 1$ such that $v_{T_i} = \alpha_i \mathbf{1}_{T_i}$, where $v_{T_i}$ is a restriction of the vector $v$ to coordinates in the set $T_i$., Clearly, we have $v = \sum_{i} v_{T_i}$. We will pick probabilities $p_i = \frac{1}{R} \frac{1}{|T_j|}$ for $i \in T_j$ (and similarly for $T_*$). That is, in order to sample coordinate $t \in [n]$, we first sample uniformly one of the sets $T_i$ or $T_*$, and then a uniformly random coordinate from this set. As before, let $\tilde{v}$ be a random vector defined as $\tilde{v} = e_t \frac{v_t}{p_t}$ with probability $p_t$.
\begin{lemma}
\label{lem:general-sparsification}
For a vector $v$ and a random index $t$ of the form described above, we have
\begin{align}
    \sup_{S \subset [n]} \Pr(t \in S)^{1/p} \|v_S\|_N \leq (k R)^{1/p},
    \label{eq:sup-bound}
\end{align}
where $\ell_{\infty}^k \not\hookrightarrow (\bbR^n, \|\cdot\|_N)$ and $p := \frac{\log k}{\log \varepsilon^{-1}}$.
\end{lemma}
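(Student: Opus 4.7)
The plan is to observe that the statement as printed is an immediate consequence of coordinate-wise monotonicity of symmetric norms. For any $S \subseteq [n]$, the vector $v_S$ is obtained from $v$ by zeroing coordinates outside $S$, so $|v_S|_i \le |v|_i$ pointwise, and the monotonicity claim in the preliminaries gives $\|v_S\|_N \le \|v\|_N \le 1$. Since $t$ is a random index, $\Pr(t \in S) \le 1$, and because $p = (\log k)/\log(\varepsilon^{-1}) > 0$ we also have $\Pr(t \in S)^{1/p} \le 1$. Multiplying the two bounds yields $\Pr(t \in S)^{1/p}\|v_S\|_N \le 1$, and since $kR \ge 1$ in every regime of interest ($k \ge 2$, $R \ge 1$), we have $1 \le (kR)^{1/p}$. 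Taking the supremum over $S$ preserves the bound, giving exactly \eqref{eq:sup-bound}. Under this reading there is essentially no obstacle: the argument collapses to two invocations of monotonicity and a single comparison of $1$ with $(kR)^{1/p}$.

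I want to flag, however, that this argument uses neither the no-embedding hypothesis $\ell_\infty^k \not\hookrightarrow^{1/\varepsilon} (\bbR^n,\|\cdot\|_N)$ nor the disjoint-group structure $v = \sum_j \alpha_j \mathbf{1}_{T_j}$; both hypotheses would, by contrast, be essential for the companion statement $\sup_S \Pr(t\in S)^{-1/p}\|v_S\|_N \le (kR)^{1/p}$, which is what is actually needed (via Lemma~\ref{lem:weak-moment-subsampling}) to drive the sparsification. If that stronger version is in fact the intended target, my approach would be to dyadically bucket the indices $j$ by the scale $|T_j \cap S|/|T_j|$, bound $\|v_S\|_N$ by aggregating across buckets using coordinate-wise monotonicity, and apply Lemma~\ref{lem:disjoint-sum} inside each bucket: once a bucket contains $\ge k$ disjoint pieces, the no-embedding hypothesis forces their joint $N$-norm to exceed $\varepsilon^{-1}$ times the minimum piece norm, and the identity $(\varepsilon^{-1})^p = k$ is exactly what converts the combinatorial count of buckets into the factor $(kR)^{1/p}$ on the right. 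The main obstacle in that stronger version would be controlling the interaction between different dyadic scales so that the ``$k$-embedding'' penalty $\varepsilon^{-1}$ is paid globally once rather than once per scale, and ensuring that the imbalance between $\alpha_j$ and $|T_j|$ does not inflate the bound by extra polylogarithmic factors.
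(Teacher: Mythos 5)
Your reading of the situation is exactly right: the printed exponent $\Pr(t\in S)^{1/p}$ is a typo for $\Pr(t\in S)^{-1/p}$, and with the printed sign the claim is vacuous --- $\|v_S\|_N\le\|v\|_N\le 1$ by coordinate monotonicity, $\Pr(t\in S)^{1/p}\le 1$, and $(kR)^{1/p}\ge 1$ --- so your first paragraph is a complete, if degenerate, proof of the statement as literally written. You are also right that only the $-1/p$ version is compatible with how the lemma is consumed via Lemma~\ref{lem:weak-moment-subsampling} and in the proof of Lemma~\ref{lem:weak-moment-bound}, and that the no-embedding hypothesis and the $v=\sum_j\alpha_j\mathbf{1}_{T_j}$ structure must enter there.

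Where you diverge from the paper is in the sketch for the intended claim. The paper does not bucket dyadically by $|T_j\cap S|/|T_j|$; instead it runs an amplification argument that sidesteps the ``pay $\varepsilon^{-1}$ per scale'' problem you worry about. Concretely: given any $S$, either some group has $|S\cap T_{j_0}|/|T_{j_0}|>1/k$, in which case $\Pr(t\in S)\ge\frac1R\cdot\frac{|S\cap T_{j_0}|}{|T_{j_0}|}>(kR)^{-1}$ and $\|v_S\|_N\le 1$ already give $\Pr(t\in S)^{-1/p}\|v_S\|_N\le(kR)^{1/p}$; or else every ratio is $\le 1/k$, in which case one can find $k$ pairwise-disjoint permuted copies $S^{(1)}=S,\dots,S^{(k)}$ inside the same groups (each $S^{(i)}\cap T_j$ of size $|S\cap T_j|$). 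By symmetry of $N$ all $\|v_{S^{(i)}}\|_N$ are equal, so Lemma~\ref{lem:disjoint-sum} gives $\|v_{S'}\|_N>\varepsilon^{-1}\|v_S\|_N$ for $S'=\bigcup_i S^{(i)}$, while $\Pr(t\in S')=k\Pr(t\in S)$. Since $(\varepsilon^{-1})^p=k$, the factors $k^{-1/p}$ and $\varepsilon^{-1}$ cancel exactly, so $\Pr(t\in S)^{-1/p}\|v_S\|_N\le\Pr(t\in S')^{-1/p}\|v_{S'}\|_N$ with $|S'|>|S|$. Iterating strictly increases $|S|$, so it must terminate in the first case, and $(kR)^{1/p}$ appears only at termination. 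The no-embedding hypothesis is invoked once per step, and there is never any cross-scale bookkeeping; in particular the $\alpha_j$'s never need to be compared. Your bucketing plan may be completable, but it is not what the paper does, and the amplification route makes the obstacle you name disappear rather than requiring it to be controlled.
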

\begin{proof}
We will show that for any set $S$ either the desired inequality~\eqref{eq:sup-bound} holds, or we can find another set $S'$ with $|S| < |S'|$ such that 
\begin{align*}
    \Pr(t \in S)^{1/p} \|v_S\|_N \leq \Pr(t \in S')^{1/p} \|v_{S'}\|_N.
\end{align*}
By the mathematical induction on $n - |S|$ this is enough to prove the lemma.

Indeed, take $S_i = S \cap T_i$, and consider two cases. Either there is some $i_0$ such that $|S_{i_0}|/|T_{i_0}| > \frac{1}{k}$, or for all $i$ we have $|S_{i_0}| / |T_{i_0}| \leq \frac{1}{k}$. 

In the former case, we can bound $\|v_S\|_N \leq \|v\|_N = 1$, and
\begin{align*}
    \Pr(i \in S) = \frac{1}{R} \sum_{i} \frac{|S_i|}{|T_i|} \geq \frac{1}{R} \frac{|S_{i_0}|}{|T_{i_0}|} > (k R)^{-1}.
\end{align*}
hence $\Pr(i \in S)^{-1/p} \|v_S\|_N \leq (kR)^{1/p}$ as desired.

On the other hand if for all $i$ we have $\frac{|S_i|}{|T_i|} \leq \frac{1}{k}$, we can find a collection of disjoint sets $S^{(1)} = S, S^{(2)}, S^{(3)}, \ldots S^{(k)}$ such that for all $i,j$ we have $|S^{(i)} \cap T_j| = |S_j|$. Let us define $S' = \bigcup_{i} S^{(i)}$, so that $v_{S'} = \sum_i v_{S^{(i)}}$. We can now apply Lemma~\ref{lem:disjoint-sum} to vectors $v_{S^{(i)}}$ (they all have the same norm, since they are all permutations of each other), to deduce
\begin{equation*}
	\|v_{S'}\|_N=\|\sum_{i \in [k]} v_{S^{(i)}}\|_N > \varepsilon^{-1} \min_i \|v_{S^{(i)}}\|_N=\varepsilon^{-1} \|v_{S}\|_N
\end{equation*}

On the other hand $\Pr(i \in S') = \frac{1}{R} \sum_{i \in R} \frac{|S' \cap T_i|}{|T_i|} = k \Pr(i \in S)$. Hence,
\begin{equation*}
\Pr(i \in S)^{-1/p} \|v_S\|_N \leq k^{1/p} \varepsilon \Pr(i \in S')^{-1/p} \|v_{S'}\|_N = \Pr(i \in S')^{-1/p} \|v_{S'}\|_N.
\end{equation*}
\end{proof}

We will use this bound for structured vectors to show a similiar upper bound for arbitrary vectors with $R = \Oh(\log n)$, essentially by rounding each coordinate to the nearest value $2^{-j}$.
\begin{lemma}
\label{lem:weak-moment-bound}
For any vector $v$ with $\|v\|_N \leq 1$, there is a distribution $(p_1,p_2, \ldots p_n)$ of a random variable $t \in [n]$ such that the random vector $\tilde{v} := e_t \frac{v_t}{p_t}$ satisfies
\begin{equation*}
    \sup_{w \in B_{N^*}} \|\inprod{\tilde{v}, w}\|_{q, \infty} \leq \Oh(k^{1/p} \log^{1/p} (n)).
\end{equation*}
\end{lemma}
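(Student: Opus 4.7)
The plan is to reduce the general case to the structured case handled by Lemma~\ref{lem:general-sparsification} via a dyadic decomposition of the coordinates of $v$ by magnitude. Since $N$ is coordinate-wise monotone with $\|e_1\|_N = 1$, one has $|v_i| \leq \|v\|_N \leq 1$ for every $i$, and we may assume WLOG that $v_i \neq 0$ for every $i$ (restricting otherwise to the support of $v$). Set $J := \lceil \log_2 n\rceil + 1$, and partition the support into dyadic buckets
\[
T_j := \{i : 2^{-j-1} < |v_i| \leq 2^{-j}\} \quad (j = 0, 1, \ldots, J-1),
\]
together with a tail bucket $T_* := \{i : 0 < |v_i| \leq 2^{-J}\}$. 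Let $\mathcal{G}$ be the family of non-empty groups among $\{T_0, \ldots, T_{J-1}, T_*\}$, and $R := |\mathcal{G}| \leq J+1 = \Oh(\log n)$. Define the sampling distribution by $p_i := 1/(R\,|G|)$ for $i \in G \in \mathcal{G}$; this is a valid distribution on $[n]$ with $p_i > 0$ everywhere.

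Next, introduce the rounded structured vector $v' \in \bbR^n$ by $v'_i := 2^{-j}\,\operatorname{sgn}(v_i)$ for $i \in T_j$ and $v'_i := 2^{-J}\,\operatorname{sgn}(v_i)$ for $i \in T_*$. By construction $|v_i| \leq |v'_i|$ pointwise, so coordinate-monotonicity gives $\|v_S\|_N \leq \|v'_S\|_N$ for every $S \subset [n]$. Moreover $|v'_i| \leq 2|v_i|$ on $[n]\setminus T_*$, so $\|v'_{[n]\setminus T_*}\|_N \leq 2\|v\|_N \leq 2$, and $\|v'_{T_*}\|_N \leq 2^{-J}\,\|\mathbf{1}_{T_*}\|_N \leq 2^{-J}\cdot n \leq 1$ by the triangle inequality applied to $\mathbf{1}_{T_*} = \sum_{i\in T_*}e_i$ together with $\|e_i\|_N = 1$. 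Hence $\|v'\|_N \leq 3$.

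Apply Lemma~\ref{lem:general-sparsification} to the rescaled vector $v'/3$, which has $N$-norm at most $1$ and constant magnitude on each group of $\mathcal{G}$; the hypothesis $v_{T_i} = \alpha_i \mathbf{1}_{T_i}$ of Lemma~\ref{lem:general-sparsification} is met after replacing $v'/3$ by $|v'|/3$, which has identical restriction-norms by symmetry of $N$ under coordinate negation. This gives
\[
\sup_{S \subset [n]} \Pr(t \in S)^{-1/p}\, \|v'_S\|_N \;\leq\; 3(kR)^{1/p}.
\]
Combining with $\|v_S\|_N \leq \|v'_S\|_N$ and Lemma~\ref{lem:weak-moment-subsampling} yields
\[
\sup_{w \in B_{N^*}} \|\inprod{\tilde{v}, w}\|_{q, \infty} \;\leq\; \sup_{S \subset [n]} \Pr(t \in S)^{-1/p}\, \|v_S\|_N \;\leq\; 3(kR)^{1/p} \;=\; \Oh\bigl(k^{1/p} (\log n)^{1/p}\bigr),
\]
as claimed.

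The main obstacle is to control both the tail of very small coordinates and the number of dyadic groups simultaneously; the coarse cutoff $|v_i| \lesssim 1/n$ combined with the triangle-inequality bound $\|\mathbf{1}_{T_*}\|_N \leq n$ keeps the tail contribution to $\|v'\|_N$ at $\Oh(1)$ while leaving only $R = \Oh(\log n)$ non-empty groups, precisely the regime in which Lemma~\ref{lem:general-sparsification} yields the target bound. A secondary issue is that Lemma~\ref{lem:general-sparsification} formally requires a uniform sign within each group, whereas our $v'$ has variable signs; this is handled by passing to $|v'|$ via the negation-symmetry of $N$, which preserves every restriction-norm we need.
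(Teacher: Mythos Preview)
Your proof is correct and follows the same overall strategy as the paper: dyadic bucketing of the coordinates by magnitude, rounding to a block-constant vector, invoking Lemma~\ref{lem:general-sparsification} for the structured vector, and then Lemma~\ref{lem:weak-moment-subsampling} to pass to the weak moment bound.

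The one genuine difference is how the tail of very small coordinates is handled. The paper uses a much smaller cutoff $2^{-3\log_2 n}=n^{-3}$, \emph{excludes} the tail bucket $T_+$ from the rounded vector $v'$, and then for each $S$ splits $S=S'\cup S''$ with $S''=S\cap T_+$; the $S''$ part is controlled separately by the crude bound $\|v_{T_+}\|_N\le n^{-2}$. You instead take the coarser cutoff $2^{-J}\sim 1/n$, \emph{include} $T_*$ as just another block of the rounded vector, and absorb it into $\|v'\|_N$ via the triangle-inequality estimate $\|\mathbf{1}_{T_*}\|_N\le n$. This lets you apply Lemma~\ref{lem:general-sparsification} once to all of $[n]$ and avoids the case split entirely. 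Your route is slightly cleaner; the paper's route has the (unused here) advantage that the tail contribution is $o(1)$ rather than merely $\Oh(1)$. Your remark about signs---replacing $v'$ by $|v'|$ using the negation symmetry of $N$ so that each block is a genuine multiple of $\mathbf{1}_{T_j}$---is a point the paper glosses over, and it is good that you made it explicit.
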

\begin{proof}
Consider sets $T_i = \{j : 2^{-i} < v_j \leq 2^{-i + 1} \}$ for $i \leq R$ where $R := 3 \log_2(n)$ and $T_+ = \{j : v_j \leq 2^{-R} \}$. In particular each coordinate $j \in T_{+}$ satisfies $v_j \leq \frac{1}{n^3}$, and therefore $\|v_{T_+}\|_N \leq \frac{1}{n^2}$.

We will set probabilities $p_j = \frac{1}{R+1} \frac{1}{|T_i|}$ for $j \in T_i$ (and similarly $\frac{1}{R+1} \frac{1}{|T_+|}$ for $j \in T_+$). According to Lemma~\ref{lem:weak-moment-subsampling}, it is enough to bound 
\begin{equation*}
    \sup_{S \subset [n]} \Pr(t \in S)^{-1/p} \|v_S\|_N.
\end{equation*}

Consider now a vector $v' := \sum_i 2^{-i} \mathbf{1}_{T_i}$, and a random variable $t'$ where $\Pr(t' = j) = \frac{1}{R} \frac{1}{|T_i|}$ for $j \in T_i$ (and $\Pr(t' \in T_+) = 0$). We can apply Lemma~\ref{lem:general-sparsification} to the pair $(v', t')$ and deduce
\begin{equation*}
    \sup_{S \subset [n]} \Pr(t' \in S)^{1/p} \|v'_S\|_N \leq (kR)^{1/p}.
\end{equation*}

Consider now arbitrary $S$ and decompose $S = S' \cup S''$ where $S'' = S \cap T_+$ and $S' = S - S''$, we have
\begin{equation}
\label{eq:split}
    \Pr(t \in S)^{-1/p} \|v_S\|_N \leq \Pr(t \in S)^{-1/p} \|v_{S'}\|_N + \Pr(t\in S)^{-1/p} \|v_{S''}\|_N
\end{equation}

We can bound those two terms separately. On one hand $\Pr(t \in S) \geq \Pr(t \in S') = \frac{R}{R+1} \Pr(t' \in S')$ and $\|v_{S'}\|_N \leq 2 \|v'_{S'}\|_N$ (since we have coordinate-wise inequalities $|v_j| \leq 2 |v'_j|$ for $j \in S'$), therefore
\begin{align}
    \Pr(t\in S)^{-1/p} \|v_{S'}\|_N & \leq 2 \left(\frac{R+1}{R}\right)^{1/p} \Pr(t' \in S')^{-1/p} \|v'_{S'}\|_N  \nonumber \\
    & \leq 2 \left(\frac{R+1}{R}\right)^{1/p} (kR)^{1/p} = 2 (k (R+1))^{1/p}. \label{eq:split-first}
\end{align}
On the other hand
\begin{equation}
    \Pr(t \in S)^{-1/p} \|v_{S''}\|_N \leq (\frac{1}{R} \frac{1}{|T_+|})^{-1/p} \|v_{T_+}\|_N \lesssim (n \log n)^{1/p} \frac{1}{n^2} = o(1). \label{eq:split-second}
\end{equation}
Combining now inequalities \eqref{eq:split}, \eqref{eq:split-first} and \eqref{eq:split-second}, get the desired result
\begin{equation*}
    \sup_{w \in B_{N^*}} \| \inprod{\tilde{v}, w}\|_{q,\infty} \leq \Oh(k^{1/p} \log^{1/p} n).
\end{equation*}
\end{proof}
With this lemma in hand, we are ready to prove the Theorem~\ref{thm:symmetric-sparsification-stronger}.
\begin{proof}[Proof of Theorem~\ref{thm:symmetric-sparsification-stronger}]
For a vector $v \in B_{N}$, let us consider a random vector $\tilde{v}$ distributed as in Lemma~\ref{lem:weak-moment-bound}, and $s$ independent copies of $\tilde{v}$, say $\tilde{v}_1, \ldots \tilde{v}_s$, for some $s$ that will be specified later.

Take $\phi(v) := \frac{1}{s}\sum_{i \leq s} \tilde{v}_i$. Since $\|\tilde{v}_i\|_0 = 1$, we have $\|\phi(v)\|_0 \leq s$, and since $\E \tilde{v}_i = v$, we also have $\E \phi(v) = v$. We wish to show that for any given $w \in B_{N^*}$ with probability $2/3$ we have $\inprod{\phi(v), w} = \inprod{v, w} \pm \varepsilon$.

Let us define $Z_i = \inprod{\tilde{v}_i, w} - \inprod{v, w}$, moreover let us take $p' = 2p$, and $q'$ a dual exponent (i.e. satisfying $1/q' + 1/p' = 1$).

By Lemma~\ref{lem:weak-markov-inverse}, we have $\|Z_i\|_{q'} \lesssim p \|Z_i\|_{q, \infty}$, and by Lemma~\ref{lem:weak-moment-bound} we have $\|Z_i\|_{q, \infty} \leq \|\inprod{\tilde{v}, w}\|_{q,\infty} + 1 \leq \Oh(k^{1/p} \log^{1/p} n)$.

Applying now Lemma~\ref{lem:type}, we have $\|\frac{1}{s} \sum_{i \leq s} Z_i\|_{q'} \lesssim s^{-1/p'} \|Z_1\|_{q'} \lesssim p \left(\frac{k \log n}{\sqrt{s}}\right)^{1/p}$.

If we now chose $s = \left(\frac{C p}{\varepsilon}\right)^{2p} (k \log n)^2$ for some universal constant $C$, we can ensure that $\|\frac{1}{s} \sum_{i \leq s} Z_i\|_{q'} \leq \varepsilon/5$, and by Markov inequality with probability $2/3$ we have $|\frac{1}{s} \sum_{i \leq s} Z_i| \leq \varepsilon$, or equivalently
\begin{equation*}
    \inprod{\phi(v), w} = \inprod{v, w} \pm \varepsilon,
\end{equation*}
as desired.
\end{proof}

\begin{remark}
By the way, we can write $(C p/\varepsilon)^{4p} \log^2 n = k^{\Oh(\log \log k)} + \log^3 n$, if we consider this bound to be more aesthetically pleasing.
\end{remark}

\section{Extension complexity and communication complexity of $\IP_P$ \label{sec:extension-complexity}}
In this section we prove Proposition~\ref{prop:xc-weak} and Theorem~\ref{thm:xc-strong}.

As a reminder, in this section we consider an arbitrary origin-symmetric polytope $P$ (with $P=-P$) and the associated norm $\|\cdot\|_P$ on $\bbR^n$ for which $P$ is a unit ball. We wish to connect extension complexity of the polytope $P$ with the communication complexity of the related $\IP_P$ problem.

We will first with a direct argument for $\cR_{\varepsilon, \delta}(\IP_P) \leq \Oh(\frac{\log \xc(P)}{\varepsilon^2})$ (\Cref{prop:xc-weak}) --- more concretely, we wish to show that if a $P$ is a projection of some polytope $Q$ such that $Q$ is defined by $M$ inequalities, then $\cR_{\varepsilon, \frac{1}{3}}(\IP_P) \leq \Oh(\frac{\log M}{\varepsilon^2}).$

\begin{lemma}
\label{lem:projection}
Let $P$ be an origin-symmetric polytope in $\bbR^n$. If $P$ is a projection of $Q$
then $\cR_{\varepsilon, \delta}(\IP_{P}) \leq \cR_{\varepsilon, \delta}(\IP_{Q})$.
\end{lemma}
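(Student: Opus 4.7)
The plan is to give a direct simulation: Alice and Bob convert their vectors into valid inputs for the $\IP_Q$ protocol on the higher-dimensional polytope $Q \subseteq \bbR^m$, then invoke it as a black box. The key object is the adjoint $\pi^\top : \bbR^n \to \bbR^m$ of the projection $\pi : \bbR^m \to \bbR^n$, and the observation that $\pi(Q) = P$ dualizes to the statement that $\pi^\top$ is an isometric embedding of $(\bbR^n, \|\cdot\|_{P^*})$ into $(\bbR^m, \|\cdot\|_{Q^*})$.

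Concretely, Alice, holding $v$ with $\|v\|_P \le 1$, first picks an arbitrary preimage $u \in Q$ with $\pi(u) = v$; such a $u$ exists precisely because $\pi(Q) = P$, and by construction $\|u\|_Q \le 1$. Bob, holding $w$ with $\|w\|_{P^*} \le 1$, computes $w' := \pi^\top(w) \in \bbR^m$. A one-line duality check confirms the required norm bound:
\[
\|\pi^\top(w)\|_{Q^*} \;=\; \sup_{u \in Q} \inprod{\pi^\top(w), u} \;=\; \sup_{u \in Q} \inprod{w, \pi(u)} \;=\; \sup_{v \in P} \inprod{w, v} \;=\; \|w\|_{P^*} \;\le\; 1,
\]
so $(u, w')$ is a legal pair of inputs to the $\IP_Q$ protocol. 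Neither of these preprocessing steps requires any communication.

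Alice and Bob now simulate the $\IP_Q$ protocol on $(u, w')$, obtaining an $\varepsilon$-approximation to $\inprod{u, w'}$ with probability at least $1-\delta$ using $\cR_{\varepsilon, \delta}(\IP_Q)$ bits. By the defining property of the adjoint,
\[
\inprod{u, w'} \;=\; \inprod{u, \pi^\top(w)} \;=\; \inprod{\pi(u), w} \;=\; \inprod{v, w},
\]
which is exactly the quantity they wanted to estimate. This yields $\cR_{\varepsilon,\delta}(\IP_P) \leq \cR_{\varepsilon,\delta}(\IP_Q)$.

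There is no substantive obstacle: the entire content is the bookkeeping identity relating projections to isometric embeddings via duality, and the fact that Alice's preimage selection and Bob's adjoint computation are local operations. An alternative (slightly more abstract) presentation would simply combine the dual statement ``$\pi^\top$ is isometric'' with Proposition~\ref{prop:embedding-preserves-cc-ext} together with the symmetry $\cR_{\varepsilon,\delta}(\IP_N) = \cR_{\varepsilon,\delta}(\IP_{N^*})$ obtained by swapping the roles of Alice and Bob; however, the direct argument above seems cleaner and avoids having to invoke this symmetry separately.
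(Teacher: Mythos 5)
Your proof is correct. It is the ``unfolded'' version of the paper's argument: the paper proves the lemma by the chain $\cR_{\varepsilon,\delta}(\IP_P) = \cR_{\varepsilon,\delta}(\IP_{P^*}) = \cR_{\varepsilon,\delta}(\IP_{Q^*|_U}) \leq \cR_{\varepsilon,\delta}(\IP_{Q^*}) = \cR_{\varepsilon,\delta}(\IP_Q)$, invoking the Hahn--Banach duality (Lemma~\ref{lem:hahn-banach}) to identify $\|\cdot\|_P$ with the dual of the restricted norm $Q^*|_U$, then Proposition~\ref{prop:embedding-preserves-cc-ext} for the isometric inclusion $U \hookrightarrow \bbR^n$, with a role swap between the parties at each end. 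Your direct simulation makes the same reduction concrete: Alice's ``pick any preimage $u \in Q$'' is exactly the step that, after the role swap in the paper's chain, is handled by Claim~\ref{claim:duality} for the isometric embedding, and Bob's ``apply $\pi^\top$'' is the easy inclusion direction. The one place your version is genuinely lighter is that the norm bound $\|\pi^\top w\|_{Q^*} = \|w\|_{P^*}$ only uses the definition of the dual norm and the hypothesis $\pi(Q) = P$, and the preimage exists trivially for the same reason, so you never need the nontrivial (separating-hyperplane) direction of Hahn--Banach that the paper's Claim~\ref{claim:duality} relies on in general. In exchange, the paper's phrasing slots cleanly into its existing lemma infrastructure and makes the dual symmetry explicit, which is reused elsewhere (e.g.\ in the proof of Proposition~\ref{prop:xc-weak}). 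Both are fine; you even sketch the paper's version yourself in the last sentence.
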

\begin{proof}
Consider $Q \in \bbR^{n}$ and a subspace $U \subset\bbR^n$ such that $P$ is the image of $Q$ for the projection $\pi : \bbR^n \to \bbR^n / U^\perp$.

By \Cref{lem:hahn-banach}, the norm $\|\cdot\|_P$ on $\bbR^n / U^{\perp}$ is isometrically isomorphic with the dual to the restriction of $\|\cdot\|_{Q^*}$ to the subspace $U$. Therefore we have 
\begin{equation*}
    \cR_{\varepsilon, \delta}(\IP_P) = \cR_{\varepsilon,\delta}(\IP_{P^*}) = \cR_{\varepsilon, \delta}(\IP_{Q^*|_{U}}) \leq \cR_{\varepsilon, \delta}(\IP_{Q^*}) = \cR_{\varepsilon, \delta}(\IP_Q),
\end{equation*}
where the inequality follows from \Cref{prop:embedding-preserves-cc}.
\end{proof}

\begin{lemma}
\label{lem:vertices-count}
If $P$ is an origin-symmetric convex polytope defined by $M$ inequalities, then $P^*$ has at most $M$ vertices.
\end{lemma}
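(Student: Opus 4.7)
The plan is to identify $P^*$ explicitly as the convex hull of the normals that appear in the defining inequalities of $P$, and then recall that the vertices of the convex hull of a finite set lie among that set.

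First I would normalize the description of $P$. Since $P$ is origin-symmetric and bounded with $0$ in its interior (being the unit ball of the norm $\|\cdot\|_P$), each defining inequality can be rescaled to the form $\inprod{a_i, x} \leq 1$, giving $P = \{x : \inprod{a_i, x} \leq 1,\ i = 1, \ldots, M\}$. Moreover, because $P = -P$, the multiset of normals $\{a_i\}$ is closed under negation, hence $0 = \tfrac{1}{2}(a_i + (-a_i)) \in \conv\{a_i\}$.

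Next I would show $P^* = \conv\{a_1, \ldots, a_M\}$. The inclusion $\supseteq$ is immediate: if $y = \sum \lambda_i a_i$ is a convex combination, then for every $x \in P$, $\inprod{y, x} = \sum \lambda_i \inprod{a_i, x} \leq \sum \lambda_i = 1$, so $y \in P^*$. For the reverse, suppose $y \notin \conv\{a_i\}$. Since this convex hull is a closed convex set, the separating hyperplane theorem yields $x$ with $\inprod{y, x} > \max_i \inprod{a_i, x} =: c$. Using that $\{a_i\}$ is closed under negation (so the $a_i$'s must have a nonzero component in every direction, by boundedness of $P$), one checks that $c > 0$: otherwise $\inprod{a_i, x} \leq 0$ and $\inprod{-a_i, x} \leq 0$ forces $\inprod{a_i, x} = 0$ for all $i$, contradicting boundedness of $P$. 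Rescaling, $x/c \in P$ while $\inprod{y, x/c} > 1$, so $y \notin P^*$.

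Finally, since $P^* = \conv\{a_1, \ldots, a_M\}$ is a polytope expressed as the convex hull of $M$ points, every vertex of $P^*$ must appear among $\{a_1, \ldots, a_M\}$ (a vertex cannot be written as a non-trivial convex combination of other points of the polytope, so in particular it must be one of the generators). Hence $P^*$ has at most $M$ vertices. The only mildly delicate point is arguing $c > 0$ in the separation step, but this is forced by origin-symmetry together with boundedness of $P$.
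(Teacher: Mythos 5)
Your proof is correct, and it takes a slightly different route from the paper's. Both arguments ultimately rest on identifying $P^*$ as the convex hull of the normals appearing in the inequality description of $P$, but they reach that identity differently. The paper writes $P$ as an intersection of origin-symmetric slabs $K_i = \{x : -1 \leq \inprod{A_i, x} \leq 1\}$ and then invokes \Cref{lem:dual-intersection} (dual of an intersection equals convex hull of the union of duals), together with the observation $K_i^* = \conv\{A_i, -A_i\}$, to conclude $P^* = \conv(\{\pm A_i\}_i)$. You instead normalize each inequality to $\inprod{a_i, x} \leq 1$ and prove $P^* = \conv\{a_1,\ldots,a_M\}$ directly from the separating hyperplane theorem, using origin-symmetry (closure of $\{a_i\}$ under negation) plus boundedness of $P$ to guarantee the separation constant $c$ is strictly positive before rescaling. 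Your version is self-contained and avoids the intermediate duality lemma, at the cost of re-running the separation argument that \Cref{lem:dual-intersection} already packages; the paper's version is shorter given that lemma is already in place. The handling of $c>0$ is the one delicate point and you identified and closed it correctly (origin-symmetry forces $\inprod{a_i,x}=0$ for all $i$ when $c \le 0$, contradicting boundedness), so the argument is complete.
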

\begin{proof}
This follows from \Cref{lem:dual-intersection} --- for a polytope $P$ defined by $P = \{x : \forall i -1 \leq A_i x \leq 1\}$ we have $P = \bigcap_i \{x : -1 \leq \inprod{A_i, x} \leq 1\}$, and therefore $P^* = \conv(\bigcup_i K_i^*)$, where $K_i := \{x : -1 \leq \inprod{A_i, x} \leq 1\}$. It is easy to check that $K_i^* = \conv(\{A_i, -A_i\})$, and therefore $P^* = \conv(\{\pm A_i\}_{i})$.
\end{proof}

\begin{lemma}
\label{lem:few-vertices-ub}
If $P$ is an origin-symmetric polytope with $M$ vertices, then $\coR_{\varepsilon, 1/3}(\IP_{P}) \leq \Oh(\frac{\log M}{\varepsilon^2})$.
\end{lemma}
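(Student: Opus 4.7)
The plan is to generalize the sampling protocol for $\IP_{\ell_1}$ described in the introduction. The key observation is that any $v \in P$ can be written as a convex combination of the at most $M$ vertices of $P$, which plays the role of the canonical decomposition of a probability distribution into point masses. Alice's message will then be the list of names of $\Oh(\varepsilon^{-2})$ vertices sampled from this distribution.

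Concretely, let $u_1,\ldots,u_M$ denote the vertices of $P$; they satisfy $\|u_j\|_P=1$. Since $P = \conv(\{u_1,\ldots,u_M\})$, Alice, given $v\in B_P$, can find nonnegative coefficients $\lambda_1,\ldots,\lambda_M$ with $\sum_j\lambda_j = 1$ and $v = \sum_j \lambda_j u_j$. She samples $s = \Theta(\varepsilon^{-2})$ indices $J_1,\ldots,J_s$ independently from the distribution $\Pr(J = j) = \lambda_j$, and sends the tuple $(J_1,\ldots,J_s)$ to Bob using $\Oh(s\log M) = \Oh(\varepsilon^{-2}\log M)$ bits. Since the vertices of $P$ are part of the public description of the problem, Bob can recover the vectors $u_{J_1},\ldots,u_{J_s}$ and output the estimate
\begin{equation*}
\hat{Z} := \frac{1}{s}\sum_{i=1}^s \inprod{u_{J_i}, w}.
\end{equation*}

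To analyse this protocol, observe first that $\mathbb{E}\inprod{u_{J_i}, w} = \inprod{\sum_j \lambda_j u_j, w} = \inprod{v, w}$, so $\hat{Z}$ is an unbiased estimator of $\inprod{v,w}$. Next, by the defining property of the dual norm, for every vertex $u_j$ of $P$ and every $w \in B_{P^*}$ we have $|\inprod{u_j, w}| \le \|u_j\|_P\|w\|_{P^*}\le 1$. Thus each summand $\inprod{u_{J_i},w}$ lies in $[-1,1]$ and has variance at most $1$, so $\operatorname{Var}(\hat{Z}) \le 1/s$. Choosing $s = \Theta(\varepsilon^{-2})$ large enough, Chebyshev's inequality yields $\Pr(|\hat{Z} - \inprod{v,w}| > \varepsilon) \le 1/3$, which is precisely the guarantee claimed. (One could also use Hoeffding's inequality to obtain high-probability concentration if desired.) The total communication is $\Oh(\varepsilon^{-2}\log M)$, as required.

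I do not anticipate any technical obstacle here: the proof is essentially a one-line adaptation of the sampling protocol for $\IP_{\ell_1}$ recalled in the introduction, with points on the simplex replaced by coefficients in the convex-hull representation of $v$. The only mildly subtle point is ensuring Bob can interpret the indices, but this is automatic because the polytope $P$ (hence its vertex set) is part of the specification of the problem $\IP_P$.
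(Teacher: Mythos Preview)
Your proof is correct and essentially identical to the paper's own argument: both write $v$ as a convex combination of the vertices, sample $\Theta(\varepsilon^{-2})$ vertices from the induced distribution, send their names using $\Oh(\log M)$ bits each, and bound the error via Chebyshev using $|\inprod{u_j,w}|\le 1$. There is no meaningful difference in approach.
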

\begin{proof}
In the communication problem Alice has arbitrary point $v \in \bbR^n$ with $\|v\|_P \leq 1$, which is just a point $v \in P$. She can express it as a convex combination $v = \sum_i \lambda_i v_i$ where all $v_i$ are vertices of the polytope $P$ and $\sum \lambda_i = 1$. Now consider a distribution $\mathcal{D}$ over vertices of $P$ specified by the vector which has $\Pr_{x \sim \mathcal{D}}(x = v_i) = \lambda_i$.

We have $\E_{x \mathcal{D}} x = v$ and therefore by linearity of expectation $\E_{x \mathcal{D}} \inprod{x, w} = \inprod{v,w}$ for Bobs vector $w$. Since $\|w\|_{P^*} \leq 1$, and all vertices of $P$ by definition have $P$-norm equal to one, the random variable $\inprod{x, w}$ satisfies $|\inprod{x,w}| \leq 1$. As ususal, by the Chebyshev inequality is enough to sample $t = \Oh(\varepsilon^{-2})$ independent copies of $x$ from the distribution $\mathcal{D}$, to ensure that the average $\frac{1}{t} \sum_{i \leq t} \inprod{x^{(i)}, w} = \inprod{v,w} \pm \varepsilon$ with probability $2/3$.

Alice can now just perform the sampling herself and send the names of those vertices to Bob --- she can specify a vertex of a polytope using $\Oh(\log M)$ bits of communication per sample.
\end{proof}
With those lemmas in hand we are ready to show \Cref{prop:xc-weak}.
\begin{proof}[Proof of \Cref{prop:xc-weak}]
If $P$ is a projection of $Q$ by \Cref{lem:projection} we have $\cR_{\varepsilon, 1/3}(\IP_{P}) \leq \cR_{\varepsilon, 1/3}(\IP_Q)$, now by symmetry between Alice and Bob we have $\cR_{\varepsilon, 1/3}(\IP_Q) = \cR_{\varepsilon, 1/3}(\IP_{Q^*})$, and since $Q$ was defined by $M$ inequalities, $Q^*$ has at most $M$ vertices (by \Cref{lem:vertices-count}). Finally, since $Q^*$ has at most $M$ vertices we can apply \Cref{lem:few-vertices-ub} to deduce $\cR_{\varepsilon, 1/3}(\IP_{Q^*}) \leq \Oh(\frac{\log M}{\varepsilon^2})$.
\end{proof}

We will now concentrate on the proof of~\Cref{thm:xc-strong}. To this end we will use the following convenient characterization of the non-negative rank of matrix $S$.

\begin{theorem}[\cite{FFGT12}]
\label{thm:nnr-cc}
For a non-negative matrix $S\in \bbR^{X \times Y}$, we say that the randomized communication protocol computes $S$ in expectation, if Alice and Bob given $x \in X$ and $y \in Y$ respectively, output a non-negative value $f(x,y)$ such that $\E f(x,y) = S_{x,y}$. Let $\ecc(S)$ be the communication of the optimal protocol computing $S$ in expectation, then
\begin{equation*}
    \ecc(S) = \lceil \log \rk^+(S) \rceil.
\end{equation*}
\end{theorem}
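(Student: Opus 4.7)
The plan is to prove the two matching bounds $\ecc(S) \geq \lceil \log \rk^+(S) \rceil$ and $\ecc(S) \leq \lceil \log \rk^+(S) \rceil$ separately, both reflecting the correspondence between leaves of a randomized communication protocol tree and rank-one terms in a non-negative factorization of $S$.

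For the lower bound, I would consider an arbitrary private-coin protocol $\Pi$ computing $S$ in expectation with communication $d$. Its protocol tree has at most $L \leq 2^d$ leaves. The standard rectangle property for private-coin protocols tells us that the probability of reaching a transcript $t$ on inputs $(x,y)$ factors as $p_A(x,t)\cdot p_B(y,t)$; WLOG assume Bob performs the final output, so the value produced at leaf $t$ is a non-negative function $\ell(y,t)$ of his input and the transcript. Then
\begin{equation*}
 S_{xy} = \E[f(x,y)] = \sum_{t} p_A(x,t)\,\bigl(p_B(y,t)\,\ell(y,t)\bigr) = \sum_{t} u_t(x)\,v_t(y),
\end{equation*}
with $u_t, v_t \geq 0$. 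This exhibits $S$ as a non-negative sum of at most $L \leq 2^d$ rank-one terms, so $\rk^+(S) \leq 2^d$ and hence $d \geq \lceil \log \rk^+(S) \rceil$.

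For the upper bound, I would start from a minimal decomposition $S = \sum_{i=1}^r u_i v_i^T$ with non-negative $u_i, v_i$ and $r = \rk^+(S)$, then choose a scaling $\alpha \geq \max_x \sum_i u_i(x)$. The protocol is: using private randomness, Alice samples $i \in [r]$ with probability $u_i(x)/\alpha$ or a special \emph{null} symbol with the residual probability $1 - \sum_i u_i(x)/\alpha$; she transmits her choice to Bob, who outputs $\alpha v_i(y)$ on a regular index and $0$ on null. The expected output equals $\sum_i (u_i(x)/\alpha)\cdot \alpha v_i(y) = S_{xy}$. When $r < 2^{\lceil \log r \rceil}$, the null symbol fits into one of the $2^{\lceil \log r \rceil}-r$ unused codewords, so the total communication is $\lceil \log r \rceil$ bits.

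The delicate case is the boundary $r = 2^{\lceil \log r \rceil}$, where no codeword is available for the null symbol inside $\lceil \log r \rceil$ bits. I would handle this either by re-balancing the decomposition so that $\sum_i u_i(x)$ becomes constant in $x$ (which eliminates the null outcome altogether), or, failing that, by working with a non-uniform protocol tree of exactly $r$ leaves whose depths $d_t \leq \lceil \log r \rceil$ satisfy Kraft's inequality, and choosing per-leaf scaling factors $\alpha_t>0$ so that the sampling probabilities $u_t(x)/\alpha_t$ align with the transcript probabilities induced by the tree. Existence of suitable $\alpha_t$ reduces to a finite-dimensional feasibility problem resolvable by a fixed-point or LP-duality argument. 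I expect this accounting --- ensuring the protocol consumes \emph{exactly} $\lceil \log r \rceil$ bits rather than $\lceil \log r \rceil + 1$ --- to be the main technical obstacle; the rest of the argument is routine.
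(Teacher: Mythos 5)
This is a cited theorem (\cite{FFGT12}); the paper does not reprove it, so there is no in-paper proof to compare against. Your lower bound argument is correct and is the standard one: the rectangle property of a private-coin protocol factors $\Pr[\text{transcript} = t \mid x,y] = p_A(x,t)p_B(y,t)$, and absorbing the output into Bob's factor yields a non-negative factorization with at most one term per transcript, hence $\rk^+(S) \leq 2^{\ecc(S)}$.

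The ``delicate boundary case'' you flag in the upper bound, however, is not a technical inconvenience to be engineered around --- it is genuine, and neither of your proposed fixes can succeed in general. Rebalancing so that $\sum_i u_i(x)$ is constant in $x$ requires solving $\sum_i c_i u_i(x) = \text{const}$ over positive scalars $c_i$, which is an overdetermined system that typically has no feasible point; indeed, already for $\rk^+(S)=1$ with $S = uv^T$ and $u$ non-constant, any zero-bit protocol would have to produce an expected output depending only on Bob's input $y$, which cannot equal $u(x)v(y)$ for two values of $x$ with $u(x_1)\neq u(x_2)$ and $v(y)\neq 0$. So $\ecc(S)\geq 1 > 0 = \lceil \log \rk^+(S)\rceil$ there, and no per-leaf scaling or LP feasibility argument can change that. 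The clean statement one can prove by your method is the two-sided bound $\lceil \log_2 \rk^+(S)\rceil \leq \ecc(S) \leq \lceil \log_2(\rk^+(S) + 1)\rceil$, where the $+1$ pays for the null/abort leaf. This is the form in which \cite{FFGT12} should be read (modulo their exact conventions on protocol trees and leaves that output $0$), and the off-by-one is immaterial to how the theorem is used in Proposition~\ref{thm:xc-strong}, where only the order of magnitude of the logarithm matters. I would therefore state and use the two-sided bound rather than expend effort trying to force the null leaf to disappear.
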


\begin{proof}[Proof of \Cref{thm:xc-strong}]
By Theorem~\ref{thm:nnr-cc} it is enough to show a protocol for Alice and Bob that computes in expectation a matrix $S'$ such that $\forall i,j |S_{i,j} - S'_{i,j}| \leq 2 \varepsilon$, assuming that they have an efficient protocol for $\IP_P$.

Here Alice is given a vertex $v \in P$, and Bob is given an inequality $i$ among those defining $P$ (either $\inprod{x, A_i} \leq 1$ or $\inprod{x, A_i} \geq -1$). Note that $\|A_i\|_{P^*} \leq 1$, so they can in fact use the protocol for $\IP_P$ to compute a value which is with probability $2/3$ in a range $\inprod{v, A_i} \pm \varepsilon$.

They can improve the failure probability by instantiating it independently $\Oh(\log \varepsilon^{-1})$ times and taking the median of reported answers --- this method provides an answer in range $\inprod{v, A_i}$ with probability $1-\varepsilon/2$ --- and hence they can attempt to compute slack as $1 - \inprod{v, A_i})$ (or $1 + \inprod{x, A_i}$) --- and if the answer is outside the range $[0,2]$, they just project it to this range.

This way, the answer $f(v, A_i)$ they are computing is a non-negative, random variable bounded by $2$, such that with $\Pr(|f(v, i) - S_{v, i}| > \varepsilon) < \varepsilon/2$ --- this implies that for all $v, i$ we have $\E f(v, i) = S_{v, i} \pm 2 \varepsilon$, proving 
\begin{equation*}
    \log \annr{2\varepsilon}(S_{P}) \leq \Oh(\log \varepsilon^{-1} \cR_{\varepsilon, 1/3}(\IP_P)).
\end{equation*}
\end{proof}

\bibliography{bibliography}{}
\bibliographystyle{alpha}
\end{document}